\numberwithin{equation}{section}
\theoremstyle{definition}
\newtheorem{theorem}{Theorem}[section]
\newtheorem{corollary}[theorem]{Corollary}
\newtheorem{proposition}[theorem]{Proposition}
\newtheorem{definition}[theorem]{Definition}
\newtheorem{example}[theorem]{Example}
\newtheorem{notation}[theorem]{Notation}
\newtheorem{remark}[theorem]{Remark}
\newtheorem{lemma}[theorem]{Lemma}
\newcommand\qbin[3]{\left[\begin{matrix} #1 \\ #2 \end{matrix} \right]_{#3}}
\newcommand{\numberset}{\mathbb}
\newcommand{\R}{\numberset{R}}
\newcommand{\F}{\numberset{F}}
\newcommand{\Tr}{\textnormal{Tr}}
\newcommand{\mV}{\mathcal{V}}
\newcommand{\mC}{\mathcal{C}}
\newcommand{\mE}{\mathcal{E}}
\newcommand{\mD}{\mathcal{D}}
\newcommand{\mA}{\mathcal{A}}
\newcommand{\mU}{\mathcal{U}}
\newcommand{\mW}{\mathcal{W}}
\newcommand{\mF}{\mathcal{F}}
\newcommand{\rk}{\textnormal{rk}}
\newcommand{\mB}{\mathcal{B}}
\newcommand{\col}{\operatorname{column-space}}
\newcommand{\mat}{\F_q^{n \times m}}
\renewcommand{\longrightarrow}{\to}
\newcommand{\maxrk}{\operatorname{maxrk}}
\newcommand{\inn}{\textnormal{in}}
\newcommand{\vect}{\textnormal{vect}}
\newcommand{\drk}{d}
\newcommand{\colsp}{\textnormal{column-space}}
\newcommand{\rowsp}{\textnormal{row-space}}
\newcommand{\Wrk}{W}
\newcommand\bbq[1]{\bm{b}_q(#1)}
\newcommand*{\myproofname}{Proof of the claim}
\title{\textbf{Rank-Metric Codes and Their Parameters}}
\author{Anina Gruica, Altan B. K\i l\i \c{c}, Alberto Ravagnani}
\date{}                    
\affil{Department of Mathematics and Computer Science \\ Eindhoven University of 
	Technology, the Netherlands}
\begin{document}

	\maketitle
	
	\thispagestyle{empty}
	
	\begin{abstract}
		We present the theory of linear rank-metric codes
		from the point of view of their fundamental parameters. These are: the minimum rank distance, the rank distribution, the maximum rank, the covering radius, and the field size.
		The focus of this chapter is on the interplay 
		among these parameters and on their significance for the code's (combinatorial) structure.
		The results covered in this chapter span from the 
		theory of optimal codes and anticodes to
		very recent developments on the asymptotic density of MRD codes.
	\end{abstract}

	\medskip
	
	\tableofcontents
	
	\medskip

\section*{Introduction}

A rank-metric code is a linear space of matrices
with a given size over a finite field $\F_q$,
in which the rank of any nonzero matrix is bounded from below by a given integer $d$, called the \textit{minimum distance} of the code. Rank-metric codes 
were first studied in connection with the theory of association schemes by Delsarte in 1978~\cite{delsarte1978bilinear}.
More recently, they were rediscovered in connection with various applications  in~\cite{roth1991maximum,cooperstein1997external,gabidulin1985theory}.

In 2008, rank-metric codes were proposed as a solution to the problem of error amplification in communication networks~\cite{silva2008rank}. Since then, they have been a thriving research topic 
within coding theory, electrical engineering,
and more generally discrete mathematics.
The interplay between the theory of rank-metric codes and other areas of pure and applied mathematics is at the roots
of major research themes within coding theory, with contributions intersecting semifield theory and finite geometry~\cite{sheekey2016new,csajbok2017maximum,zini2021scattered,gruica2022rank,de2016algebraic,gluesing2020sparseness,sheekey2020new,lunardon2017kernels,sheekey2019binary}, 
enumerative and algebraic combinatorics~\cite{delsarte1978bilinear,schmidt2015symmetric,ravagnani2018duality,schmidt2018hermitian,byrne2020partition,gruica2020common,delsarte1975alternating,britz2020wei,shiromoto2019codes,ghorpade2020polymatroid,jurrius2016defining,qpolyconn,gorla2020rank}, 
algebra~\cite{liebhold2016automorphism,el2021valued},
complexity theory~\cite{roth1996tensor,byrne2021tensor,byrne2019tensor,couvreur2020hardness}, and
cryptology (for a survey on this vast application area
we refer the reader to~\cite[Section~3]{bookappl}
and to the references therein).
Recently, rank-metric codes started to appear also in contemporary coding theory textbooks; see for instance~\cite{gorla2018codes,gorlach,gabidulin2021rank,bookappl}. We refer in particular to~\cite{bookappl} and to the references therein for the applied aspects of the theory, which we do not treat here.

The goal of this chapter is to offer an introduction to the \textit{mathematical theory} of rank-metric codes, with a focus on their fundamental parameters and on the significance of these for the code's combinatorial structure. In contrast with most references on the subject, in this chapter we give considerable space to the field size as a fundamental code parameter,
following also the emerging interest 
of the research community in the density properties of error-correcting codes.

The rest of the chapter is organized as follows: In Section~\ref{sec:1} we define rank-metric codes and their fundamental parameters. We devote Section~\ref{sec:2} to the MacWilliams Identities for the rank metric. Section~\ref{sec:3} focuses on the minimum distance of a rank-metric code, which is the key parameter for error correction.
In Section~\ref{sec:4} we study the maximum rank (weight) in a rank-metric code, which yields the notion of anticode in the rank metric. The covering radius and the external distance of a code are the subject of Section~\ref{sec:5}. We finally devote Section~\ref{sec:6} to the size of the underlying field of a rank-metric code and to its significance for density properties.

\section{Rank-Metric Codes}
\label{sec:1}
	
Throughout this chapter, $q$ denotes a prime power and $\F_q$ is the finite field of $q$ elements. Standard references on finite fields are \cite{lidl1997finite, mullen2013handbook}. We work with integers $m \ge n \ge 1$ and denote by $\smash{\mat}$ the space of $n \times m$ matrices over $\F_q$. We let ``$\le$'' denote the inclusion of vector spaces, to emphasize the linear structure. All the dimensions in this chapter are computed over $\F_q$, unless otherwise stated.
	
We start by defining the objects we will study throughout this chapter.
	
\begin{definition}
A (\textbf{rank-metric}) \textbf{code} is an
$\F_q$-linear subspace $\mC \le \mat$. The \textbf{dimension} of $\mC$ is its dimension as an $\F_q$-linear space.
\end{definition}

We measure the distance between matrices by computing the rank of their difference.

\begin{definition}
The \textbf{rank distance} between matrices $X,Y \in \mat$ is $\drk(X,Y) = \rk(X-Y)$. 
\end{definition}

\begin{remark}
One can check that the pair $(\mat,\drk)$ is a metric space, meaning that the rank distance map $\drk: \mat \times \mat \longrightarrow \R$ satisfies the following properties:
\begin{enumerate}
    \item \label{p1}$\drk(X,Y) = 0$ if and only if $X=Y$, for all $X,Y \in \mat$;
    \item \label{p2}$\drk(X,Y) = \drk(Y,X)$ for all $X,Y \in \mat$;
    \item \label{p3}$\drk(X,Z) \le \drk(X,Y)+\drk(Y,Z)$ for all $X,Y,Z \in \mat$.
\end{enumerate}
The latter property is called the \textit{triangular inequality}. In order to see why it holds, observe that for all $A,B \in \mat$ we have 
\begin{equation}\label{eq:col}
    \col(A+B) \le \col(A) + \col(B).
\end{equation}
Taking dimensions on both sides of~\eqref{eq:col} we obtain
\begin{align*}
    \rk(A+B) \le \rk(A)+\rk(B)-\dim(\col(A) \cap \col(B)) \le \rk(A) + \rk(B),
\end{align*}
where $\col(M)$ denotes the subspace of $\F_q^n$ generated by the columns of the matrix~$M$.  
Finally, by setting $A=X-Y$ and $B=Y-Z$ in the inequality above, we obtain the triangular inequality (Property~\ref{p3} above).
\end{remark}

\begin{example} \label{ex:rkcode}
Let $q=m=n=2$. Let $\mC \le \F_2^{2 \times 2}$ be the code generated by the two matrices
\begin{align*}
    \begin{pmatrix}
    0 & 1 \\
    1 & 1
    \end{pmatrix}, 
    \begin{pmatrix}
    1 & 1 \\
    1 & 0
    \end{pmatrix}.
\end{align*}
We have $\dim(\mC) = 2$ and 
\begin{align*}
    \mC = \left\{ \begin{pmatrix}
    0 & 0 \\
    0 & 0
    \end{pmatrix}, \;
    \begin{pmatrix}
    1 & 0 \\
    0 & 1
    \end{pmatrix}, \; 
    \begin{pmatrix}
    1 & 1 \\
    1 & 0
    \end{pmatrix}, \;
    \begin{pmatrix} 
    0 & 1 \\
    1 & 1
    \end{pmatrix}\right\}.
\end{align*}
\end{example}

The focus of this chapter is on some of the fundamental parameters of a rank-metric code: minimum distance, rank distribution, maximum rank, covering radius, and field size. The latter parameter is simply $q$ and it plays a crucial role 
in determining the proportion of codes having good distance properties; see Section~\ref{sec:6}. The other key parameters are defined as follows.

\begin{definition}
Let $\mC \le \mat$ be a rank-metric code. 
\begin{itemize}
    \item The \textbf{minimum} (\textbf{rank}) \textbf{distance} of $\mC$ is 
    $\drk(\mC)=\min\{\rk(X) \mid X \in \mC, \; X \neq 0\}$,
    where we artificially set
    $\drk(\{0\}) = n+1$ by definition.
    \item For $i \in \{0,1,\dots,n\}$, let
    $\Wrk_i(\mC) = |\{X \in \mC \mid \rk(X) = i\}|$ be the number of rank $i$ matrices contained in $\mC$. Then the tuple $(\Wrk_i(\mC) \mid i \in \{0,\dots,n\})$ is the \textbf{rank distribution} of $\mC$.
    \item The integer $\operatorname{maxrk}(\mC) = \max\{\rk(X) \mid X \in \mC\}$ is the \textbf{maximum rank} of $\mC$.
    \item The \textbf{covering radius} of $\mC$ is $$\rho(\mC) = \min\{i \mid \textnormal{ for all } X \in \mat \; \exists \, Y  \in \mC  \textnormal{ with } \drk(X,Y) \le i\}.$$ 
\end{itemize} 
\end{definition}

As we will often see throughout the chapter, duality plays an important role in the theory of rank-metric codes. We start by introducing a notion of scalar product between matrices.

\begin{definition}
The \textbf{trace product} between matrices $X,Y \in \mat$ is $\langle X,Y \rangle = \Tr(XY^T)$, where $\Tr(A) = \sum_{i=1}^n A_{ii} \in \F_q$ is the \textbf{trace} of a matrix $A \in \F_q^{n \times n}$.
\end{definition}

\begin{remark} \label{rem:nond}
For $X,Y \in \mat$ we have $\langle X,Y \rangle = \sum_{i,j=1}^n X_{ij}Y_{ij}$. For example,
\begin{align*}
    \left\langle \begin{pmatrix}
    1 & 0 \\ 0 & 1
    \end{pmatrix}, 
    \begin{pmatrix}
    1 & 1 \\ 1 & 1
    \end{pmatrix} \right\rangle = 1\cdot 1 + 0 \cdot 1 + 0 \cdot 1 + 1 \cdot 1=0.
\end{align*}

In particular, 
the map $\mat \times \mat \to \F_q$ defined by $(X,Y) \mapsto \langle X,Y \rangle$ is bilinear, symmetric and non-degenerate.
\end{remark}

\begin{definition}
The \textbf{dual} of a code $\mC \le \mat$ is $\mC^{\perp}=\{Y \in \mat \mid \langle X,Y\rangle = 0 \textnormal{ for all } X \in \mC \}$.
Note that $\mC^\perp$ is a code as 
well.
\end{definition}

The following result is an immediate consequence of Remark~\ref{rem:nond}.

\begin{proposition}
For a code $\mC \le \mat$ we have $\dim(\mC^{\perp}) = nm-\dim(\mC)$.
\end{proposition}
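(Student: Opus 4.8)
The plan is to exhibit $\mC^\perp$ as the kernel of a suitable linear map and then apply the rank--nullity theorem. Concretely, I would fix an $\F_q$-basis $X_1, \dots, X_k$ of $\mC$, where $k = \dim(\mC)$, and consider the $\F_q$-linear map
\[
\phi \colon \mat \to \F_q^k, \qquad \phi(Y) = \big(\langle X_1, Y\rangle, \dots, \langle X_k, Y\rangle\big).
\]
By bilinearity of the trace product (Remark~\ref{rem:nond}), a matrix $Y$ annihilates every $X_i$ if and only if it annihilates every element of $\mC$, since each $X \in \mC$ is an $\F_q$-linear combination of the $X_i$; hence $\ker(\phi) = \mC^\perp$. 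Rank--nullity then gives $nm = \dim(\mC^\perp) + \dim(\operatorname{im}\phi)$, so the whole statement reduces to showing that $\dim(\operatorname{im}\phi) = k$.

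The key step is to prove that $\phi$ is surjective, and this is exactly where the non-degeneracy from Remark~\ref{rem:nond} enters. Suppose $\operatorname{im}\phi$ were a proper subspace of $\F_q^k$; then there would be a nonzero vector $(a_1, \dots, a_k) \in \F_q^k$ orthogonal, for the standard dot product, to every element of the image, i.e.\ $\sum_{i=1}^k a_i \langle X_i, Y\rangle = 0$ for all $Y \in \mat$. By bilinearity this reads $\langle \sum_{i=1}^k a_i X_i, Y\rangle = 0$ for all $Y$, and non-degeneracy forces $\sum_{i=1}^k a_i X_i = 0$. Since the $X_i$ are linearly independent, all $a_i$ vanish, contradicting the choice of a nonzero vector. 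Hence $\phi$ is onto, $\dim(\operatorname{im}\phi) = k = \dim(\mC)$, and substituting back yields $\dim(\mC^\perp) = nm - \dim(\mC)$.

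I do not anticipate a genuine obstacle: once the problem is phrased via the map $\phi$, everything collapses to the single non-degeneracy argument above, which is the only place the specific structure of the trace product is used. The one point requiring a little care is the logical equivalence between ``$Y$ is orthogonal to a basis of $\mC$'' and ``$Y \in \mC^\perp$'', which is immediate from linearity in the first argument but should be stated explicitly so that the identification $\ker(\phi) = \mC^\perp$ is unambiguous. An alternative, essentially equivalent route would be to identify $\mat$ with $\F_q^{nm}$ by listing the matrix entries, under which the trace product becomes the standard inner product on $\F_q^{nm}$; the formula $\dim(\mC^\perp) = nm - \dim(\mC)$ is then the classical dimension identity for orthogonal complements with respect to a non-degenerate symmetric form.
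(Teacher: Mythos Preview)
Your proposal is correct. The paper does not actually give a proof of this proposition: it simply states that the result ``is an immediate consequence of Remark~\ref{rem:nond}'', i.e.\ of the fact that the trace product is bilinear, symmetric and non-degenerate. Your argument is precisely the standard unpacking of that claim, and your alternative route (identifying $\mat$ with $\F_q^{nm}$ so that $\langle X,Y\rangle = \sum_{i,j} X_{ij}Y_{ij}$ becomes the ordinary dot product, and invoking the classical dimension formula for orthogonal complements under a non-degenerate symmetric bilinear form) is exactly what the paper has in mind when it calls the result immediate. So there is no divergence in approach, only in the level of detail supplied.
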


\section{Rank Distribution and MacWilliams Identities}
\label{sec:2}

One of the best known results in coding theory, namely the \textit{MacWilliams identities}, establishes a relation between the (Hamming) weight distribution of a code and the weight distribution of its 
dual~\cite{macwilliams1963theorem}. This section is devoted to the rank-metric analogue of this result (in various formulations), first
discovered by Delsarte in~\cite{delsarte1978bilinear} using the theory of association schemes.
In this chapter, we follow the combinatorial approach proposed in
\cite{ravagnani2016rank,ravagnani2018duality}, which has been partially surveyed already in \cite{gorla2018codes}. The main new contribution of this section is showing the equivalence between the MacWilliams identities and their ``binomial moments'' version; see Remark~\ref{rem:equiv} below.
Note that we focus on $\F_q$-linear ``matrix'' codes. MacWilliams-type identities for the class of $\F_{q^m}$-linear ``vector'' rank-metric codes were established in~\cite{gadouleau2008macwilliams}.
The connection between the two MacWilliams-type identities is explained in~\cite[Remark~36]{ravagnani2016rank}.

\begin{definition} \label{def:gauss}
Let $a$ and $b$ be non-negative integers with $a \ge b$. The \textbf{$q$-ary coefficient} of $a$ and $b$
is the number of $b$-dimensional subspaces of any $a$-dimensional space over $\F_q$, denoted by
$$\qbin{a}{b}{q}= \prod_{i=0}^{b-1}\frac{q^a-q^i}{q^b-q^i},$$
where we put 
$$\qbin{a}{b}{q}=0 \quad \mbox{whenever $a<0$ or $b<0$}.$$
\end{definition}
The next lemma summarizes some well-known properties of the $q$-ary coefficient. A standard reference is \cite{stanley2011enumerative}.

\begin{lemma}
\label{lem:qb}
Let $a,b$ and $c$ be integers with $0 \le c\le b \le a$.
 \begin{enumerate}
\item We have $\qbin{a}{b}{q}=\qbin{a}{a-b}{q}$.
\item Let $U \le \F_q^a$ be a subspace of dimension $c$. The number of subspaces in $\F_q^a$ of dimension~$b$ that contain $U$ is $$\qbin{a-c}{b-c}{q}.$$ \label{qb1}
\item We have $\qbin{a}{b}{q}\qbin{b}{c}{q}=\qbin{a}{c}{q}\qbin{a-c}{a-b}{q}$. \label{qb2}
 \end{enumerate}
\end{lemma}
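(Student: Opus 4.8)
The plan is to prove the three statements in order, drawing on both the algebraic definition (the product formula) and the combinatorial one (counting subspaces), and bootstrapping the later parts from the earlier ones. For part~(1) I would first put the product formula into a manifestly symmetric form. Writing each numerator factor as $q^a - q^i = q^i(q^{a-i}-1)$ and each denominator factor as $q^b - q^i = q^i(q^{b-i}-1)$, the powers of $q$ cancel and one gets
\[
\qbin{a}{b}{q} = \prod_{i=0}^{b-1}\frac{q^{a-i}-1}{q^{b-i}-1} = \frac{\prod_{j=1}^{a}(q^j-1)}{\prod_{j=1}^{b}(q^j-1)\,\prod_{j=1}^{a-b}(q^j-1)},
\]
where the last equality follows after multiplying numerator and denominator by $\prod_{j=1}^{a-b}(q^j-1)$. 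This expression is invariant under $b \mapsto a-b$, which is exactly part~(1). (Alternatively, fixing a non-degenerate bilinear form on $\F_q^a$, the map $W \mapsto W^\perp$ is a bijection between the $b$-dimensional and the $(a-b)$-dimensional subspaces, giving a combinatorial proof.)

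For part~(2) I would pass to the quotient space $\F_q^a/U$, which has dimension $a-c$. By the correspondence theorem, the subspaces $W$ with $U \le W \le \F_q^a$ are in inclusion-preserving bijection with the subspaces of $\F_q^a/U$, the $b$-dimensional $W$ corresponding to the $(b-c)$-dimensional subspace $W/U$. Hence their number equals the number of $(b-c)$-dimensional subspaces of an $(a-c)$-dimensional space, which is $\qbin{a-c}{b-c}{q}$ by definition.

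For part~(3) I would count in two ways the pairs $(U,W)$ of subspaces of $\F_q^a$ with $U \le W$, $\dim U = c$ and $\dim W = b$. Choosing $W$ first yields $\qbin{a}{b}{q}$ options and then $\qbin{b}{c}{q}$ choices for $U \le W$; choosing $U$ first yields $\qbin{a}{c}{q}$ options and, by part~(2), $\qbin{a-c}{b-c}{q}$ choices for $W \supseteq U$. Equating the two counts gives the identity with $\qbin{a-c}{b-c}{q}$ in place of the stated $\qbin{a-c}{a-b}{q}$; these agree by part~(1), since $(a-c)-(b-c)=a-b$.

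None of the steps is deep, so the only real obstacle is bookkeeping: tracking the power-of-$q$ cancellation correctly in part~(1), and noticing in part~(3) that the double count naturally produces $\qbin{a-c}{b-c}{q}$, whose reconciliation with $\qbin{a-c}{a-b}{q}$ is precisely where part~(1) is needed. Should a purely algebraic proof of part~(3) be preferred, one can instead substitute the symmetric factorial form from part~(1) into both sides and verify that each reduces to $\prod_{j=1}^{a}(q^j-1)$ divided by $\prod_{j=1}^{c}(q^j-1)\,\prod_{j=1}^{b-c}(q^j-1)\,\prod_{j=1}^{a-b}(q^j-1)$.
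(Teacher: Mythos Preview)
Your proposal is correct in every part: the symmetric $q$-factorial rewriting in part~(1), the quotient argument for part~(2), and the double count (followed by an application of part~(1)) for part~(3) are all standard and complete. The paper itself does not prove this lemma at all; it merely cites Stanley's \emph{Enumerative Combinatorics} as a standard reference, so there is no proof in the paper to compare against---your argument simply fills in what the authors left to the literature.
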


It might be the case that computing the rank distribution of a code is very hard, while it is easy for its dual. In such cases, the rank distribution of the code can be computed using the rank-metric analogue of the celebrated \textit{MacWilliams identities}, which we state next.

\begin{theorem}[MacWilliams-type identities]
\label{macwill}
Let $\mC \le \mat$ be a rank-metric code and let $0 \le i \le n$ be an integer. We have
\begin{align*}
    \Wrk_i(\mC^{\perp}) = \frac{1}{|\mC|}\sum_{j=0}^n \Wrk_j(\mC) \sum_{u=0}^i (-1)^{i-u}q^{mu+\binom{i-u}{2}}\qbin{n-j}{u}{q} \qbin{n-u}{i-u}{q}.
\end{align*}
\end{theorem}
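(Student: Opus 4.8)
The plan is to pass from the rank distribution to a family of \emph{$q$-binomial moments} that behave simply under duality, and then to recover the theorem by Möbius inversion on the subspace lattice of $\F_q^n$. For a subspace $A \le \F_q^n$ write $\mM_A = \{X \in \mat \mid \col(X) \le A\}$, a space of dimension $m\dim(A)$, and for $0 \le k \le n$ set
\[
\overline{\Wrk}_k(\mC) = \sum_{\dim A = k} |\mC \cap \mM_A|.
\]
Counting the pairs $(X,A)$ with $X \in \mC$, $\dim(A) = k$, and $\col(X) \le A$, and grouping by $\rk(X)$, part~\eqref{qb1} of Lemma~\ref{lem:qb} gives
\[
\overline{\Wrk}_k(\mC) = \sum_{j=0}^n \Wrk_j(\mC)\,\qbin{n-j}{k-j}{q},
\]
so the moments are a triangular linear image of the rank distribution.

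The core is a duality lemma for the subcodes $\mC \cap \mM_A$. First I would verify that $\mM_A^\perp = \mM_{A^\perp}$ under the trace product, where $A^\perp \le \F_q^n$ is the orthogonal complement of $A$ for the standard inner product; this follows because $\langle X,Y\rangle$ decomposes as a sum of the column-wise inner products of $X$ and $Y$, together with a dimension count. Combining $(\mC + \mM_A)^\perp = \mC^\perp \cap \mM_{A^\perp}$ with the dimension formula for $\dim(\mC \cap \mM_A)$ and the identity $\dim(\mC^\perp) = nm - \dim(\mC)$ gives
\[
|\mC \cap \mM_A| = |\mC|\,q^{m(\dim A - n)}\,|\mC^\perp \cap \mM_{A^\perp}|.
\]
Summing over all $A$ of dimension $k$ and using that $A \mapsto A^\perp$ is a bijection onto the subspaces of dimension $n-k$, I obtain the moment duality
\[
\overline{\Wrk}_k(\mC) = |\mC|\,q^{m(k-n)}\,\overline{\Wrk}_{n-k}(\mC^\perp),
\]
equivalently $\overline{\Wrk}_k(\mC^\perp) = q^{km}|\mC|^{-1}\,\overline{\Wrk}_{n-k}(\mC)$ after using $|\mC^\perp| = q^{nm}/|\mC|$.

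It remains to invert the triangular map between moments and rank distribution. Writing $g(U)$ for the number of $X \in \mC^\perp$ with $\col(X) = U$, we have $|\mC^\perp \cap \mM_A| = \sum_{U \le A} g(U)$, so Möbius inversion on the subspace lattice of $\F_q^n$ (whose Möbius function on an interval of height $t$ equals $(-1)^t q^{\binom{t}{2}}$) recovers $g$, and summing over $\dim A = i$ yields
\[
\Wrk_i(\mC^\perp) = \sum_{k=0}^i (-1)^{i-k}\,q^{\binom{i-k}{2}}\,\qbin{n-k}{i-k}{q}\,\overline{\Wrk}_k(\mC^\perp).
\]
Substituting the moment duality and $\overline{\Wrk}_{n-k}(\mC) = \sum_j \Wrk_j(\mC)\qbin{n-j}{k}{q}$ (which comes from the moment formula and $\qbin{n-j}{(n-k)-j}{q} = \qbin{n-j}{k}{q}$), then absorbing $q^{km}$ into the exponent, produces exactly the stated double sum after renaming $k$ to $u$. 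The main obstacle I anticipate is the bookkeeping in this last inversion: one has to track the Möbius exponents so that they combine into $q^{mu+\binom{i-u}{2}}$ and confirm the $q$-binomial simplifications. The clean way to see that the exponents assemble correctly is the reflection $i \mapsto n-i$, which turns the moment-to-distribution relation into the standard upper-triangular $q$-Pascal transform, whose inverse carries precisely the Möbius coefficients $(-1)^{t}q^{\binom{t}{2}}$.
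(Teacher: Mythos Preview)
Your argument is correct and uses the same ingredients as the paper: the shortening duality $|\mC \cap \mM_A| = |\mC|\,q^{m(\dim A - n)}\,|\mC^\perp \cap \mM_{A^*}|$ (the paper's Proposition~\ref{prop:dual}), the double-counting identity for $\sum_{\dim A=k}|\mC(A)|$ (Lemma~\ref{lem:cu}), and M\"obius inversion on the subspace lattice. The only difference is organizational: the paper applies M\"obius inversion directly to $f(V)=|\{X\in\mC^\perp:\col(X)=V\}|$ and then invokes Proposition~\ref{prop:dual} inside the resulting sum, whereas you first package everything into the binomial moments $\overline{W}_k$, establish the clean moment duality $\overline{W}_k(\mC^\perp)=q^{km}|\mC|^{-1}\overline{W}_{n-k}(\mC)$ (which is precisely Theorem~\ref{thm:mcw}), and only then invert. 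The paper in fact records both routes and their equivalence in Remark~\ref{rem:equiv}, so your proposal is the ``Theorem~\ref{thm:mcw} $\Rightarrow$ Theorem~\ref{macwill}'' direction carried out explicitly.
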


The proof of Theorem~\ref{macwill} relies on the M\"obius inversion formula~\cite{stanley2011enumerative} and on the notion of \textit{shortened} rank-metric codes. The latter concept will be needed several times throughout this chapter.

\begin{definition}
Let $\mC \le \mat$ be a code and let $U \le \F_q^n$ be a subspace. The \textbf{$U$-shortening} of $\mC$ is the code $$\mC(U)=\{X \in \mC \mid \colsp(X) \le
     U\} \le \mC.$$
\end{definition}

The fact that $\mC(U)$ is a code directly follows from~\eqref{eq:col}.

\begin{notation}
In the sequel, we denote by $U^*$ the 
orthogonal of an $\F_q$-vector space 
$U \subseteq \F_q^n$ with respect to the standard inner product of 
$\F_q^n$. 
\end{notation}

\begin{lemma} \label{dimensioni}
 Let  $U \le \F_q^n$ be a subspace.  The following hold.
 \begin{enumerate}
\item $\dim
(\mat(U))=m \cdot \dim(U)$. \label{dimensioni1}
\item $\mat(U)^\perp = \mat(U^*)$. \label{dimensioni2}
 \end{enumerate}
\end{lemma}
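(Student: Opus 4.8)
The plan is to work with the column decomposition of matrices throughout. Identify $\mat$ with $(\F_q^n)^m$ by sending a matrix to the tuple of its columns; this is an $\F_q$-linear isomorphism. Under this identification, a matrix $X$ satisfies $\colsp(X) \le U$ precisely when each of its columns lies in $U$, so $\mat(U)$ corresponds to the $m$-fold product $U \times \cdots \times U$.

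For Part~\ref{dimensioni1}, this immediately gives $\dim(\mat(U)) = m \cdot \dim(U)$, since the dimension of an $m$-fold product is $m$ times the dimension of each factor.

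For Part~\ref{dimensioni2}, I would first record the key observation that the trace product splits column-wise. Writing $x_1,\dots,x_m$ and $y_1,\dots,y_m$ for the columns of $X$ and $Y$ respectively, Remark~\ref{rem:nond} gives
\[
    \langle X, Y \rangle = \sum_{i=1}^n\sum_{k=1}^m X_{ik}Y_{ik} = \sum_{k=1}^m \langle x_k, y_k\rangle,
\]
where on the right $\langle x_k, y_k\rangle$ denotes the standard inner product of $\F_q^n$. With this identity in hand, the inclusion $\mat(U^*) \subseteq \mat(U)^\perp$ is immediate: if every $x_k \in U$ and every $y_k \in U^*$, then each summand $\langle x_k, y_k\rangle$ vanishes, whence $\langle X,Y\rangle = 0$.

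To close the argument I would use a dimension count rather than proving the reverse inclusion by hand. Combining the Proposition on the dimension of a dual code with Part~\ref{dimensioni1}, and using $\dim(U^*) = n - \dim(U)$, we obtain
\[
    \dim\big(\mat(U)^\perp\big) = nm - m\dim(U) = m\big(n - \dim(U)\big) = m\dim(U^*) = \dim\big(\mat(U^*)\big).
\]
Since $\mat(U^*) \subseteq \mat(U)^\perp$ and the two spaces share the same dimension, they must be equal. The only step requiring any care is the column-wise splitting of the trace product, which is precisely what reduces the orthogonality condition in Part~\ref{dimensioni2} to the standard inner product on $\F_q^n$; everything else is bookkeeping. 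Should one prefer to avoid the dimension count, the inclusion $\mat(U)^\perp \subseteq \mat(U^*)$ can be established directly by testing a given $Y$ against matrices supported on a single column: for each index $j$ and each $u \in U$, the matrix with $j$-th column $u$ and all others zero lies in $\mat(U)$, forcing $\langle u, y_j\rangle = 0$ and hence $y_j \in U^*$.
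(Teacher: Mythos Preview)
Your proof is correct and follows essentially the same route as the paper: for Part~\ref{dimensioni1} you count columns in $U$, and for Part~\ref{dimensioni2} you verify the inclusion $\mat(U^*) \subseteq \mat(U)^\perp$ via the entrywise description of the trace product (Remark~\ref{rem:nond}) and then close by the dimension count using Part~\ref{dimensioni1}. Your write-up is simply more explicit about the column-wise splitting of $\langle X,Y\rangle$ and supplies an optional direct argument for the reverse inclusion, but the underlying strategy is identical.
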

\begin{proof}
\begin{enumerate}
 \item $\mat(U)$ is the space of matrices whose column space is contained in $U$. This is the same as saying that it is the space of matrices 
 whose columns belong to $U$. The latter set has cardinality $q^{m \cdot \dim(U)}$.
 
\item  The inclusion $\mat(U^*) \le \mat(U)^\perp$ follows from Remark~\ref{rem:nond}.
Equality then follows from a dimension argument, thanks to the previous part of the lemma. \qedhere
\end{enumerate}
\end{proof}

The following proposition can be seen as a rank-metric analogue of the duality between puncturing and shortening for codes in the Hamming metric. Even though it is quite simple to establish, several results of the duality theory of rank-metric codes are a consequence of this identity.

\begin{proposition} \label{prop:dual}
Let $\mC \le \mat$ be a code and let $U \le \F_q^n$. Set $u=\dim(U)$. We have 
$$|\mC(U)|= \frac{|\mC|}{q^{m(n-u)}}|\mC^{\perp}(U^*)|.$$
\end{proposition}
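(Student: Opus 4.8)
The plan is to convert the claimed size identity into a statement about $\F_q$-dimensions and then establish it by a short dimension count. The first observation is that the $U$-shortening can be rewritten as an intersection inside $\mat$: directly from the definitions, $\mC(U) = \mC \cap \mat(U)$, since $\mat(U)$ is precisely the space of matrices whose column space lies in $U$. In the same way, $\mC^{\perp}(U^*) = \mC^{\perp} \cap \mat(U^*)$. Since all the cardinalities appearing in the statement are powers of $q$ (they are sizes of $\F_q$-vector spaces), it suffices to prove the additive identity
$$\dim(\mC(U)) = \dim(\mC) - m(n-u) + \dim(\mC^{\perp}(U^*))$$
and then exponentiate base $q$.

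The key tool is the duality between sums and intersections with respect to the trace product, which is available because that product is non-degenerate (Remark~\ref{rem:nond}): for any two codes $A, B \le \mat$ one has $(A \cap B)^{\perp} = A^{\perp} + B^{\perp}$. Applying this with $A = \mC$ and $B = \mat(U)$, and using $\mat(U)^{\perp} = \mat(U^*)$ from Lemma~\ref{dimensioni}\eqref{dimensioni2}, I obtain $\mC(U)^{\perp} = \mC^{\perp} + \mat(U^*)$. Taking dimensions, combining $\dim(\mC(U)^{\perp}) = nm - \dim(\mC(U))$ with the Grassmann formula
$$\dim(\mC^{\perp} + \mat(U^*)) = \dim(\mC^{\perp}) + \dim(\mat(U^*)) - \dim(\mC^{\perp} \cap \mat(U^*)),$$
and inserting $\dim(\mat(U^*)) = m \cdot \dim(U^*) = m(n-u)$ from Lemma~\ref{dimensioni}\eqref{dimensioni1} together with $\dim(\mC^{\perp}) = nm - \dim(\mC)$, the $nm$ terms cancel and one is left exactly with the displayed dimension identity. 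Exponentiating yields $|\mC(U)| = |\mC|\, q^{-m(n-u)}\, |\mC^{\perp}(U^*)|$, as required.

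The computation is essentially routine, so there is no serious obstacle; the only points that need care are bookkeeping ones. First, one must keep the two orthogonality operations distinct: the trace-product dual $(\cdot)^{\perp}$ on $\mat$ versus the standard-inner-product complement $(\cdot)^*$ on $\F_q^n$, the two being linked precisely by Lemma~\ref{dimensioni}\eqref{dimensioni2}. Second, the sum–intersection duality must be invoked in the correct direction, and this is where non-degeneracy is genuinely used. An alternative route by double counting incidence pairs $(X, Y)$ with $X \in \mC(U)$ would also work, but the dimension argument above is the shortest and makes the role of Lemma~\ref{dimensioni} transparent.
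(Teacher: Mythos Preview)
Your proof is correct and follows essentially the same approach as the paper: both arguments hinge on the identity $\mC(U)^{\perp} = \mC^{\perp} + \mat(U^*)$ (obtained from sum--intersection duality for the non-degenerate trace form together with Lemma~\ref{dimensioni}\eqref{dimensioni2}), then apply the Grassmann dimension formula and Lemma~\ref{dimensioni}\eqref{dimensioni1}. The only difference is cosmetic: you work entirely with dimensions and exponentiate at the end, while the paper mixes cardinalities and dimensions throughout.
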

\begin{proof}
 We have  
 $$\mC(U)^\perp = (\mC \cap \mat(U))^\perp=\mathcal{\mC}^\perp
+\mat(U)^\perp=\mC^\perp + \mat(U^*),$$
where the last equality comes from Part~\ref{dimensioni2} of 
Lemma~\ref{dimensioni}. It follows
 \begin{equation} \label{eq2}
 q^{mn}=|\mC(U)| \cdot |\mC(U)^\perp|=|\mC(U)| \cdot |\mC^\perp +
\mat(U^*)|.
 \end{equation}
 On the other hand, Part~\ref{dimensioni1} of Lemma~\ref{dimensioni} gives
 $$\dim (\mC^\perp + \mat(U^*))=\dim (\mC^\perp)+m \cdot \dim
(U^*)-\dim  (\mC^\perp(U^*)).$$
As a consequence, 
  \begin{equation}\label{eq3}
|\mC^\perp + \mat(U^*)| = \frac{q^{nm} \cdot
q^{m(n-u)}}{|\mC| \cdot
|\mC^\perp(U^*)|}.
 \end{equation}
 Combining equations (\ref{eq2}) and (\ref{eq3}), one obtains the desired
result.
\end{proof}

\begin{proof}[Proof of Theorem \ref{macwill}]
    For all subspaces $V \le \F_q^n$ define
$$f(V)= |\{ M \in \mC^\perp \ | \ \mbox{colsp}(M) = V \}|, \ \ \ \ \ \ 
g(V)= \sum_{U \le V} f(U) = |\mC^\perp(V)|.$$
We now apply the M\"obius inversion formula in the lattice of subspaces of $\F_q^n$; see e.g.~\cite{stanley2011enumerative}. 
We obtain that for any $i \in \{0,...,n\}$ and for any $V \le \F_q^n$ 
of dimension $i$,  
\allowdisplaybreaks
\begin{eqnarray*}
f(V) &=& \sum_{u=0}^i {(-1)}^{i-u} q^{\binom{i-u}{2}}  \sum_{\substack{U
\le V \\ \dim(U)=u}} |\mC^\perp(U)| \\
&=& \sum_{u=0}^i {(-1)}^{i-u} q^{\binom{i-u}{2}}
\sum_{\substack{T
\le \F_q^n \\ T \ge V^* \\ \dim(T)=n-u}}
|\mC^\perp(T^*)| \\
&=& \frac{1}{|\mC|} \sum_{u=0}^i {(-1)}^{i-u} q^{mu+\binom{i-u}{2}}
\sum_{\substack{T
\le \F_q^n \\ T \ge V^* \\ \dim(T)=n-u}}
|\mC(T)|,
\end{eqnarray*}
where the last equality follows from Proposition 
\ref{prop:dual}. Next observe that
\allowdisplaybreaks
\begin{eqnarray} 
W_i(\mC^\perp) &=& \sum_{\substack{V
\le \F_q^n \\ \dim(V)=i}} f(V) \nonumber \\ &=& 
\frac{1}{|\mC|} \ \sum_{u=0}^i  {(-1)}^{i-u} q^{mu+\binom{i-u}{2}}  
  \sum_{\substack{V
\le \F_q^n \\ \dim(V)=i}} \sum_{\substack{T
\le \F_q^n \\ T \ge V^* \\ \dim(T)=n-u}}
|\mC(T)| \nonumber \\ &=&
\frac{1}{|\mC|} \ \sum_{u=0}^i  {(-1)}^{i-u} q^{mu+\binom{i-u}{2}}
\sum_{\substack{T
\le \F_q^n \\ \dim(T)=n-u}} 
\sum_{\substack{V
\le \F_q^n \\ V \ge T^* \\ \dim(V)=i}} |\mC(T)|  \nonumber  \\
&=& \frac{1}{|\mC|} \ \sum_{u=0}^i  {(-1)}^{i-u} q^{mu+\binom{i-u}{2}}
 \qbin{n-u}{i-u}{q}  \sum_{\substack{T
\le \F_q^n \\ \dim(T)=n-u}}  |\mC(T)|, \label{mw1}
\end{eqnarray}
where the last equality follows from Lemma~\ref{lem:qb} combined with the fact that $\dim(T^*)=u$.
Moreover,
\allowdisplaybreaks
\begin{eqnarray}
\sum_{\substack{T
\le \F_q^n \\ \dim(T)=n-u}}  |\mC(T)| &=& 
\sum_{\substack{T
\le \F_q^n \\ \dim(T)=n-u}} \sum_{j=0}^{n-u}
\sum_{\substack{S
\le T \\ \dim(S)=j}} |\{M \in \mC \ | \ \mbox{colsp}(M) = S\}| 
\nonumber \\
&=& \sum_{j=0}^{n-u} \sum_{\substack{S
\le \F_q^n \\ \dim(S)=j}} \sum_{\substack{T
\le \F_q^n \\ T \ge S \\ \dim(T)=n-u}} 
 |\{M \in \mC \ | \ \mbox{colsp}(M) = S\}|  \nonumber \\
 &=& \sum_{j=0}^{n-u} \qbin{n-j}{u}{q} W_j(\mC) = \sum_{j=0}^{n} \qbin{n-j}{u}{q} W_j(\mC). \label{mw2}
\end{eqnarray}
Combining equations (\ref{mw1}) and (\ref{mw2}), one obtains the desired result. \qedhere
\end{proof}

We will also need the following identity.

\begin{lemma}
\label{lem:cu}
Let $\mC \le \mat$ be a code and let $0 \le u \le n$ be an integer. Then
$$\sum_{\substack{U
\subseteq \F_q^n \\ \dim(U)=u}} |\mC(U)| = \sum_{i=0}^{n}  W_i(\mC)\qbin{n-i}{u-i}{q}$$
\end{lemma}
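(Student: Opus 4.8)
The plan is to prove the identity by a double-counting argument, rewriting the left-hand side by classifying each matrix $M \in \mC$ according to its column space. The key observation is that $|\mC(U)| = |\{M \in \mC \mid \colsp(M) \le U\}|$, so summing over all $u$-dimensional subspaces $U$ counts each matrix $M \in \mC$ once for every $u$-dimensional subspace containing its column space.

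Concretely, I would start from the definition and swap the order of summation:
\begin{align*}
\sum_{\substack{U \le \F_q^n \\ \dim(U)=u}} |\mC(U)| &= \sum_{\substack{U \le \F_q^n \\ \dim(U)=u}} |\{M \in \mC \mid \colsp(M) \le U\}| \\
&= \sum_{M \in \mC} |\{U \le \F_q^n \mid \dim(U)=u, \; \colsp(M) \le U\}|.
\end{align*}
The inner cardinality depends on $M$ only through $\dim(\colsp(M)) = \rk(M)$. By Part~\ref{qb1} of Lemma~\ref{lem:qb}, the number of $u$-dimensional subspaces of $\F_q^n$ containing a fixed subspace of dimension $i$ is exactly $\qbin{n-i}{u-i}{q}$.

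Therefore I would group the matrices $M \in \mC$ by their rank: there are $W_i(\mC)$ matrices of rank $i$, each contributing $\qbin{n-i}{u-i}{q}$, giving
\begin{align*}
\sum_{\substack{U \le \F_q^n \\ \dim(U)=u}} |\mC(U)| = \sum_{i=0}^{n} W_i(\mC) \qbin{n-i}{u-i}{q},
\end{align*}
which is the claimed identity. (The terms with $i > u$ vanish automatically, since $\qbin{n-i}{u-i}{q}=0$ when $u-i<0$, consistent with the convention in Definition~\ref{def:gauss}; this is why the sum may be written from $i=0$ to $n$.)

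I do not expect a genuine obstacle here, as the argument is a clean application of Lemma~\ref{lem:qb}\eqref{qb1}. The only point requiring minor care is the bookkeeping when interchanging the two sums and confirming that matrices with $\rk(M)=i > u$ contribute nothing, so that extending the outer index to $n$ is harmless; this is precisely handled by the vanishing convention for the $q$-ary coefficient.
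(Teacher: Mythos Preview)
Your proposal is correct and is exactly the double-counting argument the paper has in mind: the paper's proof simply says to count the pairs $S=\{(U,X)\mid \dim(U)=u,\ \colsp(X)\subseteq U\}$ in two ways and leaves the details to the reader, and you have supplied those details. Your invocation of Lemma~\ref{lem:qb}\eqref{qb1} and the vanishing convention for $i>u$ are precisely the points needed.
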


\begin{proof}
The result follows from counting the elements of
$$S=\{(U,X) \in \F_q^n \times \mC \mid \dim(U)=u,\ \mbox{colsp}(X) \subseteq U \}$$
in two different ways. The details are left to the reader.
\end{proof}

The following theorem can be interpreted as the implicit version of the MacWilliams-type identities and it is the main result of this section. 
\begin{theorem}
\label{thm:mcw}
Let $\mC \le \mat$ be a code and let $0 \le s \le n$ be an integer. We have
$$ \sum_{j=0}^{n-s}  W_j(\mC) \qbin{n-j}{s}{q} = \frac{|\mC|}{q^{ms}} \sum_{i=0}^{s}  W_i(\mC^\perp)\qbin{n-i}{s-i}{q}.$$
\end{theorem}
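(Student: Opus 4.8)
The plan is to derive the identity by combining Lemma~\ref{lem:cu} (applied to both $\mC$ and $\mC^\perp$) with the puncturing--shortening duality of Proposition~\ref{prop:dual}, so that the whole proof reduces to counting the quantity $\sum_{\dim(U)=n-s}|\mC(U)|$ in two different ways.

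First I would rewrite the left-hand side. Using the symmetry $\qbin{n-j}{s}{q}=\qbin{n-j}{n-s-j}{q}$ from Lemma~\ref{lem:qb}, and noting that $\qbin{n-j}{s}{q}=0$ as soon as $j>n-s$, the left-hand side equals $\sum_{j=0}^{n} W_j(\mC)\qbin{n-j}{(n-s)-j}{q}$. By Lemma~\ref{lem:cu} applied with $u=n-s$, this is exactly $\sum_{\dim(U)=n-s}|\mC(U)|$, the sum ranging over all subspaces $U\le\F_q^n$ of dimension $n-s$.

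Next I would apply Proposition~\ref{prop:dual} with $u=\dim(U)=n-s$, so that $n-u=s$ and $|\mC(U)|=\frac{|\mC|}{q^{ms}}|\mC^\perp(U^*)|$. Pulling the constant $|\mC|/q^{ms}$ out of the sum, the task becomes evaluating $\sum_{\dim(U)=n-s}|\mC^\perp(U^*)|$. Since $U\mapsto U^*$ is a bijection between the subspaces of dimension $n-s$ and those of dimension $s$ in $\F_q^n$, this reindexes as $\sum_{\dim(V)=s}|\mC^\perp(V)|$. Applying Lemma~\ref{lem:cu} a second time, now to the code $\mC^\perp$ with $u=s$, produces $\sum_{i=0}^{n} W_i(\mC^\perp)\qbin{n-i}{s-i}{q}$, and since $\qbin{n-i}{s-i}{q}=0$ for $i>s$ the upper limit drops to $s$. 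This is precisely the right-hand side, completing the argument.

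Essentially no step is a genuine obstacle: the proof is a bookkeeping exercise that chains the two available lemmas through Proposition~\ref{prop:dual}. The only points requiring care are the correct use of the symmetry $\qbin{a}{b}{q}=\qbin{a}{a-b}{q}$ and the vanishing convention for the $q$-ary coefficient, which are exactly what let the truncated summation ranges in the statement (up to $n-s$ on the left, up to $s$ on the right) be matched with the full-range sums delivered by Lemma~\ref{lem:cu}.
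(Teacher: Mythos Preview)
Your proof is correct and follows essentially the same route as the paper: both arguments evaluate $\sum_{\dim(U)=n-s}|\mC(U)|$ via Lemma~\ref{lem:cu} (with $u=n-s$) for the left-hand side, invoke Proposition~\ref{prop:dual} to pass to $\mC^\perp$, use the bijection $U\mapsto U^*$, and then apply Lemma~\ref{lem:cu} again (with $u=s$) to $\mC^\perp$ for the right-hand side. Your write-up is in fact slightly more explicit than the paper's, spelling out the symmetry $\qbin{n-j}{s}{q}=\qbin{n-j}{n-s-j}{q}$ and the vanishing conventions that justify the truncated summation ranges.
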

\begin{proof}
We can get the equivalent of the LHS of the desired equality by applying Lemma~\ref{lem:cu} where we set $u=n-s$. We get
$$\sum_{\substack{U \in \F_q^n
\\ \dim(U)=n-s }} |\mC(U)|= \sum_{j=0}^{n-s}  W_j(\mC) \qbin{n-j}{s}{q}.$$ By Proposition \ref{prop:dual} it remains to show that 
$$\frac{|\mC|}{q^{ms}} \sum_{i=0}^{s}  W_i(\mC^\perp)\qbin{n-i}{s-i}{q}
=\sum_{\substack{U \in \F_q^n
\\ \dim(U)=n-s }} \frac{|\mC|}{q^{ms}}\, |\mC^{\perp}(U^*)|  .$$
We have
\allowdisplaybreaks
\begin{eqnarray*}
\sum_{i=0}^{s}  W_i(\mC^\perp)\qbin{n-i}{s-i}{q} &=& \sum_{i=0}^{n}  W_i(\mC^\perp)\qbin{n-i}{s-i}{q} \\
&=&  \sum_{\substack{U \in \F_q^n
\\ \dim(U)=s }} |\mC^\perp(U)| \\
&=& \sum_{\substack{U \in \F_q^n
\\ \dim(U)=n-s }} |\mC^\perp(U^*)|,
\end{eqnarray*}
where the first equality follows from the definition of the $q$-ary coefficient and the last equality follows from the one-to-one correspondence between $s$-dimensional subspaces and the \smash{$(n-s)$-dimensional} subspaces of $\F_q^n.$ 
\end{proof}

\begin{remark} \label{rem:equiv}
It is known that Theorem \ref{thm:mcw} implies Theorem~\ref{macwill}; see
\cite[Appendix~A]{ravagnani2016rank}. We will conclude this section by showing that Theorem~\ref{thm:mcw} can be obtained from  the MacWilliams-type identities, i.e., the two statements are equivalent.
$$\Wrk_i(\mC^{\perp}) = \frac{1}{|\mC|}\sum_{j=0}^n \Wrk_j(\mC) \sum_{u=0}^i (-1)^{i-u}q^{mu+\binom{i-u}{2}}\qbin{n-j}{u}{q} \qbin{n-u}{i-u}{q}.$$
We start by multiplying both sides with $\qbin{n-i}{s-i}{q}$ and summing over $i$ for $0 \le i \le n$, obtaining
$$|\mC| \sum_{i=0}^n \Wrk_i(\mC^{\perp}) \qbin{n-i}{s-i}{q} = \sum_{j=0}^n \Wrk_j(\mC) \sum_{i=0}^n \sum_{u=0}^i (-1)^{i-u}q^{mu+\binom{i-u}{2}}\qbin{n-j}{u}{q} \qbin{n-u}{i-u}{q}\qbin{n-i}{s-i}{q}.$$
Next observe that $0 \le u \le i \le n.$ Thus the summation can be exchanged as follows:
$$|\mC| \sum_{i=0}^n \Wrk_i(\mC^{\perp}) \qbin{n-i}{s-i}{q} = \sum_{j=0}^n \Wrk_j(\mC) \sum_{u=0}^n q^{mu} \qbin{n-j}{u}{q} \sum_{i=u}^n (-1)^{i-u}q^{\binom{i-u}{2}} \qbin{n-u}{n-i}{q}\qbin{n-i}{s-i}{q}.$$
By substituting $\lambda = i-u$ and observing when the $q$-ary coefficients are zero, we have
{\small
$$|\mC| \sum_{i=0}^n \Wrk_i(\mC^{\perp}) \qbin{n-i}{s-i}{q} = \sum_{j=0}^n \Wrk_j(\mC) \sum_{u=0}^s q^{mu} \qbin{n-j}{u}{q} \sum_{\lambda=0}^{s-u} (-1)^{\lambda}q^{\binom{\lambda}{2}} \qbin{n-u}{n-\lambda-u}{q}\qbin{n-\lambda-u}{s-\lambda-u}{q}.$$}
By Part~\ref{qb2} of Lemma~\ref{lem:qb} and by reorganizing the terms, the previous identity can be rewritten as
$$|\mC| \sum_{i=0}^n \Wrk_i(\mC^{\perp}) \qbin{n-i}{s-i}{q} = \sum_{j=0}^n \Wrk_j(\mC) \sum_{u=0}^s q^{mu} \qbin{n-j}{u}{q} \qbin{n-u}{n-s}{q} \sum_{\lambda =0}^{s-u} (-1)^{\lambda}q^{\binom{\lambda}{2}} \qbin{s-u}{\lambda}{q}.$$
Recall that we want to obtain Theorem~\ref{thm:mcw}, i.e., we want to show that 

$$\sum_{j=0}^n \Wrk_j(\mC) \sum_{u=0}^s q^{mu} \qbin{n-j}{u}{q} \qbin{n-u}{n-s}{q} \sum_{\lambda =0}^{s-u} (-1)^{\lambda}q^{\binom{\lambda}{2}} \qbin{s-u}{\lambda}{q}= \sum_{j=0}^n \Wrk_j(\mC) q^{ms}\qbin{n-j}{s}{q}.$$
For this, recall the statement of the $q$-Binomial Theorem:
$$\sum_{\lambda =0}^{s-u} (-1)^{\lambda}q^{\binom{\lambda}{2}} \qbin{s-u}{\lambda}{q} = 
\begin{cases}
1 & \mbox{if $s=u$,} \\
0 & \mbox{otherwise.}
\end{cases}$$
By Lemma \ref{lem:qb}, we have
$$\sum_{u=0}^s q^{mu} \qbin{n-j}{u}{q} \qbin{n-u}{n-s}{q} = q^{ms}\qbin{n-j}{s}{q}.$$
Thus, Theorem \ref{macwill} and Theorem \ref{thm:mcw} are equivalent. 
\end{remark}

We conclude this section 
with an example that
illustrates how to apply
Theorem~\ref{thm:mcw}. 

\begin{example}
Let $n=m=3$. Consider the linear code $$\mC = \Bigg\langle \begin{pmatrix}
0 & 0 & 1\\ 2 & 0 & 0 \\ 0 & 0 & 0
\end{pmatrix}, \begin{pmatrix}
2 & 0 & 0 \\ 1 & 2 & 1 \\ 1 & 0 & 2
\end{pmatrix}
\Bigg\rangle \leq \F_3^{3 \times 3}.$$ The rank distribution of $\mC$ is $W_0(\mC)=1,\ W_1(\mC)=0,\ W_2(\mC)=4,\ W_3(\mC)=4.$ Since $\mC$ is of dimension 2, the dual code has dimension 7, and thus $|\mC^\perp|=3^7.$ This can also be seen by applying Theorem $\ref{thm:mcw}$ with $s=3$ to get $$ 1 = \frac{1}{3^7} (W_0(\mC^\perp)+W_1(\mC^\perp)+W_2(\mC^\perp)+W_3(\mC^\perp)).$$ Noting that $W_0(\mC^\perp)=1$ and applying Theorem \ref{thm:mcw} with $s=1$, we get $W_1(\mC^\perp)=38$. One more application of Theorem \ref{thm:mcw} with $s=2$ gives $W_2(\mC^\perp)=888$ to conclude that the rank distribution of $\mC^\perp$ is $(1,38,888,1260).$ We verify whether or not $W_3(\mC^\perp)=1260$ using the MacWilliams Identities. We have
\allowdisplaybreaks
\begin{eqnarray*}
W_3(\mC^{\perp}) &=&  \frac{1}{9}\sum_{j=0}^3 W_j(\mC) \sum_{u=0}^3 (-1)^{3-u}3^{3u+\binom{3-u}{2}}\qbin{3-j}{u}{3}
\\
&=& \frac{1}{9}[1(-27+ 1053 - 9477+ 19683)+4(-27+81)+4(-27)] \ = \ 1260.\\
\end{eqnarray*}
\end{example}

\section{Minimum Distance and MRD Codes}
\label{sec:3}

We devote this section to the minimum distance of rank-metric codes, which is the most relevant parameter for error correction in connection with various applications; see~\cite{silva2008rank,roth1991maximum,gabidulin1985theory}. We start with the Singleton-type bound for the cardinality of a rank-metric code of given minimum distance and with the construction of optimal codes. 
In the second part of this section we investigate the structural properties of MRD codes. In particular, we prove that the MRD property is closed under duality and that the rank distribution for MRD codes is completely determined by $(q,n,m)$, and by the minimum distance~$\drk(\mC)$. This section
follows to various degrees \cite{delsarte1978bilinear}, \cite{ravagnani2016rank} and \cite{ravagnani2018duality}.
For a more detailed treatment of MRD codes we recommend~\cite{sheekey201913} and the references therein.

\begin{theorem}[Singleton-type Bound]
\label{thm:slb}
Let $\mC \le \mat$ be a rank-metric code. We have
\begin{equation*} \label{singletonlikebound}
    \dim(\mC) \le {m(n-\drk(\mC)+1)}.
\end{equation*}
\end{theorem}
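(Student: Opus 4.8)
The plan is to prove the Singleton-type bound by using the $U$-shortening construction to pass to a subcode whose minimum distance is large enough to force it to be trivial, and then counting dimensions. Let $d = \drk(\mC)$ and suppose first that $\mC \neq \{0\}$, so $1 \le d \le n$. The key idea is that if I intersect $\mC$ with the space of matrices whose column space lies in a subspace $U \le \F_q^n$ of suitably small dimension, then every nonzero matrix in the shortening would have rank strictly less than $d$, which is impossible. Concretely, I would choose $U$ of dimension $d-1$, so that any nonzero $X \in \mC(U)$ satisfies $\rk(X) = \dim(\colsp(X)) \le \dim(U) = d-1 < d$, contradicting the definition of the minimum distance. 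Hence $\mC(U) = \{0\}$ for every such $U$.

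Next I would turn this vanishing into a dimension inequality. By Lemma~\ref{dimensioni}\eqref{dimensioni1}, the ambient shortening $\mat(U)$ has dimension $m \cdot \dim(U) = m(d-1)$. Since $\mC(U) = \mC \cap \mat(U)$, the vanishing $\mC \cap \mat(U) = \{0\}$ means the sum $\mC + \mat(U)$ is direct, so
\begin{equation*}
\dim(\mC) + m(d-1) = \dim(\mC) + \dim(\mat(U)) = \dim(\mC \oplus \mat(U)) \le nm.
\end{equation*}
Rearranging gives $\dim(\mC) \le nm - m(d-1) = m(n - d + 1)$, which is exactly the claimed bound. The boundary case $\mC = \{0\}$ is handled separately: there $\drk(\mC) = n+1$ by convention, so the right-hand side is $m(n - (n+1) + 1) = 0 = \dim(\mC)$, and the inequality holds with equality.

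The main subtlety, and the step I would be most careful about, is ensuring a subspace $U$ of dimension exactly $d-1$ exists and that the rank argument is airtight. The existence of $U$ is immediate since $0 \le d-1 \le n-1 < n$, so such a subspace of $\F_q^n$ certainly exists; and the rank bound $\rk(X) = \dim(\colsp(X)) \le \dim(U)$ for $X \in \mat(U)$ follows directly from the definition of $\mat(U)$ as the matrices with column space contained in $U$. An alternative, perhaps cleaner, route avoids choosing $U$ altogether: one can realize the same bound via duality by combining the shortening with Proposition~\ref{prop:dual}, but the direct shortening-and-direct-sum argument above is the most transparent and requires only Lemma~\ref{dimensioni}. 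I expect no genuine obstacle here; the proof is short, and the only thing to watch is the degenerate case and the correct off-by-one in the dimension of $U$.
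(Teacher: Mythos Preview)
Your proof is correct and is essentially the same argument as the paper's, just phrased in the language of shortenings rather than coordinate projections. The paper projects onto the last $n-d+1$ rows and shows that this map is injective on $\mC$; the kernel of that projection is precisely $\mat(U)$ for $U$ the span of the first $d-1$ standard basis vectors, so ``$\pi|_\mC$ injective'' is exactly your statement $\mC \cap \mat(U)=\{0\}$. Your version is coordinate-free and ties in nicely with the shortening machinery developed in Section~\ref{sec:2}, while the paper's version is slightly more self-contained (it does not invoke Lemma~\ref{dimensioni}); but the underlying idea and the dimension count are identical.
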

\begin{proof}
The statement of the theorem follows by definition if $\mC=\{0\}$. Suppose that $\mC \ne \{0\}$ and let $d=\drk(\mC)$. Consider the projection $\smash{\pi:\mat \longrightarrow \F_q^{(n-d+1)\times m}}$ onto the last $n-d+1$ rows. We claim that the restriction of $\pi$ to $\mC$ is injective, which would prove the bound. Indeed, suppose $\pi(X)=0$ for some $X \in \mC$. This implies that the last $n-d+1$ rows of $X$ are zero. In particular, $\rk(X)\le n-(n-d+1)=d-1$. Since $\drk(\mC)=d$, we have $X=0$.
\end{proof}

The Singleton-type bound says that a rank-metric code cannot have large minimum distance and large cardinality simultaneously. The best-known rank-metric codes are those meeting the Singleton-type bound, i.e., those codes having the maximum possible cardinality allowed by their minimum distance.

\begin{definition}
A code $\mC \le \mat$ 
is called \textbf{maximum rank distance}  (\textbf{MRD} in short)
if it attains the bound 
of Theorem~\ref{thm:slb} with equality.
\end{definition}

MRD codes are the most studied rank-metric codes and can be seen as the rank-metric analogues of MDS codes in the Hamming metric.

\begin{example} \label{ex:specs}
Consider the code 
\begin{align*}
    \mC = \left \langle
\begin{pmatrix}
1 & 0 & 0\\ 0 & 1 & 0 
\end{pmatrix},
\begin{pmatrix}
0 & 1 & 0\\ 0 & 0 & 1 
\end{pmatrix},
\begin{pmatrix}
0 & 0 & 1\\ 1 & 0 & 2 
\end{pmatrix}
\right \rangle \le \F_3^{2 \times 3}.
\end{align*}
Note that any matrix $X \in \mC$ can be written as
\begin{align*}
X=
\lambda_1 \begin{pmatrix}
1 & 0 & 0\\ 0 & 1 & 0 
\end{pmatrix} + \lambda_2
\begin{pmatrix}
0 & 1 & 0\\ 0 & 0 & 1 
\end{pmatrix} + \lambda_3
\begin{pmatrix}
0 & 0 & 1\\ 1 & 0 & 2 
\end{pmatrix} 
=
\begin{pmatrix}
\lambda_1 & \lambda_2 & \lambda_3\\ \lambda_3 & \lambda_1 & \lambda_2+2\lambda_3 
\end{pmatrix}, 
\end{align*}
where $\lambda_1, \lambda_2, \lambda_3 \in \F_3$. It is not hard to see that the matrix $X$ has rank smaller or equal to 1 if and only if 
\begin{align} \label{eq:spectfree}
\begin{pmatrix}
\lambda_3 \\ \lambda_1 \\ \lambda_2+2\lambda_3 
\end{pmatrix} =
\begin{pmatrix}
0 & 0 & 1\\ 1 & 0 & 0 \\ 0 & 1 & 2 
\end{pmatrix}
\begin{pmatrix}
\lambda_1 \\
\lambda_2 \\
\lambda_3
\end{pmatrix} = c \begin{pmatrix}
\lambda_1 \\
\lambda_2 \\
\lambda_3
\end{pmatrix}
\end{align}
for some $c \in \F_3$. Let \begin{align*}
    M= \begin{pmatrix}
0 & 0 & 1\\ 1 & 0 & 0 \\ 0 & 1 & 2 
\end{pmatrix}.
\end{align*}
Since the characteristic polynomial of the matrix $M$ is irreducible over $\F_3$, $M$ does not have any eigenvalues. This proves that there do not exist $\lambda_1, \lambda_2, \lambda_3 \in \F_3$ satisfying Equation~\eqref{eq:spectfree} unless $\lambda_1=\lambda_2=\lambda_3=0$. Therefore any non-zero $X \in \mC$ has rank 2 and the code $\mC$ is MRD.
\end{example}

MRD codes exist for all parameter sets, as the next result shows. This is in strong contrast to
MDS codes, which only exist over sufficiently large fields.
We refer to~\cite{ball2020arcs} for an expository paper on the problem of constructing MDS codes (or, equivalently, \textit{arcs} in projective spaces) over small fields.

 \begin{theorem} \label{thm:existence}
 Let $1 \le d \le n$. There exists an MRD code $\mC \le \mat$ of minimum distance $\drk(\mC) = d$.
 \end{theorem}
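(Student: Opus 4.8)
The plan is to construct an explicit MRD code for every admissible $d$, since Theorem~\ref{thm:existence} is an existence statement and the cleanest route is a direct construction. The most natural candidate is the family of \emph{Gabidulin codes}, which exploit the extra algebraic structure available when we identify $\F_q^{n\times m}$ with a space of $\F_q$-linear maps. Concretely, I would fix a basis of $\F_{q^m}$ over $\F_q$ and use it to identify $\F_{q^m}$ with $\F_q^m$ as $\F_q$-vector spaces; a vector $(g_1,\dots,g_n)\in\F_{q^m}^n$ whose entries are $\F_q$-linearly independent then gives rise, via evaluation of $q$-linearized polynomials, to a code of $n\times m$ matrices over $\F_q$. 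This reduces the problem to controlling the rank of the evaluation matrices, which is where the structure of linearized polynomials becomes essential.

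First I would recall the relevant facts about $q$-linearized (Ore) polynomials $f(x)=\sum_{i=0}^{k-1} a_i x^{q^i}$ with $a_i\in\F_{q^m}$: each such $f$ induces an $\F_q$-linear map $\F_{q^m}\to\F_{q^m}$, and a nonzero $f$ of $q$-degree at most $k-1$ has an $\F_q$-kernel of dimension at most $k-1$, hence at least $q^{m}-q^{k-1}$ of its values are nonzero. Next I would define the code as the image of the evaluation map sending $f$ (ranging over linearized polynomials of $q$-degree $< k$) to the matrix whose rows encode $f(g_1),\dots,f(g_n)$ in the chosen basis. Setting $k=n-d+1$, the $\F_q$-dimension of this space of polynomials is $k$ over $\F_{q^m}$, i.e.\ $mk=m(n-d+1)$ over $\F_q$, which already matches the Singleton-type bound of Theorem~\ref{thm:slb} with equality provided the evaluation map is injective and the minimum rank is exactly $d$.

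The key step is the rank computation. For a nonzero $f$ of $q$-degree $<k$, the rank of its evaluation matrix equals $n$ minus the dimension of the space of $\F_q$-linear dependencies among $f(g_1),\dots,f(g_n)$; using the $\F_q$-linear independence of $g_1,\dots,g_n$ together with the kernel bound on $f$, one shows this rank is at least $n-(k-1)=n-d+1+(-1)\cdot 0=d$. Here I would argue that a rank drop of $r$ would force $f$ to vanish on an $\F_q$-subspace of dimension $r$ spanned by a suitable combination of the $g_i$, contradicting the degree bound once $r>k-1$. This simultaneously yields injectivity of the evaluation map (a matrix of rank $0$ forces $f$ to vanish on an $n\ge k$ dimensional space, hence $f=0$) and the lower bound $\drk(\mC)\ge d$; an explicit low-rank element shows $\drk(\mC)=d$ exactly.

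The main obstacle I anticipate is \emph{packaging} rather than any deep difficulty: one must carefully set up the identification between $\F_{q^m}$-linear data and $\F_q$-matrices so that $\F_q$-linearity, not $\F_{q^m}$-linearity, governs the rank, and then translate the kernel-dimension bound for linearized polynomials into the matrix-rank statement. A subtle point is confirming that the resulting matrix code is genuinely $\F_q$-linear of the claimed dimension $m(n-d+1)$ and that the $q$-degree bound interacts correctly with the independence hypothesis on the $g_i$; the case $d=n$ (one-dimensional-over-$\F_{q^m}$ polynomials) and the degenerate bookkeeping of indices should be checked separately to avoid off-by-one errors. Once the linearized-polynomial rank lemma is in place, the rest is a dimension count matching the bound of Theorem~\ref{thm:slb}.
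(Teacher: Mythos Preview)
Your proposal is correct and follows essentially the same route as the paper: both construct the Gabidulin code by evaluating $q$-linearized polynomials of $q$-degree at most $k-1=n-d$ on an $\F_q$-linearly independent tuple in $\F_{q^m}$, and both deduce $\drk(\mC)\ge d$ from the kernel bound $\dim_{\F_q}\ker f\le k-1$ for nonzero $f$. The only minor difference is that the paper obtains the exact equality $\drk(\mC)=d$ not via an explicit low-rank witness but by invoking the Singleton-type bound (Theorem~\ref{thm:slb}) against the already-established dimension $m(n-d+1)$, which is a slightly quicker finish than hunting for a rank-$d$ element.
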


\begin{proof}
Fix any ordered basis $\{\beta_1,...,\beta_m\}$
of $\F_{q^m}$ over $\F_q$.
Let $\mbox{trace}: \F_{q^m} \to \F_q$ denote the trace map and let
$\{\beta_1^*,...,\beta_m^*\}$ be the 
\textit{dual} of the basis $\{\beta_1,...,\beta_m\}$;
see~\cite[p. 54]{lidl1997finite}.
Note that, by definition,
$\mbox{trace}(\beta_i\beta_j^*)=1$ if $i=j$ and $\mbox{trace}(\beta_i\beta_j^*)=0$ if $i \neq j$. Let $k=n-d+1$ and for a vector $\smash{u=(u_0,...,u_{k-1}) \in \F_{q^m}^{k}}$
define the matrix $X(u) \in \mat$ as
$$X(u)_{ij}:= \mbox{trace} \left( \beta_j  \sum_{\ell=0}^{k-1} u_\ell \, \beta_i^{q^\ell} \right) \quad \mbox{for $1 \le i \le n$ and $1 \le j \le m$}.$$
We claim that $\mC=\{X(u) \mid u \in \F_{q^m}^{k}\}$ has the desired properties, which is what we show in the rest of the proof.

It follows from the definitions that
$\mC$ is $\F_q$-linear. We claim that
$X(u)$ has rank at least~$d$ for all $u \in \F_{q^m}^{k}$, $u \neq 0$.
This implies at once that
$\drk(\mC) \ge d$ and that  $|\mC|=q^{mk}=q^{m(n-d+1)}$.
To prove the claim, for $u=(u_0,...,u_{k-1}) \in \F_{q^m}^k$ define the $\F_q$-linear map
$$L_u: \langle \beta_1,...,\beta_n \rangle_{\F_q} \to \F_{q^m}, \qquad L_u: x \mapsto \sum_{\ell=0}^{k-1} u_\ell \, x^{q^\ell}.$$
Then $X(u)$ is the matrix of $L_u$ with respect to the ordered bases $\{\beta_1,...,\beta_n\}$ and $\{\beta_1^*,...,\beta_m^*\}$, where the images are put in the columns.
It follows that $\rk(X(u))=n-\dim (\ker(L_u))$. Now observe that if $u \neq 0$, then the cardinality of $\ker(L_u)$
cannot exceed $q^{k-1}$, since
$\sum_{\ell=0}^{k-1} u_\ell \, x^{q^\ell}$ is a nonzero polynomial of degree at most $q^{k-1}$.
Therefore we have shown that $u \neq 0$ implies $\rk(X(u)) \ge n-(k-1)=n-(n-d+1-1)=d$.

It remains to show that $\drk(\mC)$ is exactly $d$ (and not larger). To see this 
we apply Theorem~\ref{thm:slb} and obtain 
\begin{align*}
    m(n-d+1) = \dim(\mC) \le m(n-\drk(\mC)+1) \le m(n-d+1),
\end{align*}
which implies $\drk(\mC)=d$.
\end{proof}

By Theorem~\ref{thm:existence} we know that MRD codes exist for any parameter set. 
The remainder of this section is devoted to structural properties of MRD codes. 

\begin{lemma} \label{lem:U}
Let $\mC \le \mat$ be a non-zero MRD code of minimum distance $d$. For all subspaces $U \le \F_q^n$ with $\dim(U)=u \ge d-1$ we have
\begin{align*}
    |\mC(U)| = q^{m(u-d+1)}.
\end{align*}
\end{lemma}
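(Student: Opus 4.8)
The plan is to prove the two inequalities $\dim(\mC(U)) \ge m(u-d+1)$ and $\dim(\mC(U)) \le m(u-d+1)$ separately, and then pass from dimension to cardinality via $|\mC(U)| = q^{\dim(\mC(U))}$.

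For the lower bound I would write the shortening as an intersection, $\mC(U) = \mC \cap \mat(U)$, and apply the dimension formula
$$\dim(\mC \cap \mat(U)) = \dim(\mC) + \dim(\mat(U)) - \dim(\mC + \mat(U)).$$
Here the MRD hypothesis gives $\dim(\mC) = m(n-d+1)$, Part~\ref{dimensioni1} of Lemma~\ref{dimensioni} gives $\dim(\mat(U)) = mu$, and trivially $\dim(\mC + \mat(U)) \le nm$. Substituting yields $\dim(\mC(U)) \ge m(n-d+1) + mu - nm = m(u-d+1)$, so this direction is a one-line count powered entirely by the MRD equality.

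For the upper bound the idea is to realize $\mC(U)$ as a genuine rank-metric code in a smaller ambient space and then invoke the Singleton-type bound. Choosing an invertible matrix $A \in \F_q^{n \times n}$ with $A U = \langle e_1, \dots, e_u\rangle$, the left multiplication $X \mapsto AX$ preserves rank and sends $\mat(U)$ onto the matrices whose last $n-u$ rows vanish; deleting those zero rows identifies $\mC(U)$ with a code $\mC' \le \F_q^{u \times m}$ of the same rank distribution. Since $\mC(U) \subseteq \mC$, every nonzero element has rank at least $d$, so $\drk(\mC') \ge d$. As $m \ge n \ge u$, Theorem~\ref{thm:slb} applied in $\F_q^{u \times m}$ gives $\dim(\mC(U)) = \dim(\mC') \le m(u - \drk(\mC') + 1) \le m(u-d+1)$.

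Combining the two bounds gives $\dim(\mC(U)) = m(u-d+1)$, hence $|\mC(U)| = q^{m(u-d+1)}$; note that at $u = d-1$ this correctly returns the trivial code, and the hypothesis $u \ge d-1$ only serves to keep the exponent nonnegative. I expect the only real content to be the upper bound, specifically the reduction of the shortened code to an honest code in $\F_q^{u\times m}$ on which the Singleton-type bound is available; the lower bound and the final bookkeeping are routine.
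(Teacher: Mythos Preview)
Your proof is correct, but it takes a different route from the paper's. The paper also straightens $U$ to the standard space $V=\langle e_1,\dots,e_u\rangle$ via an invertible matrix, but then computes $|\mC(U)|$ in one shot rather than via two inequalities: it shows that the projection $\pi_1:\mD\to\F_q^{(n-d+1)\times m}$ onto the last $n-d+1$ rows is a \emph{bijection} (injectivity from $\drk(\mD)=d$, surjectivity from the MRD cardinality), composes it with the further projection $\pi_2$ onto the last $n-u$ rows, and reads off $|\mD(V)|=|(\pi_2\circ\pi_1)^{-1}(0)|=q^{m(u-d+1)}$ as a fibre count. Your two-sided argument is arguably more modular---the lower bound is a bare dimension count and the upper bound simply reuses Theorem~\ref{thm:slb} in the smaller ambient space $\F_q^{u\times m}$---while the paper's projection argument is more direct and in fact proves slightly more (every fibre of $\pi_2\circ\pi_1$ has the same size, which is exactly what is exploited later in Lemma~\ref{lem:trans1}). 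Both arguments ultimately rest on the same two facts: the injectivity of the ``delete top $d-1$ rows'' map, and the MRD equality $\dim(\mC)=m(n-d+1)$.
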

\begin{proof}
Define the space $V = \{(x_1,\ldots,x_u,0,\ldots,0)\in \F_q^n \mid x_1, \dots, x_u \in \F_q\}$. Let $g: \mathbb{F}_q^n \rightarrow \mathbb{F}_q^n$ be any \smash{$\mathbb{F}_q$-isomorphism} with $g(U) = V$ and denote by $G \in \mathbb{F}^{n \times n}_q$ the matrix associated to $g$ with respect to the canonical basis of $\mathbb{F}_q^n$. We define the rank-metric code $$\mD=G\mC = \{G X \mid X \in \mC\}.$$ 
It is easy to see that $\mathcal{D}$ has the same dimension and minimum distance as $\mathcal{C}$ which shows that it is MRD. Moreover, observe that $\mC(U) = \mD(V)$. Consider the maps
\begin{align*}
    \pi_1 : \mathcal{D} \rightarrow \F_q^{(n-d+1) \times m}, \quad \pi_2 : \F_q^{(n-d+1) \times m} \rightarrow  \F_q^{(n-u) \times m}
\end{align*}
where $\pi_1$ is the projection on the last $n-d+1$ rows, and $\pi_2$ is the projection on
the last $n-u$ rows. Since $\mD$ has minimum distance $d$, $\pi_1$ is injective. Furthermore, since $|\mD|=q^{m(n-d+1)}=|\F_q^{(n-d+1)\times m}|$, $\pi_1$ is bijective. The map $\pi_2$ is $\F_q$-linear and surjective.
Therefore
\begin{align*}
    |\pi_2^{-1}(0)| = |\pi_2^{-1}(M)| = q^{m(u-d+1)} \quad \textnormal{ for all } M \in \F_q^{(n-u) \times m}.
\end{align*}
Since $\pi_1$ is bijective and $\pi_2$ is surjective, the map $\pi=\pi_1 \circ \pi_2$ is surjective, and we have
\begin{align*}
    |\pi^{-1}(0)| = |\pi^{-1}(M)|=q^{m(u-d+1)} \quad \textnormal{ for all } M \in \F_q^{(n-u) \times m}.
\end{align*}
The statement follows from noting that $\mC(U)=\mD(V)=\pi^{-1}(0)$.
\end{proof}

Using the previous lemma we can show that the MRD property is closed under duality.

\begin{theorem} \label{dualofis}
Let $\mC \le \mat$ be an MRD code. Then $\mC^{\perp}$ is also MRD.
\end{theorem}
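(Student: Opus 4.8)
The plan is to compute the minimum distance of $\mC^\perp$ directly and verify that, together with its dimension, it forces $\mC^\perp$ to meet the Singleton-type bound of Theorem~\ref{thm:slb}. Write $d = \drk(\mC)$, so that the MRD hypothesis gives $\dim(\mC) = m(n-d+1)$ and hence $\dim(\mC^\perp) = nm - m(n-d+1) = m(d-1)$. Since $\dim(\mC^\perp)$ is thereby fixed, it is necessary and sufficient for $\mC^\perp$ to be MRD that $\drk(\mC^\perp) = n - d + 2$, this being exactly the value of $d^\perp$ for which $m(d-1) = m(n - d^\perp + 1)$. So the whole task reduces to pinning down $\drk(\mC^\perp)$.

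One inequality is free: applying Theorem~\ref{thm:slb} to $\mC^\perp$ gives $m(d-1) = \dim(\mC^\perp) \le m(n - \drk(\mC^\perp) + 1)$, hence $\drk(\mC^\perp) \le n - d + 2$. The work is therefore to prove the reverse inequality $\drk(\mC^\perp) \ge n - d + 2$, i.e. that $\mC^\perp$ contains no nonzero matrix of rank at most $n - d + 1$.

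To do this I would translate the rank condition into a statement about shortenings. A nonzero $X \in \mC^\perp$ of rank $r$ lies in $\mC^\perp(\colsp(X))$ with $\dim(\colsp(X)) = r$, so it suffices to show $|\mC^\perp(W)| = 1$ for every subspace $W \le \F_q^n$ with $\dim(W) \le n - d + 1$. Here is the key step, pairing the two tools already available: given such a $W$, set $U = W^*$, so that $\dim(U) = n - \dim(W) \ge d - 1$ and $U^* = W$. Proposition~\ref{prop:dual} expresses $|\mC^\perp(W)| = |\mC^\perp(U^*)|$ as $q^{m\dim(W)}\,|\mC(U)| / |\mC|$, while Lemma~\ref{lem:U} (applicable precisely because $\dim(U) \ge d-1$) evaluates $|\mC(U)| = q^{m(\dim(U) - d + 1)}$. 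Substituting $\dim(U) = n - \dim(W)$ and $|\mC| = q^{m(n-d+1)}$, all the exponents cancel and one obtains $|\mC^\perp(W)| = q^0 = 1$.

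This yields $\drk(\mC^\perp) \ge n - d + 2$ and, with the Singleton bound above, $\drk(\mC^\perp) = n - d + 2$; the dimension count then shows $\mC^\perp$ meets Theorem~\ref{thm:slb} with equality, so it is MRD. The only genuine insight is the combination of Proposition~\ref{prop:dual} with Lemma~\ref{lem:U}, matching a low-dimensional subspace $W$ relevant to $\mC^\perp$ against its orthogonal $U=W^*$ of large dimension relevant to $\mC$; everything else is exponent bookkeeping, which I expect to be routine rather than an obstacle. I would take minor care with the degenerate cases $\mC = \{0\}$ (where $\drk(\mC) = n+1$ by convention, forcing $\mC^\perp = \mat$) and $\mC = \mat$, checking that the claimed value of $\drk(\mC^\perp)$ stays consistent there, since Lemma~\ref{lem:U} was stated only for nonzero MRD codes.
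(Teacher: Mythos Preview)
Your proposal is correct and follows essentially the same route as the paper: both combine Proposition~\ref{prop:dual} with Lemma~\ref{lem:U} to force $|\mC^\perp(W)|=1$ for $(n-d+1)$-dimensional $W$ (you do it for all $\dim(W)\le n-d+1$, which is a harmless overkill), and then close with the Singleton-type bound to pin down $\drk(\mC^\perp)=n-d+2$. The paper also disposes of the degenerate cases $\dim(\mC)\in\{0,mn\}$ up front, just as you anticipate.
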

\begin{proof}
Clearly, if $\dim(\mC) \in \{0,mn\}$ the statement is true. Let us assume that $0 < \dim(\mC) < nm$ and let $d=\drk(\mC)$, $d^{\perp}=\drk(\mC^{\perp})$. The Singleton-type bound (see Theorem~\ref{thm:slb}) implies that
\begin{align*}
    \dim(\mC)=m(n-d+1) \textnormal{ and } \dim(\mC^{\perp}) \le m(n-d^{\perp}+1).
\end{align*}
In particular, $mn=\dim(\mC)+\dim(\mC^{\perp}) \le 2mn-m(d+d^{\perp})+2m$, which can be rewritten as
\begin{align} \label{eq:dual0}
    d+d^{\perp} \le n+2.
\end{align}
Let $U \le \F_q^n$ be of dimension $n-d+1$. By Proposition~\ref{prop:dual} we have
\begin{align*}
    |\mC^{\perp}(U)|=\frac{|\mC^{\perp}|}{q^{m(d-1)}}|\mC(U^{*})|.
\end{align*}
Since $\dim(U^{*})=d-1$, by Lemma~\ref{lem:U} we know $|\mC(U^{*})|=1$ and therefore
\begin{align} \label{eq:dual}
    |\mC^{\perp}(U)|=\frac{|\mC^{\perp}|}{q^{m(d-1)}}=\frac{q^{m(d-1)}}{q^{m(d-1)}}=1.
\end{align}
Since $U \le \mat$ with $\dim(U)=n-d+1$ was chosen arbitrarily, by~\eqref{eq:dual} we know that the only $M \in \mC$ with $\colsp(M) \subseteq U$ is the zero-matrix. Therefore $d^{\perp} \ge n-d+2$. In particular, by~\eqref{eq:dual0} we have $d^{\perp}=n-d+2$, from which it follows that 
\begin{align*}
    \dim(\mC^{\perp})= mn-\dim(\mC)=m(d-1)=m(n-d^{\perp}+1)
\end{align*}
proving that $\mC^{\perp}$ is MRD.
\end{proof}

The following result shows that the rank distribution of MRD codes in $\mat$ of minimum distance~$d$ is completely determined by the parameters $m$, $n$ and $d$.

\begin{theorem} \label{thm:macwillmrd}
Let $\mC \le \mat$ be a non-zero MRD code with minimum distance $d$. Then $\Wrk_0(\mC)=1$, $\Wrk_i(\mC)=0$ for $1 \le i \le d-1$ and 
\begin{align*}
    \Wrk_i(\mathcal{C}) = \sum_{u=0}^{d-1} (-1)^{i-u} q^{\binom{i-u}{2}} \qbin{n}{i}{q} \qbin{i}{u}{q} + \sum_{u=d}^i (-1)^{i-u} q^{\binom{i-u}{2}+m(u-d+1)} \qbin{n}{i}{q} \qbin{i}{u}{q}.
\end{align*}
\end{theorem}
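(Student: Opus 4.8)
The plan is to reduce the computation to a count of the shortened codes $|\mC(U)|$ and then recover the full rank distribution by Möbius inversion over the subspace lattice of $\F_q^n$, exactly mirroring the mechanism already used in the proof of Theorem~\ref{macwill}. The two easy assertions come first: $\Wrk_0(\mC)=1$ because the zero matrix is the unique rank-$0$ element of any code, and $\Wrk_i(\mC)=0$ for $1\le i\le d-1$ is immediate from $\drk(\mC)=d$.

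The crucial preliminary step is to determine $|\mC(U)|$ as a function of $u=\dim(U)$ alone. For $u\ge d-1$, Lemma~\ref{lem:U} gives $|\mC(U)|=q^{m(u-d+1)}$. For $u\le d-1$ I would argue separately, since that regime is not covered by the lemma: any $X\in\mC$ with $\colsp(X)\le U$ has $\rk(X)\le u\le d-1<d$, so by minimality of the distance $X=0$, whence $|\mC(U)|=1$. The two descriptions agree at the boundary $u=d-1$, and the important structural consequence is that $|\mC(U)|$ depends on $U$ only through its dimension.

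Next I would set $f(V)=|\{X\in\mC\mid\colsp(X)=V\}|$ and $g(V)=\sum_{U\le V}f(U)=|\mC(V)|$, and apply Möbius inversion in the lattice of subspaces of $\F_q^n$, whose Möbius function is $\mu(U,V)=(-1)^{\dim V-\dim U}q^{\binom{\dim V-\dim U}{2}}$. For $\dim(V)=i$ this gives
$$f(V)=\sum_{u=0}^{i}(-1)^{i-u}q^{\binom{i-u}{2}}\sum_{\substack{U\le V\\ \dim(U)=u}}|\mC(U)|.$$
Since $|\mC(U)|$ depends only on $u$, and $V$ contains exactly $\qbin{i}{u}{q}$ subspaces of dimension $u$, the inner sum collapses and $f(V)$ becomes independent of the choice of $V$. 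Summing over the $\qbin{n}{i}{q}$ subspaces $V$ of dimension $i$ then yields
$$\Wrk_i(\mC)=\qbin{n}{i}{q}\sum_{u=0}^{i}(-1)^{i-u}q^{\binom{i-u}{2}}\qbin{i}{u}{q}\,|\mC(U_u)|,$$
where $U_u$ denotes any fixed subspace of dimension $u$.

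Finally, I would split this sum at $u=d$: inserting $|\mC(U_u)|=1$ for $0\le u\le d-1$ and $|\mC(U_u)|=q^{m(u-d+1)}$ for $d\le u\le i$, and distributing the factor $\qbin{n}{i}{q}$, reproduces the two sums in the statement verbatim. The only genuinely substantive step is the evaluation of $|\mC(U)|$ uniformly in $U$; the regime $u\ge d-1$ is precisely Lemma~\ref{lem:U}, while the complementary regime $u<d-1$ must be settled by the minimum-distance argument above. Once the dependence of $|\mC(U)|$ on $\dim(U)$ alone is secured, the remainder is a routine application of the standard subspace Möbius inversion, so I expect no serious obstacle beyond verifying that the two piecewise formulas match at the boundary $u=d-1$.
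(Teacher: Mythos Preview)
Your proposal is correct and follows essentially the same route as the paper: define $f(V)=|\{X\in\mC\mid\colsp(X)=V\}|$ and $g(V)=|\mC(V)|$, evaluate $g$ using Lemma~\ref{lem:U} for $u\ge d-1$ and the minimum-distance argument for $u\le d-1$, apply M\"obius inversion in the subspace lattice, and then sum over the $\qbin{n}{i}{q}$ subspaces of dimension $i$. If anything, your treatment of the regime $u<d-1$ is more explicit than the paper's, which states $g(U)=1$ there by noting only that ``$\mC$ contains the zero matrix'' without spelling out the minimum-distance argument showing no nonzero matrix can occur.
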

\begin{proof}
Since the zero matrix is contained in every linear code, we have $\Wrk_0(\mC)=1$ and by the definition of minimum distance we have $\Wrk_i(\mC)=0$ for all $1 \le i \le d-1$. 
For all subspaces $V\le \F_q^n$ consider the maps
\begin{align*}
    f(V) &= |\{ X \in \mathcal{C} \mid \colsp(X) = V\}|, \\
    g(V) &= \sum_{U \subseteq V} f(U)=|\{X \in \mC \mid \colsp(X) \subseteq
     V\}|=|\mC(V)|.
\end{align*}
Let us fix $0 \leq i \leq n$ and an $i$-dimensional space $V \le \mathbb{F}_q^n$. By the Möbius inversion formula  in the lattice of subspaces we have
\begin{align} \label{eq:mcmrd}
    f(V) = \sum_{u=0}^i (-1)^{i-u} q^{\binom{i-u}{2}} \sum_{\substack{U \le V\\ \dim(U)=u}} g(U)
\end{align}
Since $\mC$ contains the zero matrix we have $g(U)=1$ for $0 \le \dim(U)\le d-1$ and since $\mC$ is MRD, by Lemma~\ref{lem:U} we have
\begin{align*}
    g(U) = q^{m(u-d+1)}
\end{align*}
if $d \le \dim(U) \leq n$.
Combining the above with~\eqref{eq:mcmrd} we obtain
\begin{align*}
    f(V) = \sum_{u=0}^{d-1} (-1)^{i-u} q^{\binom{i-u}{2}} \qbin{i}{u}{q} + \sum_{u=d}^{i} (-1)^{i-u} q^{\binom{i-u}{2}+m(u-d+1)} \qbin{i}{u}{q}
\end{align*}
By the identity 
\begin{align*}
\Wrk_i(\mC)=\sum_{\substack{V \le \F_q^n \\ \dim(V)=i}}f(V) = \qbin{n}{i}{q} f(V)
\end{align*}
the desired formula for $\Wrk_i(\mC)$ follows.
\end{proof}

It can be shown that, for an MRD code, all the elements of the rank distribution ``above'' the minimum distance are positive; see~\cite[Section~5]{ravagnani2016rank}.
We will need this result in Section~\ref{sec:4}.

\begin{theorem} \label{++}
Let $\mC \le \mat$ be an non-zero MRD code of minimum distance $d$. We have $\Wrk_i(\mC) > 0$ for all $d \le i \le n$.
\end{theorem}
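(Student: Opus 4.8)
The plan is to show that for a single fixed $i$-dimensional subspace $V \le \F_q^n$ there is at least one matrix in $\mC$ whose column space is exactly $V$. Writing $f(V) = |\{X \in \mC \mid \colsp(X) = V\}|$ as in the proof of Theorem~\ref{thm:macwillmrd}, we have $\Wrk_i(\mC) = \sum_{\dim(V)=i} f(V)$ with every summand non-negative, so it suffices to exhibit one $V$ with $f(V) > 0$. One could instead try to prove positivity directly from the closed formula in Theorem~\ref{thm:macwillmrd}, but its alternating signs make this awkward; the shortening counts supplied by Lemma~\ref{lem:U} give a much cleaner route.

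First I would split $\mC(V)$ according to whether the column space of a matrix equals $V$ or is a proper subspace of $V$. Since every proper subspace of $V$ lies in some hyperplane of $V$, we have
\[
\{X \in \mC(V) \mid \colsp(X) \subsetneq V\} = \bigcup_{\substack{W \le V \\ \dim(W) = i-1}} \mC(W),
\]
and therefore $f(V) = |\mC(V)| - \bigl|\bigcup_W \mC(W)\bigr|$. The key step is then to estimate both terms using Lemma~\ref{lem:U}, which applies because $\dim(V) = i \ge d > d-1$ and $\dim(W) = i-1 \ge d-1$ (this is exactly where the hypothesis $i \ge d$, rather than $i \ge d-1$, is used). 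This gives $|\mC(V)| = q^{m(i-d+1)}$, while the union bound yields
\[
\Bigl|\bigcup_W \mC(W)\Bigr| \le \sum_W |\mC(W)| = \qbin{i}{i-1}{q}\, q^{m(i-d)} = \frac{q^i-1}{q-1}\, q^{m(i-d)}.
\]
Combining the two estimates,
\[
f(V) \ge q^{m(i-d)}\left(q^m - \frac{q^i-1}{q-1}\right).
\]

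It then remains only to verify the elementary inequality $q^m > (q^i-1)/(q-1)$. This holds because $(q^i-1)/(q-1) = 1 + q + \cdots + q^{i-1} \le q^i - 1 < q^i \le q^n \le q^m$, using $q \ge 2$ together with $i \le n \le m$. Hence $f(V) > 0$, and so $\Wrk_i(\mC) > 0$ for every $d \le i \le n$. I do not anticipate a serious obstacle: the only points demanding care are confirming that the dimension hypotheses of Lemma~\ref{lem:U} are met for both $V$ and its hyperplanes, and checking that the slack introduced by the crude subadditive union bound stays below $q^m$ — which the computation above confirms with room to spare.
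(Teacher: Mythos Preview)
Your argument is correct. The paper itself does not include a proof of this theorem; it only cites \cite[Section~5]{ravagnani2016rank}. Your approach---bounding $f(V)$ from below via $|\mC(V)| - \sum_{W} |\mC(W)|$ over the hyperplanes $W$ of $V$, and invoking Lemma~\ref{lem:U} for both $V$ and each $W$---is entirely valid and self-contained within the chapter's framework. The dimension checks ($i \ge d-1$ for $V$ and $i-1 \ge d-1$ for the hyperplanes) are exactly right, and the final inequality $q^m > (q^i-1)/(q-1)$ holds since $(q^i-1)/(q-1) \le q^i - 1 < q^i \le q^n \le q^m$ for $q \ge 2$ and $i \le n \le m$.

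For comparison, the argument in the cited reference works with the explicit closed formula of Theorem~\ref{thm:macwillmrd} and manipulates the alternating sum to extract positivity. Your route sidesteps that algebra: by going back to the shortening counts that \emph{produce} the formula (Lemma~\ref{lem:U}) and applying a crude union bound, you trade exactness for a much shorter and more transparent proof. The slack you introduce is harmless here because the dominant term $q^m$ comfortably exceeds the number $\qbin{i}{1}{q}$ of hyperplanes. In fact your inequality shows a bit more: every $i$-dimensional subspace of $\F_q^n$ occurs as the column space of some codeword, not merely at least one.
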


\section{Maximum Rank and Anticodes}
\label{sec:4}

In this section we concentrate 
on the theory of 
anticodes in the rank metric, and in particular on the structure of the optimal ones. If the term ``code'' generally refers to a matrix space in which all elements have rank bounded from below by a given number, ``anticode'' refers to a matrix space in which all elements have rank bounded from above. 
The theory of anticodes (in various metrics) is closely connected to that of \textit{generalized weights} of error-correcting codes, which we however do not treat here (see~\cite{ravagnani2016generalized} for the connection). 
An overview of
the various notions of optimal anticodes and supports in the rank metric
can be found in~\cite{gorlach}.
A very recent development
in this area is the extension of the
theory of anticodes from the class of rank-metric codes to that of \textit{tensor codes}; see~\cite{byrne2021tensor}.

The main goal of this section is to survey Meshulam's characterization of optimal anticodes in the rank-metric, following the original proof presented in~\cite{meshulam1985maximal}. 
Note that the characterization was established outside the context of coding theory in 1985. 

In the second part of the section we
will then give two proofs of the fact that the dual of an optimal anticode is an optimal anticode~\cite{ravagnani2016rank}.
One proof uses Meshulam's characterization, the other one the connection between anticodes and MRD codes.

We start with a result that shows 
the spaces whose rank is bounded from above cannot have arbitrarily large dimension.

\begin{theorem} \label{anticodebound}
Let $\mC \le \mat$ be a code. We have $\dim(\mC) \le m \cdot \maxrk(\mC)$.
\end{theorem}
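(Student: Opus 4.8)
The plan is to prove the bound by pairing the anticode $\mC$ with a complementary MRD code and exploiting a trivial-intersection argument, rather than by a direct projection. A naive attempt would be to project $\mat$ onto $r := \maxrk(\mC)$ of its rows and hope that the restriction to $\mC$ is injective, imitating the proof of Theorem~\ref{thm:slb}. This fails in general: the column spaces of the codewords may collectively cover every $(n-r)$-dimensional subspace, so no single coordinate projection can be injective on $\mC$. The way around this is to invoke the existence of MRD codes of prescribed distance (Theorem~\ref{thm:existence}), whose dimension is exactly complementary to the target bound.

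First I would dispose of the boundary case. If $\maxrk(\mC) = n$, the claimed inequality reads $\dim(\mC) \le mn$, which holds trivially since $\mat$ has dimension $mn$. (The case $\mC = \{0\}$, where $r = 0$, is likewise immediate.) So I may assume $r := \maxrk(\mC) \le n-1$ and set $d := r+1$, which then satisfies $1 \le d \le n$.

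By Theorem~\ref{thm:existence} there exists an MRD code $\mD \le \mat$ with $\drk(\mD) = d = r+1$; being MRD, its dimension equals $m(n-d+1) = m(n-r)$. The key observation is that $\mC$ and $\mD$ cannot share a nonzero matrix: every $X \in \mC$ satisfies $\rk(X) \le r$, whereas every nonzero $Y \in \mD$ satisfies $\rk(Y) \ge d = r+1$. Hence $\mC \cap \mD = \{0\}$. Using the Grassmann identity in the ambient space, $\dim(\mC) + \dim(\mD) = \dim(\mC + \mD) + \dim(\mC \cap \mD) \le mn$, so $\dim(\mC) \le mn - m(n-r) = mr = m \cdot \maxrk(\mC)$, as desired.

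The main obstacle here is conceptual rather than computational: recognizing that the correct object to pair with an anticode of maximum rank $r$ is an MRD code of minimum distance $r+1$, whose dimension $m(n-r)$ is precisely complementary to $mr$ inside $\mat$. Once this complementarity is spotted, the rank comparison forcing trivial intersection and the dimension count are both immediate, and the only remaining care is in handling the degenerate cases $\maxrk(\mC) = n$ and $\mC = \{0\}$.
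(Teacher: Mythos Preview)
Your argument is correct and follows essentially the same route as the paper: pair $\mC$ with an MRD code $\mD$ of minimum distance $\maxrk(\mC)+1$ (supplied by Theorem~\ref{thm:existence}), observe that $\mC \cap \mD = \{0\}$ by comparing ranks, and conclude via $\dim(\mC)+\dim(\mD) \le mn$. The only difference is that you explicitly separate out the boundary cases $\maxrk(\mC)=n$ and $\mC=\{0\}$, which the paper leaves implicit.
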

\begin{proof}
Let $\mu=\maxrk(\mC)$ and take an MRD code $\mD \le \mat$ such that $\drk(\mD) =\mu+1$ and $\dim(\mD)=m(n-\mu)$. Such a $\mD$ exists by Theorem~\ref{thm:existence}. We have $\mC \cap \mD =\{0\}$, so $\dim(\mC)+\dim(\mD) \le nm$. It follows that $\dim(\mC)\le nm-m(n-\mu)=m\mu$,
as desired.
\end{proof}

\begin{definition}
A code $\mC \le \mat$ that attains the bound of the previous theorem is called an \textbf{optimal anticode}.
\end{definition}

In this section, we have conveniently established Theorem~\ref{anticodebound} by using the existence of MRD codes. It is however interesting to present a different proof based on the notion of an \textit{initial set}. This concept will also allow us to characterize optimal anticodes, and to study the rank-metric covering radius of codes in Section \ref{sec:covering}.

\begin{definition} \label{def:initial}
The \textbf{initial entry} of a matrix $X \in \mat$ is $\operatorname{in}(X)=\min\{(i,j) \mid X_{ij} \ne 0\}$ where the minimum is taken with respect to the lexicographic order, i.e., 
\begin{align*}
    (1,1) < (1,2) < \dots < (1,m) < (2,1) < \dots < (n,m).
\end{align*}
The \textbf{initial set} of a non-zero code $\mC \le \mat$ is $\operatorname{in}(\mC) = \{\operatorname{in}(X) \mid X \in \mC, \; X \ne 0\}$.
\end{definition}



The following result is easy to see and the proof is left to the reader.

\begin{lemma}
For any non-zero code $\mC \le \mat$ we have $|\operatorname{in}(\mC)| = \dim(\mC)$.
\end{lemma}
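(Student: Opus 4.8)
The plan is to establish a bijection between the initial set $\inn(\mC)$ and a basis of $\mC$ in row echelon form, thereby matching the cardinality of the former with the dimension of the latter. The key observation is that the lexicographic order on index pairs $(i,j)$ corresponds exactly to reading the entries of a matrix row by row, left to right; this lets us treat matrices in $\mat$ as ``long vectors'' of length $nm$ and reduce the claim to the familiar fact that Gaussian elimination produces a basis whose leading (pivot) positions are all distinct.

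First I would put $\mC$ into reduced row echelon form (as a space of matrices, using the long-vector identification above) to obtain a basis $\{X_1,\dots,X_k\}$ of $\mC$, where $k=\dim(\mC)$, such that the initial entries $\inn(X_1),\dots,\inn(X_k)$ are pairwise distinct and each $X_\ell$ has a zero entry in the initial positions of all the others. This immediately gives $|\inn(\mC)| \ge k$, since the $k$ distinct pivot positions all lie in $\inn(\mC)$. For the reverse inequality $|\inn(\mC)| \le k$, I would argue that every nonzero $X \in \mC$ has $\inn(X)$ equal to one of the pivot positions $\inn(X_\ell)$: writing $X = \sum_\ell \lambda_\ell X_\ell$ and letting $\ell_0$ be the smallest index (in lex order of pivots) with $\lambda_{\ell_0}\ne 0$, the echelon structure forces the first nonzero entry of $X$ to occur precisely at $\inn(X_{\ell_0})$, because all earlier basis vectors contributing have pivots strictly later and cannot cancel or precede it. Hence $\inn(\mC) = \{\inn(X_1),\dots,\inn(X_k)\}$, a set of exactly $k$ elements.

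The main obstacle — really the only point requiring care — is verifying the echelon cancellation argument cleanly: one must confirm that in reduced row echelon form, the linear combination $\sum_\ell \lambda_\ell X_\ell$ cannot produce an initial entry strictly smaller than every pivot, nor an accidental cancellation at the pivot of $X_{\ell_0}$. This follows because the reduced form guarantees that position $\inn(X_{\ell_0})$ is nonzero in $X_{\ell_0}$ and zero in every other basis matrix, so its coefficient in $X$ is exactly $\lambda_{\ell_0}\ne 0$; and any position lexicographically earlier than $\inn(X_{\ell_0})$ is zero in $X_{\ell_0}$ and, by the echelon ordering, also zero in all $X_\ell$ with $\ell > \ell_0$. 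Combining the two inequalities yields $|\inn(\mC)| = \dim(\mC)$, as claimed.
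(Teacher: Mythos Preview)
Your argument is correct. The paper itself leaves this lemma's proof to the reader, so there is no explicit proof to compare against; however, the very construction you describe (vectorize via row-concatenation, put the resulting $k \times nm$ matrix in reduced row-echelon form, and read off the pivot positions as $\inn(\mC)$) is exactly what the paper later invokes in the proof of Theorem~\ref{mesh1}, so your approach is precisely the intended one.
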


We now state and prove a theorem of Meshulam, which gives a lower bound for the maximum rank of a code in terms of the combinatorics of its initial set. We start with the following concept.

\begin{notation}
The characteristic matrix of a subset $S \subseteq [a] \times [b]$ is $\chi(S)\in \F_2^{a \times b}$ defined by 
\begin{align*}
    \chi(S)_{ij}= \begin{cases} 1 &\textnormal{ if } (i,j) \in S \\
    0 &\textnormal{ otherwise.}
    \end{cases}
\end{align*}
We denote by $\lambda(S)$ the minimum number of lines (rows or columns) needed to cover all the~$1$'s appearing in $\chi(S)$.
\end{notation}

\begin{theorem}\label{mesh1}
Let $\mC \le \mat$ be a non-zero code. There exists a matrix $X \in \mC$ whose rank is at least $\lambda(\inn(\mC))$.
\end{theorem}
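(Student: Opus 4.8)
The plan is to combine the König--Egerv\'ary theorem with a determinantal argument, the delicate point being to realise the relevant minor over the small field $\F_q$ rather than merely over its algebraic closure.

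First I would set $r=\lambda(\inn(\mC))$. By König's theorem the minimum number of lines covering the $1$'s of $\chi(\inn(\mC))$ equals the maximum number of $1$'s no two of which lie in a common row or column; hence there are positions $(i_1,j_1),\dots,(i_r,j_r)\in\inn(\mC)$ with the $i_k$ pairwise distinct and the $j_k$ pairwise distinct. After relabelling I may assume $i_1<i_2<\cdots<i_r$. For each $k$ choose $X_k\in\mC$ with $\inn(X_k)=(i_k,j_k)$. Two vanishing facts drive the argument: since $i_a<i_k$ for $a<k$, the whole row $i_a$ of $X_k$ lies lexicographically before its initial entry, so $(X_k)_{i_a,\ell}=0$ for all $\ell$ whenever $a<k$; and since $(i_a,j_a)=\inn(X_a)$, we have $(X_a)_{i_a,j_b}=0$ whenever $j_b<j_a$, while $(X_a)_{i_a,j_a}\neq 0$.

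Next I would record the matrix $N\in\F_q^{r\times r}$ with $N_{a,b}=(X_a)_{i_a,j_b}$ and show $\det N\neq 0$. A permutation $\tau$ contributes a nonzero term to $\det N$ only if $N_{a,\tau(a)}\neq 0$ for all $a$, i.e.\ $j_{\tau(a)}\ge j_a$ for all $a$; summing over $a$ and using that $\tau$ permutes the distinct values $j_1,\dots,j_r$ forces $j_{\tau(a)}=j_a$, hence $\tau=\mathrm{id}$. Thus $\det N=\prod_a N_{a,a}\neq 0$.

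Now consider the generic combination $X(c)=\sum_{k=1}^r c_kX_k$ and let $P(c)=\det M(c)$, where $M(c)$ is the submatrix of $X(c)$ on rows $i_1,\dots,i_r$ and columns $j_1,\dots,j_r$. By the first vanishing fact $M(c)_{a,b}=\sum_{k\le a}c_k(X_k)_{i_a,j_b}$, so $P$ is homogeneous of degree $r$ in $c_1,\dots,c_r$, and the coefficient of the monomial $c_1c_2\cdots c_r$ equals $\det N$: in the Leibniz expansion, picking each $c_k$ exactly once while respecting the constraint $k\le a$ forces the identity assignment $\phi(a)=a$, leaving precisely $\sum_\tau \mathrm{sgn}(\tau)\prod_a N_{a,\tau(a)}=\det N$. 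This already exhibits rank $\ge r$ over the function field, but the main obstacle is that a nonzero polynomial over $\F_q$ may vanish on all of $\F_q^r$ when $q$ is small, so one cannot simply invoke a Schwartz--Zippel bound. I would resolve this by restricting to $c\in\{0,1\}^r$: there $c_k^t=c_k$ for $t\ge 1$, so $P$ agrees with its multilinear reduction $\widetilde P$, and since $P$ is homogeneous of degree $r$ the only monomial whose support is all of $\{1,\dots,r\}$ is $c_1\cdots c_r$ itself, whence the $c_1\cdots c_r$-coefficient of $\widetilde P$ is again $\det N\neq 0$. A nonzero multilinear polynomial in $r$ variables cannot vanish on all of $\{0,1\}^r$, so there is a subset $S\subseteq\{1,\dots,r\}$ for which $P$ is nonzero at the point with $c_k=1$ for $k\in S$ and $c_k=0$ otherwise. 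Then $X=\sum_{k\in S}X_k\in\mC$ has an invertible $r\times r$ submatrix, so $\rk(X)\ge r=\lambda(\inn(\mC))$, as desired.
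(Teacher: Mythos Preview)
Your argument is correct, and it takes a genuinely different route from the paper. Both proofs start from K\"onig--Egerv\'ary to select $r$ positions $(i_1,j_1),\dots,(i_r,j_r)$ in $\inn(\mC)$ with distinct rows and distinct columns, and both then work with the corresponding $r\times r$ submatrix of a generic combination of $r$ codewords. From that point on the arguments diverge. The paper first puts a full basis of $\mC$ into row-echelon form via the $\vect$ map (so that the pivot positions are killed in all other basis matrices), extracts the $r$ matrices corresponding to the rook placement, passes to $r\times r$ matrices $D^i$, normalizes by right-multiplying with a suitable invertible $Q$, and then proves by induction on $r$ a separate lemma guaranteeing an invertible element in the resulting span. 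You instead pick \emph{arbitrary} $X_k$ with $\inn(X_k)=(i_k,j_k)$, study the determinant $P(c)=\det M(c)$ of the $r\times r$ minor directly, isolate the coefficient of $c_1\cdots c_r$ via the triangularity constraint $k\le a$, and then handle the small-field obstruction by restricting to $\{0,1\}^r$ and passing to the multilinear reduction $\widetilde P$; the homogeneity of degree $r$ ensures that the only monomial of full support is $c_1\cdots c_r$, so the reduction preserves the key coefficient $\det N$, and the standard fact that nonzero multilinear polynomials do not vanish on $\{0,1\}^r$ finishes the job.

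What each approach buys: the paper's normalization-plus-induction is more hands-on and yields a concrete recursive construction of the invertible combination, but it requires the echelon preprocessing and the auxiliary lemma. Your polynomial argument is shorter and avoids both the echelon form and the induction; it is essentially a combinatorial-Nullstellensatz style argument specialized so that the evaluation set $\{0,1\}$ suffices regardless of $q$. Your triangularity observation $\det N=\prod_a N_{a,a}$ (via the ``sum-forces-equality'' trick on the $j$-values) is a clean replacement for the paper's explicit column permutation. One small remark: your final $X=\sum_{k\in S}X_k$ lies in $\mC$ by linearity, and the $r\times r$ submatrix having nonzero determinant indeed gives $\rk(X)\ge r$; you may want to state explicitly that $\{0,1\}\subseteq\F_q$ for every prime power $q$, which is what makes the restriction legitimate even when $q=2$.
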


\begin{proof}
Define the map $\vect:\mat \to \F_q^{nm}$ that concatenates the rows of a matrix. Fix an ordered basis $\{B^1,...,B^k\}$ of $\mC$ and form the matrix having 
$\vect(B^1),...,\vect(B^k)$ as rows. Put the matrix in reduced row-echelon form, obtaining a new matrix with rows $g^1,...,g^k$. Let
$C^i=\vect^{-1}(g^i)$ for $i \in \{1,...,k\}$. Then 
$\{C^1,...,C^k\}$ is a basis of $\mC$ with the following properties:
\begin{itemize}
    \item $\inn(C^1) < \inn(C^2) < \cdots < \inn(C^k)$ in the lexicographic order;
    \item $\inn(\mC)=\{\inn(C^i) \mid 1 \le i \le k\}$;
    \item For all $1 \le i,j \le k$ with $i \neq j$, if $\inn(C^i)=(a,b)$ then $C^j_{a,b}=0$. 
\end{itemize}
By 
the famous
K\"onig-Egev\'ary Theorem,
$\lambda(\inn(\mC))$ is equal to the largest placement of non-attacking rooks on $\inn(\mC)$; see
\cite[Chapter~7]{riordan2014introduction}. Let this number be $r$ and fix such a placement $P \subseteq \inn(\mC)$.
Let $P=\{P_1,...,P_r\}$, with
$P_1 < P_2 < \cdots < P_r$ in the lexicographic order. For $i \in \{1,...,r\}$, let $p(i) \in \{1,...,k\}$ be the unique integer with
$\inn(C^{p(i)})=P_i$. We have $p(1) < p(2) < \cdots < p(r)$.
For all $1 \le i \le r$, let $D^i$ be the matrix obtained from $C^{p(i)}$ by removing 
any row or column that is not indexed by any element of~$P$. 
The $D^i$'s span a code $\mD \le \F_q^{r \times r}$ of dimension $r$ and it suffices to prove that $\mD$ contains an invertible matrix. By permuting the columns of all $D^i$'s simultaneously, we shall assume without loss of generality that $\inn(D^i)=(i,i)$. Note that the rows do not need to be permuted as $p(1) < p(2) < \cdots < p(r)$.
In particular, for all $1 \le i \le r$ the following holds: The first $i-1$ rows of $D^i$ are zero and $D^i_{i,j}=0$ for all $1 \le j <i$. 

Finally, denote by $Q$ the inverse of the square matrix whose $i$th row is the $i$th row of~$D^i$ for all $1 \le i \le r$ (the latter matrix is upper triangular with all 1's on the diagonal). Let $E^i=D^iQ$ for all $1 \le i \le r$. It suffices to show that $\langle E^1,...,E^r \rangle$
contains a matrix of rank $r$. For this note that
the following holds for all $i \in \{1,...,r\}$: The first $i-1$ rows of $E^i$ are zero and the $i$th row of~$E^i$ is the standard basis vector $e_i$. Therefore, the desired result follows from the following lemma.
\end{proof}

\begin{lemma}
Let $r \ge 1$ be an integer and let 
$E^1,...,E^r \in \F_q^{r \times r}$
be matrices with the following property: 
For all $i \in \{1,...,r\}$, the first $i-1$ rows of $E^i$ are zero and the $i$th row of~$E^i$ is the standard basis vector $e_i$. Then $\langle E^1,...,E^r \rangle$ contains an invertible matrix.
\end{lemma}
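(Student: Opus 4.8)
The plan is to prove the statement by induction on $r$, using the rigid structure of the $E^i$ to reduce, at each step, the existence of an invertible element to a one-variable non-vanishing argument. Write a general element of the span as $M(c)=\sum_{i=1}^r c_i E^i$ with $c=(c_1,\dots,c_r)\in\F_q^r$, so the task is to find $c$ with $\det M(c)\neq 0$.

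For the base case $r=1$ the hypothesis forces $E^1=(1)$, which is invertible. For the inductive step, let $\widetilde{E}^i\in\F_q^{(r-1)\times(r-1)}$ denote the leading principal $(r-1)\times(r-1)$ submatrix of $E^i$, for $1\le i\le r-1$. These again satisfy the hypothesis of the lemma in size $r-1$: the first $i-1$ rows of $\widetilde{E}^i$ vanish, and since the $i$th row of $E^i$ is $e_i$ with $i\le r-1$, its truncation to the first $r-1$ coordinates is again the standard basis vector $e_i\in\F_q^{r-1}$. By the inductive hypothesis there are scalars $c_1,\dots,c_{r-1}\in\F_q$ for which $\widetilde{M}:=\sum_{i=1}^{r-1} c_i\widetilde{E}^i$ is invertible.

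The key observation is that $E^r$ is extremely constrained: its first $r-1$ rows are zero and its $r$th row is $e_r$, so $E^r$ is the elementary matrix with a single $1$ in position $(r,r)$. Consequently, for any choice of $c_r$, the matrix $M(c)$ agrees with $\sum_{i=1}^{r-1}c_iE^i$ on its leading $(r-1)\times(r-1)$ block, which is exactly $\widetilde{M}$, and $c_r$ enters $M(c)$ only through the single entry $(r,r)$. Expanding $\det M(c)$ along the last column (or invoking the Schur complement), I find that $\det M(c)$ is an affine polynomial in $c_r$ whose coefficient of $c_r$ is the $(r,r)$-cofactor of $M(c)$, namely $\det\widetilde{M}\neq 0$. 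A nonconstant affine polynomial over $\F_q$ has at most one root, and $q\ge 2$, so some $c_r\in\F_q$ yields $\det M(c)\neq 0$; the corresponding $M(c)$ is the desired invertible element of $\langle E^1,\dots,E^r \rangle$.

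The only genuine subtlety—and the reason for this inductive, one-variable strategy—is that over a finite field one cannot simply argue that $\det M(c)$ is a nonzero polynomial in $c_1,\dots,c_r$ and conclude it has a nonvanishing evaluation (e.g.\ $x^q-x$ vanishes on all of $\F_q$). The induction sidesteps this by isolating a single variable $c_r$ at each stage, where the relevant polynomial genuinely has degree one and hence provably attains a nonzero value. I expect the crux to be verifying that $E^r$ is the elementary matrix $\mathrm{e}_{rr}$ and that the leading block of $M(c)$ is untouched by $c_r$; everything else is a routine cofactor computation. Alternatively, one could finish in a single step by computing the coefficient of the top-degree monomial $c_1\cdots c_r$ in $\det M(c)$—it equals $1$, since the vanishing-row conditions force the identity permutation in the Leibniz expansion—and then applying the Combinatorial Nullstellensatz.
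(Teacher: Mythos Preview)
Your proof is correct and follows essentially the same inductive strategy as the paper: restrict to the leading $(r-1)\times(r-1)$ block of $E^1,\dots,E^{r-1}$, apply induction to obtain $c_1,\dots,c_{r-1}$ with invertible leading block $\widetilde M$, and then vary the coefficient of $E^r$. The only cosmetic difference is in the final step: the paper simply evaluates at two points, observing that $\det(Y)$ and $\det(E^r+Y)=\det(\widetilde M)+\det(Y)$ cannot both vanish since $\det(\widetilde M)\neq 0$, whereas you phrase the same fact as ``$\det M(c)$ is affine in $c_r$ with nonzero linear coefficient $\det\widetilde M$, hence has at most one root.'' Your Combinatorial Nullstellensatz remark is a genuinely distinct one-shot alternative, but the main argument matches the paper's.
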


\begin{proof}
We proceed by induction on $r$. If $r=1$, then the result is trivial. If $r \ge 2$, then delete the last row and column from each of the matrices
$E^1,...,E^{r-1}$, obtaining matrices $F^1,...,F^{r-1}$. By the induction hypothesis,
$\langle F^1,...,F^{r-1} \rangle$ contains a matrix of rank $r-1$, say $X=\sum_{i=1}^{r-1}\lambda_iF^i$. 
Let $Y=\sum_{i=1}^{r-1}\lambda_iE^i$.
We compute the determinant of $E^r+Y$ by expanding along the bottom row:
$$\det(E^r+Y) =\det(X) + \det(Y).$$
Since $\det(X) \neq 0$, at least one among 
$E^r+Y$ and $Y$ is invertible. This concludes the proof.
\end{proof}

We are now ready to state and prove the main result of this section, which was  established in~\cite{meshulam1985maximal} (note that the terminology in~\cite{meshulam1985maximal} is different from the one we use, but the result is the same).

\begin{theorem} \label{mesh_thm}
Let $\mC \le \mat$ be an optimal anticode. Let $\mu=\maxrk(\mC)$.
\begin{itemize}
    \item[(i)] If $m > n$, then 
    $\mC=\{X \in \mat \mid \colsp(X) \subseteq U\}$
    for a subspace $U \le \F_q^n$ of dimension~$\mu$.
    \item[(ii)] If $m=n$, then either 
    $\mC=\{X \in \mat \mid \colsp(X) \subseteq U\}$ for
     a subspace $U \le \F_q^n$ of dimension $\mu$, or $\mC=\{X \in \mat \mid \rowsp(X)\subseteq U\}$ for a subspace $U \le \F_q^n = \F_q^m$ of dimension~$\mu$.
\end{itemize}
\end{theorem}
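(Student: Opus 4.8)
The plan is to extract coarse combinatorial information about $\mC$ from Theorem~\ref{mesh1}, and then to rigidify it into the exact structure using the anticode hypothesis $\dim(\mC)=m\mu$. First I would record the two numerical facts available for an optimal anticode: since $\dim(\mC)=m\mu$, the lemma identifying $|\inn(\mC)|$ with $\dim(\mC)$ gives $|\inn(\mC)|=m\mu$; and Theorem~\ref{mesh1} produces a matrix of rank at least $\lambda(\inn(\mC))$, so $\lambda(\inn(\mC))\le\mu$. Writing $S=\inn(\mC)\subseteq[n]\times[m]$ and covering $S$ by $a$ rows and $b$ columns with $a+b=\lambda(S)\le\mu$, the union of these lines contains at most $am+bn-ab$ cells, whence
\[
m\mu=|S|\le am+bn-ab\le am+bm-ab=(a+b)m-ab\le m\mu-ab.
\]
This forces $ab=0$, $a+b=\mu$, and equality everywhere. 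Hence either $b=0$ and $S$ is exactly the $m\mu$ cells of $\mu$ full rows, or $a=0$ and $S$ is exactly the $\mu n$ cells of $\mu$ full columns; in the second case $m\mu=\mu n$ forces $m=n$. So when $m>n$ the initial set is a union of $\mu$ full rows, and when $m=n$ it is a union of $\mu$ full rows or of $\mu$ full columns.

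Next I would upgrade ``$S$ is the union of $\mu$ full rows $R$'' into column-space form. Let $\pi_R$ zero out all rows outside $R$. If a nonzero $X\in\mC$ had all its $R$-rows zero, its first nonzero entry would lie outside $R\times[m]$, contradicting $\inn(\mC)=R\times[m]$; hence $\pi_R$ is injective on $\mC$, and for dimension reasons it is a bijection of $\mC$ onto $\mat(U_0)$ with $U_0=\langle e_r\mid r\in R\rangle$. Thus $\mC$ is the graph of a linear rule assigning to the $R$-rows of a matrix all of its remaining rows, and every $X\in\mC$ realizes an arbitrary choice of $R$-rows. For each $s\notin R$ write row $s$ of $X$ as $\sum_{r\in R}T_{s,r}(\text{row }r)$ with linear maps $T_{s,r}\colon\F_q^m\to\F_q^m$. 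Choosing $X$ whose $R$-rows are linearly independent (possible since $\mu\le n\le m$) and using $\rk(X)\le\mu$, the $\mu$ independent $R$-rows already span the row space, so every $R^{c}$-row lies in their span; a scaling/genericity argument (rescaling one $R$-row, then varying the others, which needs $m>\mu$) shows each $T_{s,r}$ sends every vector into its own span, i.e.\ $T_{s,r}=\gamma_{s,r}\,\mathrm{Id}$ is scalar. Therefore row $s$ of every $X\in\mC$ equals $\sum_{r\in R}\gamma_{s,r}(\text{row }r)$ with constants independent of $X$, so $\mC=G\cdot\mat(U_0)=\mat(GU_0)$ for an invertible $G$ encoding these combinations. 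This is exactly $\{X\in\mat\mid\colsp(X)\subseteq U\}$ with $U=GU_0$ of dimension $\mu$, proving part~(i) and the first alternative of part~(ii).

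For $m=n$ and initial set equal to $\mu$ full columns, I would run the column-transposed version of the same argument: projection onto those $\mu$ columns is injective on $\mC$, each remaining column is a fixed linear combination of the chosen ones, and the analogous computation gives $\mC=\mat(V_0)\cdot G'=\{X\in\mat\mid\rowsp(X)\subseteq V_0G'\}$, the row-space form. Together with the previous paragraph this yields the dichotomy of part~(ii). The degenerate boundary $m=\mu$, which (since $m\ge n\ge\mu$) forces $n=m=\mu$ and hence $\mC=\mat$, is handled directly with $U=\F_q^n$.

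The main obstacle is the rigidification step: promoting the purely combinatorial statement ``$\inn(\mC)$ is $\mu$ full rows'' to the algebraic statement that the off-$R$ rows are \emph{fixed scalar} combinations of the $R$-rows. The two delicate points there are (a) passing from ``the $R^c$-row lies in the span of the $R$-rows'' to ``each $T_{s,r}$ is a homothety'', which requires exploiting $m>\mu$ so that intersecting the spans of many independent completions collapses to a single line, and (b) small fields: over $\F_2$ the scaling trick has no scalar $\lambda\notin\{0,1\}$, so one must instead use elementary row additions (or argue directly that a linear map with every vector an eigenvector is scalar). The combinatorial reduction before, and the dimension count after, are routine.
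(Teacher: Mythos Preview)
Your overall strategy is sound and, in the rigidification half, genuinely different from the paper's. Both you and the paper begin identically: from Theorem~\ref{mesh1} and $|\inn(\mC)|=m\mu$ one deduces that $\inn(\mC)$ consists of exactly $\mu$ full rows (or, only when $m=n$, of $\mu$ full columns). Your inclusion--exclusion count $m\mu\le am+bn-ab\le(a+b)m-ab\le m\mu-ab$ is actually crisper than the paper's one-line assertion of the same fact.

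Where the two arguments diverge is in upgrading ``$\inn(\mC)=R\times[m]$'' to the column-space description. The paper works with the reduced-echelon basis $\{C^{(a,b)}\}_{(a,b)\in[\mu]\times[m]}$ and runs two determinant computations: first, using the sets $P=\{(1,1),\dots,(\mu-2,\mu-2),(\mu-1,\mu),(\mu,\mu-1)\}$ and $P'=P\cup\{(\mu,\mu)\}$, it expands $\det\bigl(\sum_{p}B^{p}\bigr)$ along the last column to show that each $C^{(a,b)}$ vanishes below row~$\mu$ outside column~$b$; second, with $P=\{(1,1),(1,2),(2,3),\dots,(\mu,\mu+1)\}$ and an expansion along the last row, it shows that the surviving column of $C^{(a,b)}$ is independent of~$b$. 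Your route instead views $\mC$ as the graph of a linear map $\Phi\colon(\F_q^m)^{\mu}\to(\F_q^m)^{n-\mu}$, writes each output row as $\sum_{r\in R}T_{s,r}(\text{row }r)$, and argues that the anticode condition forces every $T_{s,r}$ to be a scalar. This is more conceptual and avoids the bookkeeping of the determinant expansions; the paper's computation, in return, is uniform in $q$ with no separate small-field discussion.

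Two comments on your sketch of the key step. First, the $q=2$ worry you flag is not serious: for $\mu\ge 2$, replacing the basis $(a_1,\dots,a_\mu)$ of a $\mu$-space $W$ by $(a_1+a_2,a_2,\dots,a_\mu)$ and subtracting gives $T_1(a_2)\in W$, and by symmetry $T_i(v)\in W$ for every $i$ and every nonzero $v\in W$; intersecting over all $\mu$-dimensional $W\ni v$ (which uses $m>\mu$) yields $T_i(v)\in\langle v\rangle$, so $T_i$ is scalar. No nontrivial scalar $\lambda$ is needed. Second, the boundary you should exclude at the outset is $\mu=n$ (giving $\mC=\mat$), not $\mu=m$; once $\mu<n$ you automatically have $\mu<m$, so your hypothesis $m>\mu$ is available throughout the main argument.
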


\begin{proof}
The result is clear if $\mu=n$ and we therefore assume $\mu <n$. By Theorem~\ref{mesh1}, we have $\lambda(\inn(\mC)) \le \mu$. Let $\{C^1,...,C^k\}$ be a basis of $\mC$ as in the proof of Theorem~\ref{mesh1}.
We have $k=|\inn(\mC)|=\mu m$ and thus $\inn(\mC)$ consists of $\mu$ horizontal lines (or, only in the case $m=n$, of $\mu$ vertical lines). In the sequel, we will assume that $\inn(\mC)$ consists of~$\mu$ horizontal lines, since the case of columns can be treated similarly.
By permuting the rows of the $C^i$'s (the same permutation for all matrices) and by performing Gaussian elimination again as in the proof of Theorem~\ref{mesh1}, we shall assume without loss of generality that:
\begin{itemize}
    \item For all $(a,b) \in [\mu] \times [m]$ we have $(a,b)=\inn(C^{am+b})$;
    \item $\inn(\mC)=\{\inn(C^i) \mid 1 \le i \le k\}$;
    \item For all $1 \le i,j \le k$ with $i \neq j$, if $\inn(C^i)=(a,b)$ then $C^j_{a,b}=0$. 
\end{itemize}
In the sequel, for $(a,b) \in [\mu] \times [m]$ we  let
$C^{(a,b)}=C^{am+b}$ and we denote by $M[i,j]$ the entry $(i,j)$ of a matrix $M$. We claim that  $C^{(a,b)}=0$ except possibly for the $b$th column. As this claim is invariant under row and column permutation, it suffices to prove that  
$C^{(\mu,\mu)}[\mu+1,\mu+1]=0$. For $(a,b) \in [\mu] \times [m]$, let
$B^{(a,b)}$ be the north-west $(\mu+1) \times (\mu+1)$ submatrix of
$C^{(a,b)}$. Furthermore, for $(i,j) \in [\mu] \times [\mu]$ denote by $E^{(i,j)}$ the $\mu \times \mu$ matrix having a 1 in position~$(i,j)$ and 0's elsewhere.

Define the set $P=\{(1,1), (2,2),..., (\mu-2,\mu-2), (\mu-1,\mu), (\mu,\mu-1)\}$ and 
$P'=P \cup \{(\mu,\mu)\}$.
In the case where $\mu=1$, we let $P=P'=\{(1,1)\}$. Note that for any $p \in [\mu] \times [\mu]$ and any $1 \le i \le \mu$ we have $B^{p}[i,\mu+1]=0$ and thus expanding along the $(\mu+1)$th column of $\sum_{p \in P} B^p$ yields
\begin{align*}
    0 &= \det\left( \sum_{p \in P} B^p \right) = \det \left( \sum_{p \in P} E^p \right) \cdot \left( \sum_{p \in P} B^p[\mu+1,\mu+1] \right), \\
    0 &= \det\left( \sum_{p \in P'} B^p \right) = \det \left( \sum_{p \in P'} E^p \right) \cdot \left( \sum_{p \in P'} B^p[\mu+1,\mu+1] \right),
\end{align*}
where the first equality above follows from the fact that $\mC$ does not contain a matrix of rank greater than $\mu$.
Since 
$$\det \left( \sum_{p \in P} E^p \right) \neq 0 \neq \det \left( \sum_{p \in P'} E^p \right),$$
it must be that $B^{(\mu,\mu)}[\mu+1,\mu+1]=0$, as desired. 

In order to finish the proof we show that for every $1 \le a \le \mu$, the $b$th column of $C^{(a,b)}$ is the same as the $\tilde{b}$th column of $C^{(a,\tilde{b})}$ for any $1 \le b,\tilde{b} \le m$. Since this claim is again invariant under row and column permutations, it is enough to show that $C^{(1,1)}[\mu+1,1] = C^{(1,2)}[\mu+1,2]$. Define the set $P=\{(1,1),(1,2),(2,3),(3,4), \dots, (\mu, \mu+1)\}$ and note that we have $C^{(1,1)}[\mu+1,1]=B^{(1,1)}[\mu+1,1]$ and $C^{(1,2)}[\mu+1,2]=B^{(1,2)}[\mu+1,2]$. Since $\mC$ does not contain a matrix of rank greater than 
$\mu$ we have 
\begin{align*}
        0 = \det\left(\sum_{p \in P}B^p\right) = (-1)^r\left(B^{(1,1)}[\mu+1,1]-B^{(1,2)}[\mu+1,2]\right),
\end{align*}
where the latter equality is obtained by expanding along the $(\mu+1)$th row of $\sum_{p \in P}B^p$.
It follows that $B^{(1,1)}[\mu+1,1]=B^{(1,2)}[\mu+1,2]$. Now for every $1 \le a \le \mu$, we have
\begin{align*}
    \colsp\left(C^{(a,b)}\right) = \colsp\left(C^{(a,\tilde{b})}\right) \quad \textnormal{for all $1 \le b,\tilde{b} \le m$.}
\end{align*}
Therefore if we set $$U=\colsp\left(C^{(1,i_1)}\right)  \oplus \colsp\left(C^{(2,i_2)}\right)\oplus \dots \oplus \colsp\left(C^{(\mu,i_{\mu})}\right),$$ where $i_{\ell} \in \{1, \dots, m\}$ for all $\ell \in \{1,\dots, \mu\}$ are arbitrary, then clearly $\dim(U)=\mu$ and we have $\mC=\{X \in \mat \mid \colsp(X) \subseteq U\}$.
\end{proof}

The classification of optimal anticodes established in the previous result allows us to show that the dual of an optimal anticode is an optimal anticode. This result was first proved via a different argument~\cite{ravagnani2016rank}, which we survey below.

\begin{proposition} \label{doa}
The dual of an optimal anticode is an optimal anticode.
\end{proposition}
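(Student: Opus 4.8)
The plan is to use Meshulam's classification of optimal anticodes (Theorem~\ref{mesh_thm}) to reduce the statement to a direct computation with the duals of the two explicit families appearing there. Let $\mC \le \mat$ be an optimal anticode with $\mu = \maxrk(\mC)$, so $\dim(\mC) = m\mu$. The trivial cases $\mu \in \{0, n\}$ are immediate, so I assume $0 < \mu < n$. I first treat the case $m > n$ (and the ``column'' case when $m = n$), where Theorem~\ref{mesh_thm} gives $\mC = \{X \in \mat \mid \colsp(X) \subseteq U\} = \mat(U)$ for some $U \le \F_q^n$ of dimension $\mu$. But this is exactly the shortened full space, and Part~\ref{dimensioni2} of Lemma~\ref{dimensioni} tells us that $\mat(U)^\perp = \mat(U^*)$, where $\dim(U^*) = n - \mu$.

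The key observation is then that $\mat(U^*)$ is itself an optimal anticode: by Part~\ref{dimensioni1} of Lemma~\ref{dimensioni} we have $\dim(\mat(U^*)) = m(n-\mu)$, and every matrix in $\mat(U^*)$ has column space inside an $(n-\mu)$-dimensional space, so its maximum rank is exactly $n - \mu$. Hence $\dim(\mC^\perp) = m(n-\mu) = m \cdot \maxrk(\mC^\perp)$, which is precisely the bound of Theorem~\ref{anticodebound} attained with equality. This settles the first case.

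For the remaining ``row'' case when $m = n$, Theorem~\ref{mesh_thm} gives $\mC = \{X \in \mat \mid \rowsp(X) \subseteq U\}$ for a subspace $U \le \F_q^n = \F_q^m$ of dimension $\mu$. Here I would note that taking the transpose is an isometry of $(\mat, \drk)$ that interchanges column space and row space, and that it is compatible with the trace product in the sense that $\langle X, Y \rangle = \langle X^T, Y^T \rangle$ (since $\Tr(XY^T) = \Tr((X^T)(Y^T)^T)$). Thus transposing reduces the row case to the column case already handled: the transpose of $\mC$ is a column-type optimal anticode, its dual is an optimal anticode by the above, and transposing back shows $\mC^\perp$ is an optimal anticode as well.

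The main obstacle I anticipate is not conceptual but bookkeeping: one must check carefully that the maximum rank of $\mat(U^*)$ is genuinely $n - \mu$ rather than merely bounded by it, which requires exhibiting a full-rank matrix whose columns span all of $U^*$ — straightforward, since any $\F_q$-basis of $U^*$ placed into the columns of a matrix (padded with zeros to size $n \times m$) gives such a matrix because $m \ge n > n - \mu$. Beyond that, the argument is a clean assembly of Lemma~\ref{dimensioni} and Theorem~\ref{anticodebound}. It is worth remarking that this proof is the one promised in the text that relies on Meshulam's characterization; the alternative proof via the MRD connection would instead exploit the complementarity between optimal anticodes and MRD codes established earlier in the section.
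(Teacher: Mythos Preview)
Your proposal is correct and follows essentially the same approach as the paper's first proof: use Meshulam's classification (Theorem~\ref{mesh_thm}) to reduce to the explicit column-type anticodes $\mat(U)$, and then invoke Lemma~\ref{dimensioni} to identify $\mat(U)^\perp=\mat(U^*)$ and recognize it as an optimal anticode. The only cosmetic difference is that for the row case when $m=n$ the paper argues directly that $\{X \mid \rowsp(X)\subseteq U\}^\perp=\{X \mid \rowsp(X)\subseteq U^*\}$, whereas you reduce to the column case via the transpose isometry; both routes are immediate.
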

\begin{proof}
If $\mC=\{X \in \mat \mid \colsp(X) \subseteq U\}$ for some $U \le \F_q^n$, then $\mC^{\perp}=\{X \in \mat \mid \colsp(X) \subseteq U^{*}\}$ where $U^*$ is the orthogonal of $U$ with respect to standard inner product of $\F_q^n$. Note that in the previous equality, one inclusion is easy to show and the other follows from dimension arguments. Therefore $\mC^{\perp}$ is an optimal anticode again. The same proof applies to show that $$\{X \in \mat \mid \rowsp(X) \subseteq U\}^{\perp} = \{X \in \mat \mid \rowsp(X) \subseteq U^*\},$$ if $n=m$. This concludes the proof.
\end{proof}

Interestingly, there is another proof of Proposition~\ref{doa} that does not use any result
about the structure of optimal anticodes. The key ingredients in the argument are the existence of MRD codes and Theorem~\ref{++}.

\begin{proof}[Different proof of Proposition~\ref{doa}]
Suppose that $\mC \le \mat$ is an optimal anticode having
$\maxrk(\mC)=\mu$. We will show that $\mC^\perp$ is an optimal anticode with $\maxrk(\mC^\perp)=n-\mu$. Towards a contradiction, suppose that $\mC^\perp$ contains a matrix $R$ of rank $r \ge n-\mu+1>0$. Let $\mD \le \mat$ be an MRD code of minimum distance $n-\mu+1$. It exists by
Theorem~\ref{thm:existence}. Moreover, by the 
very definition of MRD code we have
$\dim(\mD)=m\mu$. By Theorem~\ref{++}, the code $\mD$ contains a matrix $Y$ of rank $r$. There exist invertible matrices
$A$ and $B$ of sizes $n \times n$ and $m \times m$, respectively, such that $AYB=R$. Let
$\mD_1:=A \mD B=\{AXB \mid X \in \mD\}$. Then $\mD_1$ is an MRD code with the same parameters as $\mD$ and such that
$R \in \mD_1$. We then have $\dim(\mC^\perp \cap \mD_1) \ge 1$, and therefore $\dim(\mC+\mD_1^\perp) \le mn-1$.
The latter implies $\dim(\mC \cap \mD_1^\perp) \ge -mn+1+\dim(\mC)+\dim(\mD_1^\perp)=1$. By Theorem~\ref{dualofis}, 
$\mD_1^\perp$ is an MRD code of minimum distance $\mu+1$. This contradicts the fact that
$\maxrk(\mC)=\mu$, concluding the proof.
\end{proof}

\section{Covering Radius and External Distance}
\label{sec:covering}
\label{sec:5}

In this section, we focus on the covering radius of rank-metric codes. The goal is to prove three upper bounds on the covering radius, namely the \emph{dual distance bound}, the \emph{external distance bound} and the \emph{initial set bound}. The first two bounds have an analogue in the Hamming metric, whereas the external distance bound is somewhat specific to the rank-metric setup. The treatment of this topic is mainly inspired by~\cite{byrne2017covering}.
Another foundational reference for the packing and covering properties of rank-metric codes is~\cite{gadouleau2008packing}.

We start by showing that the covering radius is related to a parameter of the dual code in a rather surprising way. Recall that the covering radius of a code $\mC \le \mat$ is the parameter $$\rho(\mC) = \min\{i \mid \textnormal{ for all } X \in \mat \; \exists \, Y  \in \mC  \textnormal{ with } \drk(X,Y) \le i\}.$$ 
A more intuitive definition of the covering radius is given as follows: For a code $\mC \le \mat$ consider the union of balls of fixed radius centered at each codeword. The smallest value the radius can take provided that the union covers the space $\mat$ is the covering radius of the code $\mC$. This explains where the name ``covering radius" comes from.

\begin{remark}
\begin{itemize}
\item[(i)] Since $n \leq m$, a trivial upper bound for the covering radius of a code $\mC \le \mat$ is $\rho(\mC) \leq n$.
\item[(ii)] If $\mC$ is the whole ambient space, then the covering radius of $\mC$ is 0.
\end{itemize}
\end{remark}

Suppose that $\mC \subset \mC'$. Then by definition we have $\rho(\mC') < \rho(\mC).$ Moreover, if $X' \in \mC' \setminus \mC$ then there exists $X \in \mC$ such that $\drk(X,X') \le \rho(\mC).$ Since $\mC$ is contained in $\mC'$, $X$ is also contained in $\mC'.$ Thus, we have $\drk(\mC') \le \rho(\mC)$. 

It is natural to ask about the connection between the covering radius and minimum distance of a rank-metric code.

\begin{proposition}
Let $\mC \le \mat$ be a code. We have 
$$\rho(\mC) \ge \left\lceil \frac{d(\mC)-1}{2} \right\rceil.$$
\end{proposition}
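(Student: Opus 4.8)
The plan is to show that the covering radius $\rho(\mC)$ is large enough to guarantee that the balls of that radius centered at codewords can cover the whole space, and to exploit the fact that two distinct codewords are at rank distance at least $d(\mC)$. The statement is the rank-metric analogue of the classical fact (true in any metric coming from a code) that the covering radius is bounded below by roughly half the minimum distance, so I expect a packing-style argument: if the covering radius were too small, then a matrix lying ``exactly in the middle'' between two codewords could not be properly covered.

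First I would set $d = d(\mC)$ and assume $\mC \neq \{0\}$ (otherwise the bound is vacuous or trivial). I would pick two distinct codewords; by translation invariance of the rank distance it is harmless to take one of them to be the zero matrix, so fix a nonzero $C \in \mC$ with $\rk(C) = d$ (such a $C$ exists by the definition of minimum distance). The idea is to build a ``midpoint'' matrix $X \in \mat$ whose rank distance to $0$ is $\lceil d/2 \rceil$ and whose rank distance to $C$ is $\lfloor d/2 \rfloor$, i.e.\ a matrix sitting halfway along a ``geodesic'' from $0$ to $C$. Concretely, writing $C = AB$ in a rank decomposition or using a basis of $\colsp(C)$, I would take $X$ to agree with $C$ on a sub-part of rank $\lfloor d/2 \rfloor$ and be of rank $\lceil d/2 \rceil$ itself, arranged so that $\rk(X) = \lceil d/2 \rceil$ and $\rk(C - X) = \lfloor d/2 \rfloor$. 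The construction should use the triangular inequality in both directions: since $\rk(X) + \rk(C-X) \ge \rk(C) = d$, any such midpoint realizes the extreme case of the triangle inequality.

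Next I would argue by contradiction. Suppose $\rho(\mC) < \lceil (d-1)/2 \rceil$, equivalently $\rho(\mC) \le \lceil (d-1)/2 \rceil - 1$. Then the midpoint $X$ is within rank distance $\rho(\mC)$ of some codeword $Y \in \mC$. I would then estimate $d(0,Y) = \rk(Y)$ and $d(C,Y) = \rk(C-Y)$ using the triangle inequality: $\rk(Y) \le \rk(X) + \rk(X - Y) \le \lceil d/2 \rceil + \rho(\mC)$ and similarly $\rk(C - Y) \le \lfloor d/2 \rfloor + \rho(\mC)$. The goal is to show that with $\rho(\mC)$ too small, $Y$ would have to coincide with either $0$ or $C$, yet its distance to the \emph{other} of those two codewords would then fall strictly below $d$, contradicting that $d$ is the minimum distance. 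A careful bookkeeping of the floor/ceiling against $\lceil (d-1)/2 \rceil$ should close the gap.

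The main obstacle will be the explicit construction of the midpoint matrix $X$ with the prescribed ranks $\rk(X) = \lceil d/2 \rceil$ and $\rk(C - X) = \lfloor d/2 \rfloor$, together with verifying that these two values genuinely force the desired contradiction for every parity of $d$. In the Hamming metric this midpoint is immediate coordinate-wise, but in the rank metric one must produce the interpolating matrix using the column-space decomposition underlying inequality~\eqref{eq:col}; getting the ranks to split exactly as $\lceil d/2 \rceil$ and $\lfloor d/2 \rfloor$ requires choosing $X$ so that $\colsp(X)$ and $\colsp(C - X)$ are complementary inside $\colsp(C)$. Once this geodesic midpoint is in hand, the contradiction via the triangle inequality is routine, so I would present the construction carefully and treat the distance estimates briefly.
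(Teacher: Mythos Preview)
Your approach is correct and is genuinely different from the paper's. The paper first records the easy packing inequality $\rho(\mC) \ge \lfloor (d-1)/2 \rfloor$ and then upgrades the floor to a ceiling by invoking the \emph{non-existence of perfect rank-metric codes}, an external result cited from the literature. Your midpoint construction sidesteps this entirely: once $X$ satisfies $\rk(X) = \lceil d/2 \rceil$ and $\rk(C-X) = \lfloor d/2 \rfloor$ (which is immediate from a rank factorization $C = AB$ by keeping only the first $\lceil d/2\rceil$ columns of $A$ and rows of $B$), the assumption $\rho(\mC) \le \lceil (d-1)/2\rceil - 1 = \lfloor d/2\rfloor - 1$ combined with the triangle inequality yields $\rk(Y) \le d-1$ and $\rk(C-Y) \le d-1$ simultaneously. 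By the definition of minimum distance, the first forces $Y=0$ and the second forces $Y=C$, which is impossible since $C \ne 0$. So your argument is fully self-contained, whereas the paper's is shorter but leans on a nontrivial cited theorem.

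One small cleanup: your sentence ``$Y$ would have to coincide with either $0$ or $C$, yet its distance to the other \ldots\ would fall strictly below $d$'' misdescribes the endgame. The contradiction is not that one of the two distances drops below $d$; it is that \emph{both} do, forcing $Y$ to equal $0$ and $C$ at the same time. Phrase it that way and the parity bookkeeping you were worried about disappears, since $\lceil d/2\rceil + \lfloor d/2\rfloor - 1 = d-1$ and $2\lfloor d/2\rfloor - 1 \le d-1$ hold uniformly.
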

\begin{proof}
It is not difficult to see that
$$\rho(\mC) \geq \left\lfloor \frac{\drk(\mC)-1}{2} \right\rfloor.$$ Therefore in order to prove the proposition it suffices to show that equality cannot hold. This follows from the non-existence of perfect rank-metric codes; see~\cite{loidreau2008}. 
\end{proof}

The next definition will be needed later to derive an upper bound on the covering radius. 

\begin{definition}
Given a rank-metric code $\mC \le \mat$ and a matrix $M \in \mat$, the \textbf{translate} of $\mC$ by $M$ is defined as the code $\mC + M =\{ A + M \mid A \in \mC \}.$
\end{definition}
It is very natural to study the rank distribution of the translates of $\mC.$ Note that if we know the rank distributions of all the translates of $\mC$, then we can form a set $S$ which consists of the minimum rank (weight) of each translate of the code $\mC$. Since the covering radius of $\mC$ is just the maximum distance of $\mC$ to any matrix in $\mat$, we would then have  $\max(S)=\rho(\mC).$ 
We start by giving a formula for the size of the $U$-shortening of any translate of a code $\mC \le \mat$.

\begin{lemma}\label{lem:trans1}
Let $\mC \le \mat$ be a code and let $U \le \F_q^n$ be of dimension $u$. If $|\mC(U)|=|\mC|/q^{m(n-u)}$, then we have 
$$|(\mC+M)(U)|= \frac{|\mC|}{q^{m(n-u)}}$$
for all $M \in \mat$.
\end{lemma}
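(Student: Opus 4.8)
The plan is to view the $U$-shortening as an intersection with the fixed subspace $\mat(U)$ of all matrices whose column space lies in $U$, and to treat $(\mC+M)(U)$ as the intersection of the \emph{affine} subspace $\mC+M$ with $\mat(U)$. Recalling from the proof of Lemma~\ref{dimensioni} that $\mat(U)=\{X \in \mat \mid \colsp(X)\le U\}$, I would first rewrite $\mC(U)=\mC\cap\mat(U)$ and, unwinding the definition of shortening applied to the coset, $(\mC+M)(U)=(\mC+M)\cap\mat(U)$.

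The key structural observation is that the intersection of a coset with a subspace is either empty or a coset of the intersection of the two underlying subspaces. Concretely, if $(\mC+M)\cap\mat(U)$ is nonempty, fix some $Z$ in it; then for any other element $Z'$, the difference $Z'-Z$ lies in $\mC$ (both lie in the coset $\mC+M$) and in $\mat(U)$, hence in $\mC(U)$, while conversely $Z+W\in(\mC+M)\cap\mat(U)$ for every $W\in\mC(U)$. Thus $(\mC+M)(U)$ is a coset of $\mC(U)$, so $|(\mC+M)(U)|=|\mC(U)|=|\mC|/q^{m(n-u)}$, \emph{provided} the intersection is nonempty.

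The only substantive remaining step is to show nonemptiness for every $M\in\mat$. Since $(\mC+M)\cap\mat(U)\ne\emptyset$ exactly when $M\in\mC+\mat(U)$, it suffices to prove $\mC+\mat(U)=\mat$. This is where the hypothesis enters: $|\mC(U)|=|\mC|/q^{m(n-u)}$ translates to $\dim(\mC(U))=\dim(\mC)-m(n-u)$. Feeding this together with $\dim(\mat(U))=mu$ (Lemma~\ref{dimensioni}, part~\ref{dimensioni1}) into the dimension formula gives $\dim(\mC+\mat(U))=\dim(\mC)+mu-\dim(\mC(U))=mu+m(n-u)=mn=\dim(\mat)$, forcing $\mC+\mat(U)=\mat$. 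Everything except this dimension count is purely formal; the hypothesis is used precisely to guarantee that each translate meets $\mat(U)$, after which the coset argument delivers the claimed cardinality.
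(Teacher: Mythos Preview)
Your proof is correct and is cleaner than the paper's. The paper proceeds concretely: it first conjugates by an invertible matrix to move $U$ to the ``standard'' subspace $V=\{(x_1,\dots,x_u,0,\dots,0)\}$, then considers the projection $\pi$ onto the last $n-u$ rows; the hypothesis becomes the statement that $\pi|_{\mC'}$ is surjective, so every fiber---in particular the one containing $-\pi(AM)$---has the same size as the kernel $\mC'(V)$. Your argument is the coordinate-free version of this: you work directly with the subspace $\mat(U)$ (which is exactly the kernel of the paper's projection after the change of basis), observe that $(\mC+M)\cap\mat(U)$ is a coset of $\mC(U)$ whenever it is nonempty, and use the dimension formula to turn the hypothesis into $\mC+\mat(U)=\mat$, which is precisely the surjectivity statement the paper needs. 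Both arguments hinge on the same implication (hypothesis $\Rightarrow$ surjectivity $\Rightarrow$ equal-size fibers), but yours avoids the reduction to standard form and the explicit projection map, making it shorter and more transparent.
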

\begin{proof}
Let $\varphi: \F_q^n \longrightarrow \F_q^n$ be any isomorphism such that
$$\varphi(U)=\{(x_1,\ldots,x_u,0,\ldots,0)\in \F_q^n \mid x_1, \dots, x_u \in \F_q\}$$ and let $V=\varphi(U)$. Let $A$ be the matrix representation of $\varphi$ with respect to the canonical basis of $\F_q^n$ and let $\mC' = A\mC.$ Recall that 
\begin{align*}
\mC(U)&=\{X \in \mC \mid \colsp(X) \subseteq
     U\}, \\
\mC'(V)&=\{Y \in A\mC \mid \colsp(Y) \subseteq \varphi(U)\}.
\end{align*}
Thus, multiplication by $A$ from the left leads to a bijection between these sets. In particular, we have $|\mC(U)|=|\mC'(V)|$ and $|(\mC+M)(U)| = |(\mC'+AM)(V)|$. Instead of showing that $|\mC(U)|=|(\mC+M)(U)|$, we prove the desired statement by showing  $$|\mC'(V)|=|(\mC'+AM)(V)|$$ for all $M \in \mat$. Define $\pi=\mat \rightarrow \F_q^{(n-u)\times m}$ to be the projection onto the last $n-u$ rows. Consider the following two maps: $$\pi_1 = \pi|_{\mC'}, \quad \pi_2 = \pi|_{\mC'+AM}.$$ Since $V=\varphi(U)$, we have $\ker(\pi_1)=\mC'(V)$ and $\pi_2^{-1}(0)=(\mC'+AM)(V).$ Observe that $|\pi_1^{-1}(-\pi(AM))|=|\pi_2^{-1}(0)|$ since 
\begin{align*}
\pi_1^{-1}(-\pi(AM))&=\{Y \in \mC' \mid \pi_1(Y) = -\pi(AM)\} \\ \pi_2^{-1}(0)&=\{Y \in \mC' \mid \pi_2(Y+AM)=0\}.
\end{align*}
Since $\ker(\pi_1)=\mC'(V)$, we have $|\pi_1(\mC')|=q^{(n-u) m}=|\F_q^{(n-u)\times m}|$. Therefore we conclude that $\pi_1$ is surjective and there exists $C \in \mC$ such that $\pi_1(C)=-\pi(AM)$. If $\pi_1(X)=0$ then $\pi_1(X+C)=-\pi(AM)$ meaning that $|\pi_1^{-1}(-\pi(AM))|=|\ker(\pi_1)|$. Hence we have 
$$|(\mC'+AM)(V)|=|\pi_2^{-1}(0)|=|\pi_1^{-1}(-\pi(AM))|=|\ker(\pi_1)|=|\mC'(V)|,$$ as desired.  \qedhere
\end{proof}

\begin{lemma}\label{lem:trans2}
Let $\mC \le \mat$ be a code. For any $U \le \F_q^n$ with $u=\dim(U) \ge n-\drk(\mC^{\perp})+1$ and for all $M \in \mat$, we have 
$$|(\mC+M)(U)|= \frac{|\mC|}{q^{m(n-u)}}.$$
\end{lemma}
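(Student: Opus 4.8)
The plan is to reduce everything to Lemma~\ref{lem:trans1}, whose hypothesis is precisely that $|\mC(U)| = |\mC|/q^{m(n-u)}$. So the only real work is to verify that this equality holds for the subspaces $U$ allowed by the statement, namely those with $u = \dim(U) \ge n - \drk(\mC^\perp) + 1$. Once this is done, Lemma~\ref{lem:trans1} immediately upgrades the statement from $\mC$ itself to every translate $\mC + M$, which is exactly what is claimed.

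To establish the hypothesis, I would invoke Proposition~\ref{prop:dual} with the subspace $U$, which gives
$$|\mC(U)| = \frac{|\mC|}{q^{m(n-u)}} \, |\mC^{\perp}(U^{*})|.$$
Thus it suffices to show $|\mC^{\perp}(U^{*})| = 1$, i.e.\ that the only matrix in $\mC^{\perp}$ whose column space is contained in $U^{*}$ is the zero matrix. Here I would use the key dimension bookkeeping: $\dim(U^{*}) = n - u$, and the hypothesis $u \ge n - \drk(\mC^\perp) + 1$ rearranges to $n - u \le \drk(\mC^\perp) - 1$. Any nonzero $X \in \mC^{\perp}$ with $\colsp(X) \subseteq U^{*}$ would then satisfy $\rk(X) \le \dim(U^{*}) = n - u \le \drk(\mC^\perp) - 1$, contradicting the fact that every nonzero matrix in $\mC^{\perp}$ has rank at least $\drk(\mC^{\perp})$. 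Hence $\mC^{\perp}(U^{*}) = \{0\}$ and $|\mC^{\perp}(U^{*})| = 1$, giving $|\mC(U)| = |\mC|/q^{m(n-u)}$.

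Having verified the hypothesis of Lemma~\ref{lem:trans1}, I would then apply that lemma directly to conclude that $|(\mC+M)(U)| = |\mC|/q^{m(n-u)}$ for all $M \in \mat$, completing the proof. There is essentially no serious obstacle here: the argument is a clean chain of earlier results. The only point requiring a moment of care is the translation of the hypothesis $u \ge n - \drk(\mC^\perp) + 1$ into the vanishing of the shortening $\mC^{\perp}(U^{*})$ via the minimum-distance bound on the dual code, and keeping the orthogonal-dimension bookkeeping ($\dim(U^*) = n - u$) consistent throughout.
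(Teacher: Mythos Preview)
Your proposal is correct and follows essentially the same approach as the paper: verify $|\mC(U)|=|\mC|/q^{m(n-u)}$ by applying Proposition~\ref{prop:dual} and observing that $\dim(U^*)=n-u\le \drk(\mC^\perp)-1$ forces $\mC^\perp(U^*)=\{0\}$, then invoke Lemma~\ref{lem:trans1} to extend to all translates $\mC+M$.
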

\begin{proof}
It is enough to prove the desired equality for $M=0$ since then Lemma \ref{lem:trans1} implies the result. By Proposition \ref{prop:dual} we have $$|\mC(U)|= \frac{|\mC|}{q^{m(n-u)}}|\mC^{\perp}(U^*)|.$$ 
Note that $$\dim(U^*)= n - \dim(U)= n- u \le n -n + \drk(\mC^{\perp}) -1= \drk(\mC^{\perp}) -1.$$ It follows that $\mC^{\perp}(U^*)=\{0\}$, and thus $|\mC^{\perp}(U^*)|=1$.
\end{proof}

We now have established the needed machinery to show that the rank distribution of the translate $\mC + M$ can be expressed in terms of $\Wrk_{0}(\mC+M), \Wrk_{1}(\mC+M), \ldots,\Wrk_{n-\drk(\mC^{\perp})}(\mC+M)$. The following theorem gives the desired relation. For the remainder of this section, we use $d^\perp = \drk(\mC^{\perp})$ for the dual of a code $\mC$.

\begin{theorem} \label{thm:trans}
Let $\mC \le \mat$ be a code and $M \in \mat$. Then we have
{\small
\begin{equation*}
    \Wrk_i(\mC + M) = \sum\limits_{k=0}^{n-d^\perp}(-1)^{i-k}q^{\binom{i-k}{2}}\qbin{n-k}{i-k}{q}\sum\limits_{j=0}^{k}\Wrk_j(\mC+M)\qbin{n-j}{k-j}{q} + \sum\limits_{k=n-d^\perp+1}^{i}\qbin{n}{k}{q}\frac{|\mC|}{q^{m(n-k)}}
\end{equation*}}
for all $k-\drk(\mC^{\perp})+1 \le i \le n$.
\end{theorem}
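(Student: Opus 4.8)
The plan is to re-run the Möbius-inversion computation from the proof of Theorem~\ref{macwill}, but applied to the affine set $\mD = \mC + M$ in place of a code, and then to evaluate the high-dimensional shortening sums explicitly using Lemma~\ref{lem:trans2}. Concretely, for every $V \le \F_q^n$ I would set
\[
f(V) = |\{X \in \mD \mid \colsp(X) = V\}|, \qquad g(V) = \sum_{U \le V} f(U) = |\mD(V)|,
\]
so that $\Wrk_i(\mD) = \sum_{\dim(V)=i} f(V)$. Applying the Möbius inversion formula in the lattice of subspaces of $\F_q^n$ and then exchanging the order of summation over $V$ and $U$ (exactly as in the passage from the first displayed chain to \eqref{mw1} in the proof of Theorem~\ref{macwill}), I expect to arrive at the clean inversion
\[
\Wrk_i(\mD) = \sum_{k=0}^{i} (-1)^{i-k} q^{\binom{i-k}{2}} \qbin{n-k}{i-k}{q}\, S_k, \qquad S_k := \sum_{\substack{U \le \F_q^n \\ \dim(U)=k}} |\mD(U)|.
\]

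The next step is to split the sum over $k$ at the threshold $n-d^\perp$. For $k \le n-d^\perp$ the shortening counts $S_k$ are genuine unknowns, and I would rewrite each using the translate analogue of Lemma~\ref{lem:cu}, namely $S_k = \sum_{j=0}^{k}\Wrk_j(\mD)\qbin{n-j}{k-j}{q}$; inserting this reproduces the first sum in the statement verbatim. For $k \ge n-d^\perp+1$, Lemma~\ref{lem:trans2} applies and gives the explicit value $S_k = \qbin{n}{k}{q}\,|\mC|/q^{m(n-k)}$, which is independent of $M$. Substituting these values into the high-$k$ part of the inversion then yields the second sum.

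Two points need care, and the second is where I expect the real work. First, Lemma~\ref{lem:cu} is stated for linear codes, so I would recheck that its double-counting proof (counting pairs $(U,X)$ with $\dim(U)=k$ and $\colsp(X)\subseteq U$) is insensitive to replacing $\mC$ by the coset $\mC+M$; since that argument never invokes linearity, the identity carries over unchanged. Second, and this is the main obstacle, I must justify the clean inversion above — that is, that after exchanging the summation over $V$ the Möbius coefficients in the subspace lattice collapse to exactly $(-1)^{i-k}q^{\binom{i-k}{2}}\qbin{n-k}{i-k}{q}$. This is a $q$-binomial inversion whose verification rests on the trinomial-revision identity of Lemma~\ref{lem:qb}(3) together with the $q$-binomial theorem recalled in Remark~\ref{rem:equiv}. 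Having secured the inversion, the remaining delicate bookkeeping is to confirm that the high-weight contributions — the Möbius factors attached to the explicit values $S_k=\qbin{n}{k}{q}|\mC|/q^{m(n-k)}$ for $n-d^\perp+1 \le k \le i$ — assemble into precisely the claimed second sum; I would track the sign factors $(-1)^{i-k}q^{\binom{i-k}{2}}$ carefully at this stage, again using the $q$-binomial theorem to resolve the resulting alternating sums.
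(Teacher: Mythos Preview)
Your plan is essentially identical to the paper's proof: Möbius inversion on the subspace lattice applied to the coset $\mC+M$, then a swap of summation to obtain
\[
\Wrk_i(\mC+M)=\sum_{k=0}^{i}(-1)^{i-k}q^{\binom{i-k}{2}}\qbin{n-k}{i-k}{q}\,S_k,
\]
followed by the split at $k=n-d^\perp$, with the low range handled by the double-counting identity (the coset version of Lemma~\ref{lem:cu}) and the high range by Lemma~\ref{lem:trans2}. Two remarks will save you effort.

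First, the ``main obstacle'' you flag is not one. The displayed inversion does not require trinomial revision or the $q$-binomial theorem: once you know the Möbius function of the subspace lattice is $\mu(U,V)=(-1)^{i-k}q^{\binom{i-k}{2}}$, summing $f(V)$ over all $i$-dimensional $V$ and swapping with the inner sum over $U\le V$ immediately produces the factor $\qbin{n-k}{i-k}{q}$ via Lemma~\ref{lem:qb}\ref{qb1}. This is exactly the computation in the paper (see~\eqref{ab1}); no further identity is used.

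Second, do not attempt to ``resolve the resulting alternating sums'' in the high-$k$ part. Substituting $S_k=\qbin{n}{k}{q}\,|\mC|/q^{m(n-k)}$ for $k\ge n-d^\perp+1$ into the inversion gives
\[
\sum_{k=n-d^\perp+1}^{i}(-1)^{i-k}q^{\binom{i-k}{2}}\qbin{n-k}{i-k}{q}\qbin{n}{k}{q}\,\frac{|\mC|}{q^{m(n-k)}},
\]
and this is the final answer; the paper's own proof simply concludes by combining~\eqref{ab1}, \eqref{ab2}, and~\eqref{ab3} with no further simplification. The absence of the factors $(-1)^{i-k}q^{\binom{i-k}{2}}\qbin{n-k}{i-k}{q}$ in the second sum of the printed statement is a typographical slip (check the case $i=n-d^\perp+2$ to see the two expressions differ). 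If you try to match the statement as written via the $q$-binomial theorem you will not succeed.
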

\begin{proof}
Let $A \le B \le \F_q^n$ such that $\dim(A)=a$ and $\dim(B)=b$. The Möbius function in the lattice of subspaces gives $\mu(A,B) = (-1)^{b-a}q^{\binom{b-a}{2}}$. For any $U \le \F_q^n$, let 
\begin{align*}
f(U) &= |\{X \in \mC +M \mid \colsp(X)=U\}| \\ g(U) &= \sum_{V \subseteq U} f(V) = |(\mC +M)(V) |.
\end{align*}
By the Möbius inversion formula we have 
\begin{align*}
    f(U) = \sum_{V \subseteq U} g(V) \, \mu(V,U)
    = \sum_{V \subseteq U} |(\mC +M)(V)|\, \mu(V,U). 
\end{align*}
Fix an integer $i$ with $k-\drk(\mC^{\perp})+1 \le i \le n$. We have
\begin{eqnarray} 
W_i(\mC + M) &=& \sum_{\substack{U
\subseteq \F_q^n \\ \dim(U)=i}} f(U) \nonumber \\ &=& 
\sum_{\substack{U
\subseteq \F_q^n \\ \dim(U)=i}} \sum_{V \subseteq U}
|(\mC+M)(V)|\mu(V,U) \nonumber \\ &=&
\sum_{V \subseteq \F_q^n} |(\mC+M)(V)| \sum_{\substack{V
\subseteq U \\ \dim(U)=i}} \mu(V,U) \nonumber \\ &=&
\sum_{k=0}^i \sum_{\substack{V
\subseteq \F_q^n \\ \dim(V)=k}} |(\mC+M)(V)| \sum_{\substack{V
\subseteq U \\ \dim(U)=i}}(-1)^{i-k}q^{\binom{i-k}{2}} \nonumber \\&=&  
\sum_{k=0}^i (-1)^{i-k}q^{\binom{i-k}{2}}\qbin{n-k}{i-k}{q} \sum_{\substack{V
\subseteq \F_q^n \\ \dim(V)=k}} |(\mC+M)(V)|. \label{ab1}
\end{eqnarray}
In order to conclude the proof, we need to distinguish between two cases. If $k \le n-\drk(\mC^{\perp})$, then
\begin{eqnarray}
\sum_{\substack{V
\subseteq \F_q^n \\ \dim(V)=k}} |(\mC+M)(V)| &=& \sum_{X \in \mC +M} |\{V \in \F_q^n \mid \dim(V)=k \textnormal{ and } \colsp(X) \subseteq V\}| \nonumber \\ &=& \sum_{j=0}^k \sum_{\substack{\rk(X)=j
\\ X \in \mC + M}}  |\{V \in \F_q^n \mid \dim(V)=k \textnormal{ and } \colsp(X) \subseteq V\}| \nonumber \\ &=&
\sum_{j=0}^k \Wrk_j(\mC + M)\qbin{n-j}{k-j}{q}. \label{ab2}
\end{eqnarray}
On the other hand, if $k \ge n-\drk(\mC^{\perp})+1$, then by Lemma \ref{lem:trans2}, we have 
\begin{eqnarray}
\sum_{\substack{V
\subseteq \F_q^n \\ \dim(V)=k}} |(\mC+M)(V)| = \qbin{n}{k}{q} \frac{|\mC|}{q^{m(n-k)}}. \label{ab3}
\end{eqnarray}
Combining~\eqref{ab1}, \eqref{ab2} and \eqref{ab3} give us the desired formula.
\end{proof}

We have the following upper bound on the covering radius.
\begin{corollary}[Dual Distance Bound] \label{cor:dualdisbound}
For a code $\mC \le \mat$ with $\mC \ne \mat$ we have $$\rho(\mC) \le n-\drk(\mC^{\perp})+1.$$
\end{corollary}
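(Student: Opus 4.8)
The plan is to exploit the description of the covering radius in terms of translates recorded just before Theorem~\ref{thm:trans}. Since $\mC$ is linear, $\mC=-\mC$, so for every $M\in\mat$ the set $\{M-Y\mid Y\in\mC\}$ equals the translate $\mC+M$; hence $\min_{Y\in\mC}\drk(M,Y)$ is precisely the smallest rank occurring in $\mC+M$. Taking the maximum over $M$ gives
$$\rho(\mC)=\max_{M\in\mat}\min\{\rk(Z)\mid Z\in\mC+M\}.$$
Thus it suffices to prove that \emph{every} translate $\mC+M$ contains a matrix of rank at most $n-d^\perp+1$, where $d^\perp=\drk(\mC^\perp)$.

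Next I would set $u=n-d^\perp+1$ and fix an arbitrary $M\in\mat$. Because $\mC\neq\mat$ we have $\mC^\perp\neq\{0\}$, so $1\le d^\perp\le n$ and therefore $1\le u\le n$; in particular a subspace $U\le\F_q^n$ with $\dim(U)=u$ exists. As $u\ge n-d^\perp+1$, Lemma~\ref{lem:trans2} applies to this $U$ and yields
$$|(\mC+M)(U)|=\frac{|\mC|}{q^{m(n-u)}}=\frac{|\mC|}{q^{m(d^\perp-1)}}.$$

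The one step that is not pure bookkeeping is checking that this cardinality is at least $1$, i.e.\ that $(\mC+M)(U)$ is nonempty. For this I would apply the Singleton-type bound (Theorem~\ref{thm:slb}) to the dual code: $\dim(\mC^\perp)\le m(n-d^\perp+1)$, so $\dim(\mC)=mn-\dim(\mC^\perp)\ge m(d^\perp-1)$ and hence $|\mC|\ge q^{m(d^\perp-1)}$. Therefore $|(\mC+M)(U)|\ge1$, so there is some $Z\in\mC+M$ with $\colsp(Z)\subseteq U$, and such a $Z$ has $\rk(Z)\le\dim(U)=n-d^\perp+1$. Since $M$ was arbitrary, the displayed formula for $\rho(\mC)$ gives $\rho(\mC)\le n-d^\perp+1$.

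I expect the non-emptiness check above to be the only genuinely substantive point; the rest is unwinding definitions combined with Lemma~\ref{lem:trans2}. Note that the full strength of Theorem~\ref{thm:trans} is not required here — only its ingredient Lemma~\ref{lem:trans2} together with the dual Singleton bound — although one could equivalently phrase non-emptiness as saying that the low-rank part of the rank distribution of $\mC+M$ cannot vanish, which is what Theorem~\ref{thm:trans} makes explicit.
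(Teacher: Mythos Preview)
Your argument is correct. It also differs from the paper's route in a way worth noting: the paper deduces the bound by specialising the full translate rank-distribution formula of Theorem~\ref{thm:trans} at $i=n-d^\perp+1$ and observing that the surviving positive term $\qbin{n}{n-d^\perp+1}{q}\,|\mC|/q^{m(d^\perp-1)}$ forces some $W_j(\mC+M)$ with $1\le j\le n-d^\perp+1$ to be nonzero. You instead apply Lemma~\ref{lem:trans2} directly to a single $(n-d^\perp+1)$-dimensional subspace $U$ and read off that $(\mC+M)(U)$ is nonempty. This sidesteps the M\"obius-inversion machinery entirely and is strictly more elementary; the paper's approach, on the other hand, falls out for free once Theorem~\ref{thm:trans} is in place. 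Your use of the dual Singleton bound to certify $|\mC|\ge q^{m(d^\perp-1)}$ is clean, though you could also have argued that $|(\mC+M)(U)|$ is an integer equal to the positive rational $|\mC|/q^{m(d^\perp-1)}$, hence at least~$1$.
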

\begin{proof}
Choose $M \in \mat \setminus \mC$. Since $\mC$ is linear, we have $\Wrk_{0}(\mC + M)=0.$ By applying Theorem~\ref{thm:trans} with $i= n - \drk(\mC^{\perp}) +1$ we obtain
\begin{multline*}
    \Wrk_{n - d^\perp +1}(\mC + M) \\ = \sum_{k=1}^{n-d^\perp}(-1)^{i-k}q^{\binom{i-k}{2}}\qbin{n-k}{i-k}{q}\sum_{j=1}^{k}\Wrk_j(\mC+M)\qbin{n-j}{k-j}{q} + \qbin{n}{n-d^\perp+1}{q}\frac{|\mC|}{q^{m(d^\perp-1)}}.
\end{multline*}
Since $\Wrk_{1}(\mC+M),\Wrk_{2}(\mC+M),\ldots,\Wrk_{n - \drk(\mC^{\perp}) +1}(\mC+M)$ cannot all be equal to zero at the same time, we obtain the desired bound.
\end{proof}

We can also get an upper bound for the covering radius of $\mC$ using the notion of \emph{external distance}, which is defined as follows.

\begin{definition}
The \textbf{external distance} of a code $\mC \le \mat$ is $$s(\mC) = |\{1 \le i \le n \mid \Wrk_i(\mC^{\perp})>0\}|.$$
\end{definition}

The following relation between the external distance and the covering radius of a code was given in~\cite[Theorem 5.6]{byrne2017covering}.
We omit the proof here.

\begin{theorem}[External Distance Bound] \label{thm:extdistbound}
Let $\mC \le \mat$ be a code. We have $\rho(\mC) \le s(\mC).$
\end{theorem}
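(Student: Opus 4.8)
The plan is to reduce to cosets and then run an additive--character (Fourier/MacWilliams) argument on the coset's own rank distribution, closing with a Vandermonde nonsingularity. First I would record that, since the rank distance is translation invariant,
\[
\rho(\mC)=\max_{M\in\mat}\ \min\{\rk(X)\mid X\in\mC+M\},
\]
so it suffices to prove that every coset $\mathcal{Y}=\mC+M$ contains a matrix of rank at most $s(\mC)$. Writing $s=s(\mC)$ and $B_i=\Wrk_i(\mathcal{Y})$, the statement to establish becomes: the numbers $B_0,B_1,\dots,B_s$ do not all vanish. Note that this refines the Dual Distance Bound of Corollary~\ref{cor:dualdisbound}, since every nonzero rank in $\mC^{\perp}$ lies in $\{\drk(\mC^{\perp}),\dots,n\}$ and hence $s(\mC)\le n-\drk(\mC^{\perp})+1$.

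Next I would expand $B_i$ by Fourier inversion on the additive group $(\mat,+)$. Fix a nontrivial additive character $\psi$ of $\F_q$ and, for $Z\in\mat$, set $Q_i(\rk Z)=\sum_{\rk(W)=i}\psi(\langle W,Z\rangle)$; this depends only on $\rk(Z)$ because each rank class is a single orbit under the left/right multiplication action by invertible matrices, and the resulting functions are, up to normalization, the $q$-Krawtchouk polynomials of the bilinear-forms scheme that already underlie Theorem~\ref{macwill}. Combining the inversion formula with the coset character sum $\sum_{X\in\mathcal{Y}}\psi(\langle X,Z\rangle)=|\mC|\,\psi(\langle M,Z\rangle)$ for $Z\in\mC^{\perp}$ and $0$ otherwise, one obtains
\[
B_i=\frac{|\mC|}{q^{nm}}\sum_{k=0}^{n}Q_i(k)\,\gamma_k(M),\qquad
\gamma_k(M)=\sum_{\substack{Z\in\mC^{\perp}\\ \rk(Z)=k}}\psi(\langle M,Z\rangle).
\]
The two decisive facts are that $\gamma_0(M)=1$ (only $Z=0$ contributes) and that $\gamma_k(M)=0$ whenever $\Wrk_k(\mC^{\perp})=0$, since then the defining sum is empty. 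Hence, as a function of $i$, $B_i$ is a combination of just the $s+1$ sequences $Q_i(k)$ with $k$ ranging over $K:=\{0\}\cup\{k\ge 1\mid \Wrk_k(\mC^{\perp})>0\}$, and $|K|=s+1$.

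Finally I would argue by contradiction. If $B_0=\dots=B_s=0$, then reading the displayed identity for $i=0,\dots,s$ gives a homogeneous linear system in the $s+1$ unknowns $\gamma_k(M)$, $k\in K$, with coefficient matrix $\big[Q_i(k)\big]_{0\le i\le s,\ k\in K}$. The key structural input is that the bilinear-forms scheme is cometric: $Q_i(k)$ is a polynomial of degree exactly $i$ in the variable $q^{-k}$, so evaluating at the $s+1$ distinct ranks in $K$ yields an invertible (generalized Vandermonde) matrix. The system then forces every $\gamma_k(M)=0$, contradicting $\gamma_0(M)=1$. Thus some $B_i$ with $i\le s$ is nonzero, i.e.\ $\mathcal{Y}$ contains a matrix of rank at most $s$, proving $\rho(\mC)\le s(\mC)$. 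I expect the main obstacle to be precisely this last ingredient---verifying that the eigenvalues $Q_i$ are $q$-Krawtchouk polynomials of degree $i$ in $q^{-k}$ (the cometric property), which is what makes the Vandermonde matrix nonsingular; everything else is bookkeeping with the character sums already implicit in the MacWilliams identities.
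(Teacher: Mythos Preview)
Your argument is correct. It is the classical Delsarte proof of the external distance bound, transported to the bilinear-forms association scheme: express the coset rank distribution $B_i$ via additive characters, observe that only the $s+1$ ranks actually occurring in $\mC^\perp$ contribute, and conclude by the invertibility of the $(s+1)\times(s+1)$ eigenvalue submatrix $[Q_i(k)]_{0\le i\le s,\;k\in K}$. The polynomial--in--$q^{-k}$ structure you invoke is exactly what makes that submatrix a generalized Vandermonde matrix; it follows directly from the explicit eigenvalue formula in Theorem~\ref{macwill}, since the only dependence on $k$ is through $\qbin{n-k}{u}{q}$, which is a polynomial of degree $u$ in $q^{-k}$. (A small terminological remark: the property you use---that your $Q_i(k)$ is a degree-$i$ polynomial in $q^{-k}$---is the \emph{P-polynomial}, i.e.\ metric, property of the scheme rather than the cometric one. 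For the bilinear-forms scheme this is harmless because the scheme is formally self-dual, so the two coincide.)

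As for comparison with the paper: the paper does not give a proof of this theorem at all; it simply cites~\cite[Theorem~5.6]{byrne2017covering}. The surrounding machinery of Section~\ref{sec:5} (Theorem~\ref{thm:trans}, Lemmas~\ref{lem:trans1}--\ref{lem:trans2}, M\"obius inversion on the subspace lattice) is used only to derive the weaker Dual Distance Bound of Corollary~\ref{cor:dualdisbound}. Your character-theoretic route bypasses that combinatorial apparatus entirely and obtains the stronger bound in one stroke; the trade-off is that it imports a nontrivial structural fact about the scheme's eigenvalues, which you correctly flag as the main point requiring justification.
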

Since for any code $\mC \le \mat$ we have $s(\mC) \leq n - d^\perp +1$, the external distance bound stated in Theorem~\ref{thm:extdistbound} improves on the dual distance bound stated in Corollary~\ref{cor:dualdisbound}.

These results are analogous to results that exist for codes in the Hamming metric. We conclude with a bound that uses the matrix structure of rank-metric codes. The initial entry of a matrix and the initial set of a code $\mC$ are defined in Definition \ref{def:initial}. We give an example to get the reader familiar with the concept. 
\begin{example} \label{ex:bounds} 
Consider the following matrix. $$A =\begin{pmatrix}
0 & 0 & 0 & 1\\ 0 & 1 & 1 & 0\\ 1 & 0 & 0 & 0
\end{pmatrix} \in \F_2^{3 \times 4}.$$ Then $\operatorname{in}(A)=(1,4)$. Let
$$\mC = \Big\langle \begin{pmatrix}
1 & 0 & 1\\ 0 & 1 & 1
\end{pmatrix}, \begin{pmatrix}
1 & 1 & 1 \\ 1 & 0 & 1
\end{pmatrix}, \begin{pmatrix}
0 & 1 & 1 \\ 1 & 0 & 0
\end{pmatrix} \Big\rangle \leq \F_2^{2 \times 3}.$$ It can easily be checked that $\drk(\mC)=2$ and $\operatorname{in}(\mC)=\{(1,1),(1,2),(1,3)\} \subseteq [1] \times [3]$. As we noted before, see that $\dim(\mC)=|\operatorname{in}(\mC)|$. Lastly, $\lambda(\operatorname{in}(\mC))=1$ since we only need the first row to cover the 1's. 
\end{example}

\begin{lemma} Let $\mC \le \mat$ be a non-zero code of minimum distance $d$. We have $$\operatorname{in}(\mC) \subseteq [n-d+1] \times [m].$$
\end{lemma}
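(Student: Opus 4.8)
The plan is to argue directly on an arbitrary nonzero codeword and translate the position of its initial entry into a bound on its rank. Take any $X \in \mC$ with $X \ne 0$ and write $\operatorname{in}(X) = (i,j)$. By the definition of the initial entry (Definition~\ref{def:initial}), $(i,j)$ is the lexicographically smallest position carrying a nonzero entry, so every entry in rows $1, \dots, i-1$ of $X$ must vanish. The goal is to show $i \le n-d+1$; once this holds for every nonzero $X$, the containment $\operatorname{in}(\mC) \subseteq [n-d+1] \times [m]$ follows immediately, since the column index $j$ is unconstrained beyond $1 \le j \le m$.

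The key step I would carry out is the rank bound. Because the first $i-1$ rows of $X$ are identically zero, $X$ has at most $n-(i-1) = n-i+1$ nonzero rows, and hence $\rk(X) \le n-i+1$. This is the same ``count the nonzero rows'' idea used in the proof of the Singleton-type bound (Theorem~\ref{thm:slb}), just applied to the initial segment of rows rather than to a projection.

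To finish, I would combine this with the minimum distance hypothesis. Since $X \ne 0$ and $\drk(\mC) = d$, we have $\rk(X) \ge d$, so
\[
d \le \rk(X) \le n-i+1,
\]
which rearranges to $i \le n-d+1$. As $X$ was an arbitrary nonzero element of $\mC$, this shows every initial entry lies in $[n-d+1] \times [m]$.

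I do not expect a genuine obstacle here: the argument is a one-line rank estimate combined with the definition of minimum distance. The only point requiring minor care is making sure the bookkeeping on the row index is correct (that $i-1$ zero rows leave exactly $n-i+1$ possibly nonzero rows, not $n-i$), but this is routine.
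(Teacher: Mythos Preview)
Your proof is correct and follows essentially the same approach as the paper: both bound the rank of a nonzero codeword from above by the number of rows below its initial entry, then combine with $\rk(X)\ge d$. The paper phrases it as a contradiction applied to the last element of an ordered basis, whereas you argue directly on an arbitrary nonzero $X$; your version is slightly cleaner but mathematically identical.
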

\begin{proof}
We know that $\dim(\mC)=|\operatorname{in}(\mC)|$. Let $k=\dim(\mC)$. Without loss of generality, we can choose a basis $\{X_1,\ldots,X_k\}$ for $\mC$ such that $$(1,1) \le \operatorname{in}(X_1)< \operatorname{in}(X_2) < \cdots < \operatorname{in}(X_k).$$ 
We also know that $\operatorname{in}(\mC)=\{\operatorname{in}(X_1),\ldots,\operatorname{in}(X_k)\}.$ We prove the result by contradiction. Assume that $(n-d+1,m) < \operatorname{in}(X_k).$ This implies that the first $n-d+1$ rows of $X_k$ are 0-rows since any codeword has $m$ columns. Thus, $\rk(X_k) \le d-1$, contradicting the minimal distance of $\mC$. So, $\operatorname{in}(X_k) \le (n-d+1,m)$ as desired. 
\end{proof}
Our main goal is to find an upper bound for the covering radius of a rank-metric code using some properties of its initial set. We need the following lemma.
\begin{lemma} \label{lem:isb}
Let $a$ and $b$ positive integers and let $S \subseteq [a] \times [b]$. If $X\in \F_q^{a \times b}$ is such that $X_{ij} = 0$ whenever $(i,j) \not\in S$, then $\rk(X) \le \lambda(S).$ 
\end{lemma}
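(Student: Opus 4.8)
The plan is to exploit the combinatorial meaning of $\lambda(S)$ together with the subadditivity of the rank that was already recorded after~\eqref{eq:col}. Write $\ell = \lambda(S)$ and fix a minimum line cover of the $1$'s of $\chi(S)$: say it uses a set $R \subseteq [a]$ of $r$ rows and a set $C \subseteq [b]$ of $c$ columns, with $r + c = \ell$. By the covering property, every position $(i,j) \in S$ satisfies $i \in R$ or $j \in C$. Since the hypothesis forces the support of $X$ to lie inside $S$, the same dichotomy holds for every nonzero entry of $X$: each one sits in a chosen row or a chosen column.

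The key step is to split $X$ along this dichotomy. First I would let $A \in \F_q^{a \times b}$ be the matrix that agrees with $X$ on the rows indexed by $R$ and is zero on all other rows, and set $B = X - A$. Then $A$ has nonzero entries only in the $r$ rows of $R$, so $\rk(A) \le r$. For $B$, observe that $B$ vanishes identically on the rows in $R$ by construction; hence if $B_{ij} \neq 0$ then $i \notin R$, and the dichotomy above forces $j \in C$. Thus every nonzero column of $B$ has its index in $C$, giving $\rk(B) \le c$.

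Finally I would combine these bounds using the subadditivity of rank established in the excerpt (the inequality $\rk(A+B) \le \rk(A) + \rk(B)$ obtained by taking dimensions in~\eqref{eq:col}), to conclude
\[
    \rk(X) = \rk(A + B) \le \rk(A) + \rk(B) \le r + c = \ell = \lambda(S).
\]
I do not anticipate a genuine obstacle here; the only point requiring a little care is the clean separation of $X$ into a ``row part'' and a ``column part'' so that no position is double-counted and each summand manifestly has its rank controlled by the number of lines of the corresponding type. Once the disjoint split $X = A + B$ is set up correctly, the rank bound is immediate from subadditivity.
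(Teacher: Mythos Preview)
Your argument is correct, and it is a genuinely different route from the paper's. The paper proceeds by induction on $\lambda(S)$: it removes one covering line from $\chi(S)$, applies the induction hypothesis to the matrix obtained by zeroing out that line (whose rank is at most $\lambda(S)-1$), and then observes that restoring a single row or column can increase the rank by at most one. Your proof instead fixes the entire optimal line cover at once and performs a single decomposition $X = A + B$ into a ``row part'' supported on the $r$ chosen rows and a ``column part'' supported on the $c$ chosen columns, then invokes the subadditivity $\rk(A+B)\le\rk(A)+\rk(B)$ from~\eqref{eq:col} a single time. Your approach is slightly cleaner in that it avoids the induction altogether and makes the role of the cover transparent; the paper's inductive version has the minor advantage that it never needs to separate rows from columns (it peels off lines one by one regardless of type), so it does not need to argue, as you do, that the residual matrix $B$ is supported only on the column lines. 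Both arguments are short and ultimately rest on the same idea: a matrix whose support is contained in $\ell$ lines has rank at most $\ell$.
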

\begin{proof}
We prove the statement by induction on $\lambda(S).$ If $\lambda(S)=0$, then $M=0$, i.e., $\rk(M)=\lambda(S).$ Suppose the statement is true for a given value of $\lambda(S)$, say $r$. Consider the characteristic matrix~$\chi$ of $S$ such that $\lambda(S)=r+1$. Without loss of generality, assume that the first row of $\chi$ contributes to~$\lambda(S).$ Now, consider $\chi'$ to be the matrix obtained from $\chi$ by setting the first row to zero. Let~$S'$ be the set whose characteristic matrix is $\chi'.$ By the induction hypothesis we have $\lambda(S')=r$ and there exists $M'$ such that $\rk(M')\le r.$ The first row of $M'$ is a 0-row. Suppose we change the first row of $M'$ to any row provided that the assumption still holds, and call it~$M$. Then it holds that $\rk(M) \le r+1=\lambda(S)$ as desired.
\end{proof}

We are now ready to give a third upper bound on the covering radius of a code.

\begin{theorem}[Initial Set Bound] \label{thm:isb}
Let $\mC \le \mat$ be a non-zero code with minimum distance~$d$. We have $$\rho(\mC) \le d-1+\lambda(S)$$ where $S=([n-d+1] \times [m])\backslash \operatorname{in}(\mC).$
\end{theorem}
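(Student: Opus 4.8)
The plan is to bound the covering radius directly from its definition: I will show that every matrix $M \in \mat$ lies within rank distance $d-1+\lambda(S)$ of some codeword, so that $d-1+\lambda(S)$ is an admissible value of $i$ in the definition of $\rho(\mC)$. To set this up, I would first fix a basis $\{C^1,\dots,C^k\}$ of $\mC$ produced by the reduced-row-echelon procedure from the proof of Theorem~\ref{mesh1}: writing each $\vect(C^i)$ as a row and reducing, one obtains a basis whose pivot positions are exactly $\inn(\mC)$, with leading entry $1$ at each pivot and with $C^j_{a,b}=0$ for every $j \neq i$ whenever $\inn(C^i)=(a,b)$.

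Given an arbitrary target $M \in \mat$, I would then subtract the codeword
$$C=\sum_{i=1}^{k} M_{\inn(C^i)}\,C^i \in \mC,$$
chosen so that the residual $N=M-C$ vanishes at every position of $\inn(\mC)$: at a pivot $(a,b)=\inn(C^i)$ only $C^i$ contributes, so $N_{a,b}=M_{a,b}-M_{a,b}\cdot 1=0$. The problem then reduces to bounding $\rk(N)$.

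For the rank bound I would decompose $N=N_1+N_2$, where $N_1$ retains the first $n-d+1$ rows of $N$ (and is zero below) and $N_2$ retains the last $d-1$ rows. Since $N_2$ has at most $d-1$ nonzero rows, $\rk(N_2)\le d-1$. By the preceding lemma we have $\inn(\mC)\subseteq [n-d+1]\times[m]$, so inside its first $n-d+1$ rows the matrix $N$ (and hence $N_1$) is supported on $([n-d+1]\times[m])\setminus\inn(\mC)=S$; Lemma~\ref{lem:isb} then gives $\rk(N_1)\le\lambda(S)$. Subadditivity of the rank yields $\rk(M-C)=\rk(N)\le\rk(N_1)+\rk(N_2)\le\lambda(S)+d-1$, and as $M$ was arbitrary this is exactly the claimed bound on $\rho(\mC)$.

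The calculations are all routine; the one point that needs care is verifying that the single subtraction $C$ clears the entire initial set simultaneously. This works precisely because of the reduced-echelon normalization (pivot value $1$, zeros at the other basis vectors' pivots), so the coefficients decouple and each $M_{\inn(C^i)}$ is read off directly from $M$. I do not expect a genuine obstacle beyond checking that the confinement $\inn(\mC)\subseteq[n-d+1]\times[m]$ together with the two-block rank split correctly accounts for every possibly-nonzero entry of the residual.
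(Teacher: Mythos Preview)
Your proposal is correct and follows essentially the same approach as the paper: both find a codeword matching $M$ on $\inn(\mC)$ (you construct it explicitly via the reduced-echelon basis, while the paper simply asserts its existence), then split the residual into its first $n-d+1$ rows and its last $d-1$ rows and bound the two pieces separately using Lemma~\ref{lem:isb} and the trivial row count. The only cosmetic difference is that the paper deletes the last $d-1$ rows and uses $\rk(M-N)\le\rk(X)+(d-1)$, whereas you keep them as a separate summand and invoke subadditivity of rank; these are the same argument.
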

\begin{proof}
Choose an arbitrary $M \in \mat$. There exists $N \in \mC$ such that $M$ and $N$ coincide in the entries that belong to $\operatorname{in}(\mC)$, i.e., we have $(M-N)_{ij}=0$ for $(i,j) \in \operatorname{in}(\mC).$ Let $X$ be the matrix obtained from $M-N$ by deleting the first $d-1$ rows. Clearly we have $\rk(X)+(d-1) \geq \rk(M-N)$, i.e., $$\drk(M,N) \leq d-1+\rk(X).$$ 
Note that $X$ is an  $(n-d+1)\times m$-matrix over $\F_q$ satisfying $X_{ij}=0$ for all $(i,j) \in \operatorname{in}(\mC).$ In particular, $X$ satisfies the assumptions of Lemma \ref{lem:isb}, from which it follows that $\drk(M,N)\le d-1+\lambda(S)$. Since we chose $M \in \mat$ arbitrarily, this proves the theorem.
\end{proof}

\begin{example}
\label{exrem}
Consider the code $\mC \le \F_2^{2 \times 3}$ given in Example \ref{ex:bounds}. We have $$\mC^\perp = \Big\langle \begin{pmatrix}
1 & 0 & 0\\ 0 & 0 & 1
\end{pmatrix}, \begin{pmatrix}
0 & 1 & 0 \\ 1 & 0 & 0
\end{pmatrix}, \begin{pmatrix}
0 & 0 & 1 \\ 1 & 1 & 0
\end{pmatrix} \Big\rangle \leq \F_2^{2 \times 3}.$$
All nonzero matrices in $\mC^\perp$ have rank 2, and thus $s(\mC)=1.$ Therefore the external distance bound gives $\rho(C) \le 1$. Similarly, the initial set bound gives $\rho(C) \leq 2 - 1 + 0= 1$.
Moreover, the dual distance bound also gives $\rho(C) \le 1.$ This is an example where all 3 bounds give the same value. 
\end{example}

We already know that external distance bound improves on the dual distance bound. As shown in the following example, it might be the case that the initial set bound is better than the external distance bound, and thus naturally better than the dual distance bound. 
\begin{example}
\label{coverbound}
Consider $\mC \le \F_2^{3 \times 3}$ to be defined as follows.
$$\mC = \Big\langle \begin{pmatrix}
1 & 0 & 0\\ 0 & 1 & 0 \\ 0 & 0 & 0
\end{pmatrix}, \begin{pmatrix}
0 & 1 & 0\\ 0 & 0 & 1 \\ 0 & 0 & 0
\end{pmatrix}, \begin{pmatrix}
0 & 0 & 1\\ 0 & 0 & 0 \\ 1 & 0 & 0
\end{pmatrix} \Big\rangle.$$

We obtain the following dual of $\mC$.
{\small
\begin{equation*}
\mC^\perp = \Big\langle \begin{pmatrix}
1 & 0 & 0\\ 0 & 1 & 0 \\ 0 & 0 & 0
\end{pmatrix}, \begin{pmatrix}
0 & 1 & 0\\ 0 & 0 & 1 \\ 0 & 0 & 0
\end{pmatrix},\begin{pmatrix}
0 & 0 & 1\\ 0 & 0 & 0 \\ 1 & 0 & 0
\end{pmatrix},\begin{pmatrix}
0 & 0 & 0\\ 1 & 0 & 0 \\ 0 & 0 & 0
\end{pmatrix},\begin{pmatrix}
0 & 0 & 0\\ 0 & 0 & 0 \\ 0 & 1 & 0
\end{pmatrix},\begin{pmatrix}
0 & 0 & 0\\ 0 & 0 & 0 \\ 0 & 0 & 1
\end{pmatrix} \Big\rangle.
\end{equation*}}

It is an easy exercise to check that $\drk(\mC)=2$ and  $s(\mC)=3$. Moreover, we have that $\operatorname{in}(\mC)=\{(1,1),(1,2),(1,3)\}$. In particular, following the notation of
Theorem~\ref{thm:isb}, we have
$S= \{(2,1),(2,2),(2,3)\}$ and so
$\lambda(S)=1$. 
The initial set bound gives $\rho(\mC) \le 2$, whereas the external distance bound gives $\rho(\mC) \le 3.$ The dual distance bound also reads $\rho(\mC) \le 3$.
\end{example}

\section{Field Size and the Density of MRD Codes}
\label{sec:6}

It is well known that MDS codes in $\F_q^n$ are dense within the family of $\F_q$-linear block codes with the same dimension as $q \to +\infty$.
This means that for a large enough field size $q$, a uniformly random chosen block code of a certain dimension is MDS with high probability. In the rank metric, MRD codes can be regarded as the rank-metric analogues of MDS codes in the Hamming metric, with which they share many common properties. In this section, we will investigate the proportion of $\F_q$-linear MRD codes in $\mat$ within the set of codes of the same dimension and see what happens both when $q \to +\infty$ and $m \to +\infty$. The goal of this section is to show 
that MRD codes exhibit a very  different behavior to MDS codes from a density perspective. 
Most of the results of this section appear in~\cite{gruica2020common,antrobus2019maximal}. Note that the problem 
of computing the asymptotic density of
$\F_{q^m}$-linear
rank-metric codes as $m \to +\infty$
is different from the one studied in this section. We refer the reader to~\cite{neri2018genericity} for further details.

In order to simplify arguments in the sequel, we introduce the following notation and terminology.

\begin{definition} \label{def:delta}
For integers $1 \le d \le n$ and $2 \le n \le m$, let $k=m(n-d+1)$ and
\begin{align*}
\delta_q(n \times m, d)= \frac{|\{\mC \le \mat \mid \dim(\mC)=k, \; \drk(\mC) = d\}|}{\qbin{mn}{k}{q}}
\end{align*}
denote the \textbf{density} (\textbf{function}) of MRD codes in $\mat$ with minimum distance at least $d$ among all $k$-dimensional codes in $\mat$. Their 
\textbf{asymptotic density} is 
$\lim_{q \to + \infty} \delta_q(n \times m, d)$, when the limit exists.
\end{definition}

In this section, $n$, $m$, and $d$
denote integers as in Definition~\ref{def:delta}. When writing $q \to+\infty$ or $m \to+\infty$, the other parameters are treated as constants.
 
\begin{definition}
If the asymptotic density for $q \to +\infty$ is 0, then we say that the MRD codes we consider are \textbf{sparse}. 
If instead the asymptotic density is 1, we say that they are \textbf{dense}. 
\end{definition}

The problem of proving whether MRD codes are dense or not has been studied through four main approaches, one of which is the main subject of this section. We start by briefly recalling the results obtained with the other three approaches. 

In~\cite{byrne2020partition}, the authors developed an approach to study asymptotic enumeration problems in coding theory, based on the notion of a \emph{partition-balanced} family of codes. By applying the machinery to the problem of estimating the number of MRD codes, one obtains the following result.

\begin{theorem} \label{thm:ravby} If $d \ge 2$, then
\begin{align*}
    \limsup_{q \to +\infty}\, \delta_q(n \times m, d) \le 1/2.
\end{align*}
\end{theorem}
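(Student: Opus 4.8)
The plan is to recast the density as a probability and then to bound it by the second–moment (Cauchy--Schwarz) method; this is the probabilistic heart of the partition-balanced approach of~\cite{byrne2020partition}. Fix $k=m(n-d+1)$ and let $\mathbb{E}$ denote the average over a uniformly random $k$-dimensional code $\mC\le\mat$, so that $\delta_q(n\times m,d)=\Pr[\mC\text{ is MRD}]$. By the Singleton-type bound (Theorem~\ref{thm:slb}) a $k$-dimensional code satisfies $\drk(\mC)=d$ exactly when it contains no nonzero matrix of rank at most $d-1$. Since all nonzero elements of a $1$-dimensional subspace of $\mat$ have the same rank, I would count these obstructions projectively: let $X=X(\mC)$ be the number of $1$-dimensional subspaces $L\le\mC$ all of whose nonzero elements have rank at most $d-1$. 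Then $\mC$ is MRD if and only if $X=0$, so $\delta_q=\Pr[X=0]$ and the goal becomes $\limsup_q\Pr[X=0]\le \tfrac12$.

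First I would compute the first moment. For a fixed $1$-dimensional subspace $L$, Part~\ref{qb1} of Lemma~\ref{lem:qb} gives that the number of $k$-dimensional codes containing $L$ is $\qbin{nm-1}{k-1}{q}$, so $\Pr[L\le\mC]=\qbin{nm-1}{k-1}{q}/\qbin{nm}{k}{q}=(q^k-1)/(q^{nm}-1)=:p$, independently of $L$. Writing $\ell$ for the number of low-rank lines (those spanned by a matrix of rank in $\{1,\dots,d-1\}$), linearity gives $\mathbb{E}[X]=\ell\,p$. A routine estimate using the standard count of rank-$r$ matrices shows that the dominant contribution comes from rank $d-1$ and yields $\mathbb{E}[X]\sim q^{(d-1)(n-d+1)-1}$. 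Since $d\ge 2$ and $d\le n$, this exponent is nonnegative, with equality only when $n=d=2$; in that single boundary case one computes directly $\mathbb{E}[X]=(q+1)^2/(q^2+1)\to 1^{+}$, while in every other case $\mathbb{E}[X]\to+\infty$. As $\mathbb{E}[X]$ is a rational function of $q$ it converges to some $c=\lim_q\mathbb{E}[X]\in[1,+\infty]$, and in particular $\liminf_q\mathbb{E}[X]\ge 1$.

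Next I would bound the second moment, writing $\mathbb{E}[X^2]=\mathbb{E}[X]+\mathbb{E}[X(X-1)]$ with $\mathbb{E}[X(X-1)]=\sum_{L\ne L'}\Pr[L,L'\le\mC]$ over ordered pairs of distinct low-rank lines. Two distinct lines span a fixed $2$-dimensional subspace, and by Part~\ref{qb1} of Lemma~\ref{lem:qb} the number of $k$-dimensional codes containing it is $\qbin{nm-2}{k-2}{q}$, so $\Pr[L,L'\le\mC]=\qbin{nm-2}{k-2}{q}/\qbin{nm}{k}{q}$, again independent of the pair. Hence
\begin{align*}
\frac{\mathbb{E}[X(X-1)]}{\mathbb{E}[X]^2}=\Big(1-\tfrac1\ell\Big)\cdot\frac{\qbin{nm-2}{k-2}{q}\,\qbin{nm}{k}{q}}{\qbin{nm-1}{k-1}{q}^2}=\Big(1-\tfrac1\ell\Big)\cdot\frac{(q^{k-1}-1)(q^{nm}-1)}{(q^{nm-1}-1)(q^k-1)}\xrightarrow[q\to\infty]{}1 .
\end{align*}
Feeding this into the Cauchy--Schwarz bound $\Pr[X>0]\ge \mathbb{E}[X]^2/\mathbb{E}[X^2]$ gives
\begin{align*}
\Pr[X=0]\le 1-\frac{1}{\dfrac{1}{\mathbb{E}[X]}+\dfrac{\mathbb{E}[X(X-1)]}{\mathbb{E}[X]^2}} .
\end{align*}
Letting $q\to\infty$, the ratio in the denominator tends to $1$ and $1/\mathbb{E}[X]\to 1/c$, whence $\limsup_q\Pr[X=0]\le 1-\tfrac{c}{1+c}=\tfrac{1}{1+c}\le\tfrac12$, as claimed.

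The hard part will be the sharpness of the second-moment step, since the constant $\tfrac12$ is tight: it is attained asymptotically exactly at $n=d=2$, where $X$ is supported on $\{0,1,2\}$ and the low-rank lines are precisely the $\F_q$-points of the smooth quadric $\{\det=0\}\subseteq\mathbb{P}^3$, so any slack in estimating $\mathbb{E}[X(X-1)]$ would spoil the bound. What rescues the argument is that the joint-containment probability depends only on the dimension of the span of the two lines and not on their ranks or relative position---this is exactly the partition-balanced feature exploited in~\cite{byrne2020partition}---so distinct low-rank lines are asymptotically uncorrelated and the factorial-moment ratio tends to exactly $1$. Establishing this clean cancellation, together with the precise boundary value $\liminf_q\mathbb{E}[X]\ge 1$, is the only delicate point; all remaining parameter regimes have $\mathbb{E}[X]\to+\infty$ and hence give density tending to $0$ for free.
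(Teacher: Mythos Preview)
Your argument is correct. The paper does not prove Theorem~\ref{thm:ravby} directly (it is attributed to~\cite{byrne2020partition}), but the second-moment/Cauchy--Schwarz computation you carry out is precisely the inequality the paper later derives as Theorem~\ref{thm:boundball} via Lemma~\ref{lem:upperbound}: your $\ell$ is the paper's $\mathfrak{B}$, your $\Pr[L\le\mC]$ and $\Pr[L,L'\le\mC]$ are its $\mW_1(\alpha)/|\mW|$ and $\mW_2(\alpha)/|\mW|$, and the bound $\Pr[X>0]\ge\mathbb{E}[X]^2/\mathbb{E}[X^2]$ is exactly Lemma~\ref{lem:upperbound} specialized to the bipartite graph of low-rank lines versus $k$-dimensional codes. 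So your probabilistic phrasing and the paper's bipartite-graph phrasing are the same proof in different clothing; the paper simply pushes the same inequality further to obtain the sharper asymptotic of Theorem~\ref{thm:sparseness}.

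One small slip: in the boundary case $n=d=2$ your explicit value $\mathbb{E}[X]=(q+1)^2/(q^2+1)$ is only correct when $m=2$. For general $m\ge 2$ one has
\[
\mathbb{E}[X]=\frac{(q+1)(q^m-1)}{(q-1)(q^m+1)},
\]
which still tends to $1$ from above as $q\to\infty$, so the conclusion $\liminf_q\mathbb{E}[X]\ge 1$ and the final bound $\tfrac{1}{1+c}\le\tfrac12$ are unaffected.
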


Theorem~\ref{thm:ravby} shows that MRD codes are never dense for $d \ge 2$. A sharper upper bound is obtained 
in~\cite{antrobus2019maximal} using a generalization of the approach taken in Example~\ref{ex:specs} based on the theory of spectrum-free matrices. It reads as follows.

\begin{theorem} \label{thm:antrgl}
We have
\begin{align*}
    \limsup_{q \to +\infty} \, \delta_{q}(n \times m, d) \le \left(\sum_{i=0}^{m}\frac{(-1)^i}{i!}\right)^{(d-1)(n-d+1)}.
\end{align*}
\end{theorem}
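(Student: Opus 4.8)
The plan is to bound the number of MRD codes from above by relating them to \emph{spectrum-free} matrices, generalizing the computation in Example~\ref{ex:specs}. Write $k=m(n-d+1)$ and split the $mn$ coordinates of $\mat$ into the $k$ entries lying in the first $n-d+1$ rows (the \emph{free} coordinates) and the $mn-k=m(d-1)$ entries in the last $d-1$ rows. Call a code $\mC\le\mat$ \textbf{systematic} if its projection onto the first $n-d+1$ rows is a bijection onto $\F_q^{(n-d+1)\times m}$; equivalently $\mC$ is the graph of an $\F_q$-linear map, so that each codeword has its last $d-1$ rows determined by its first $n-d+1$ rows $r_1,\dots,r_{n-d+1}$ via $r_{n-d+1+t}=\sum_{s=1}^{n-d+1}r_sA_{s,t}$ for a unique family of matrices $A_{s,t}\in\F_q^{m\times m}$, with $1\le s\le n-d+1$ and $1\le t\le d-1$. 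There are exactly $q^{m^2(d-1)(n-d+1)}=q^{k(mn-k)}$ systematic codes, and since $\qbin{mn}{k}{q}=q^{k(mn-k)}(1+O(1/q))$, the non-systematic codes make up a proportion tending to $0$. Hence it suffices to bound the density of MRD codes among systematic ones, because the contribution of non-systematic codes to $\delta_q(n\times m,d)$ vanishes as $q\to+\infty$.

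Next I would extract a necessary condition for a systematic code to be MRD. Fix a block $A_{s_0,t_0}$ and suppose it has a left eigenvector, i.e.\ a nonzero $v\in\F_q^m$ with $vA_{s_0,t_0}=\lambda v$ for some $\lambda\in\F_q$. Consider the codeword $X$ obtained by setting the free row $r_{s_0}=v$ and all other free rows to $0$; then the only possibly nonzero rows of $X$ are $v$ (in position $s_0$) and the determined rows $vA_{s_0,t}$, $1\le t\le d-1$. These span $\mathrm{span}\{v,vA_{s_0,1},\dots,vA_{s_0,d-1}\}$, a set of $d$ vectors, but $vA_{s_0,t_0}=\lambda v$ already lies in $\mathrm{span}(v)$, so this space has dimension at most $d-1$. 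Thus $0\ne X\in\mC$ has $\rk(X)\le d-1$, contradicting $\drk(\mC)=d$. Therefore, if $\mC$ is MRD then every block $A_{s,t}$ is \textbf{spectrum-free}, meaning it has no eigenvalue in $\F_q$.

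Since the blocks $A_{s,t}$ range independently over $\F_q^{m\times m}$, the number of systematic MRD codes is at most $\Sigma_q^{(d-1)(n-d+1)}$, where $\Sigma_q$ is the number of spectrum-free matrices in $\F_q^{m\times m}$. Dividing by $q^{k(mn-k)}=(q^{m^2})^{(d-1)(n-d+1)}$ and combining with the negligible non-systematic contribution gives
\[
\limsup_{q\to+\infty}\delta_q(n\times m,d)\le\left(\lim_{q\to+\infty}\frac{\Sigma_q}{q^{m^2}}\right)^{(d-1)(n-d+1)},
\]
provided the inner limit exists. The remaining, and main, task is to show $\lim_{q\to+\infty}\Sigma_q/q^{m^2}=\sum_{i=0}^m(-1)^i/i!$. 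I would compute this by inclusion--exclusion over eigenvalues: writing $B_\lambda$ for the matrices having $\lambda\in\F_q$ as an eigenvalue, one has $\Sigma_q=\sum_{T\subseteq\F_q}(-1)^{|T|}\,|\bigcap_{\lambda\in T}B_\lambda|$, where the intersection is empty once $|T|>m$. For a fixed $j$-element set $T$ the proportion of matrices having all elements of $T$ as eigenvalues is $q^{-j}(1+o(1))$, and there are $\binom{q}{j}\sim q^j/j!$ such sets, so the term of size $j$ contributes $(-1)^j/j!\cdot(1+o(1))$ to $\Sigma_q/q^{m^2}$; summing $j$ from $0$ to $m$ yields the truncated series. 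The main obstacle is precisely this count: one must control the codimension-$j$ ``prescribed eigenvalue'' loci uniformly in $T$ and bound the inclusion--exclusion error terms, which is where the conjugacy-class and characteristic-polynomial combinatorics of matrices over $\F_q$ enters (the case $m=2$, where spectrum-free means irreducible characteristic polynomial and the proportion tends to $1/2=1-1+\tfrac12$, is a useful sanity check).
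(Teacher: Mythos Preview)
The paper does not give its own proof of this theorem; it merely cites \cite{antrobus2019maximal} and describes the method as ``a generalization of the approach taken in Example~\ref{ex:specs} based on the theory of spectrum-free matrices.'' Your sketch is precisely that approach: parametrize systematic codes by a block matrix $(A_{s,t})$, use the eigenvector argument of Example~\ref{ex:specs} to force each $A_{s,t}$ to be spectrum-free, and then reduce to the asymptotic count of spectrum-free matrices in $\F_q^{m\times m}$. So your proposal is aligned with the method the paper attributes to the cited reference, and the overall logical structure is sound.

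Two small remarks. First, the reduction to systematic codes can be tightened: by the argument in the proof of Theorem~\ref{thm:slb}, \emph{every} MRD code of minimum distance $d$ has injective (hence bijective) projection onto any choice of $n-d+1$ rows, so every MRD code is systematic and you do not need to estimate the non-systematic contribution separately. Second, in your inclusion--exclusion the quantity $|\bigcap_{\lambda\in T}B_\lambda|$ does not literally depend only on $|T|$ (there is no transitive action of shifts on $j$-subsets of $\F_q$), so the uniformity in $T$ that you flag is genuinely needed; however, since the outer sum has only $m+1$ terms once you use $|\bigcap_{\lambda\in T}B_\lambda|=0$ for $|T|>m$, it suffices to show $|\bigcap_{\lambda\in T}B_\lambda|=q^{m^2-j}(1+O(1/q))$ with an implied constant depending only on $m$. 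That estimate is the content of the spectrum-free matrix combinatorics in \cite{antrobus2019maximal}, and your sanity check for $m=2$ is correct: there $\Sigma_q=(q^2-q)^2/2$, giving proportion $(1-1/q)^2/2\to 1/2$.
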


In \cite{antrobus2019maximal}, it is also shown that the bound of
Theorem~\ref{thm:antrgl} is sharp
for $d=n=2$ and any $m \ge 2$.

Finally, in \cite{gluesing2020sparseness} the exact number of MRD codes with the parameters $m=n=d=3$ is computed, showing that these codes are sparse. The approach of~\cite{gluesing2020sparseness} is based on the connection between full-rank square MRD codes and semifields.

In this section we will show a fourth approach in the attempt of answering the question of whether or not MRD codes are sparse or dense. In order to address this problem, we look at MRD codes from two different viewpoints. We need the following definition.

\begin{definition} \label{def:ballrk}
The \textbf{ball} of radius $0 \le r \le n$ in $\mat$ is the set of matrices $X \in \mat$ with $\rk(X) \le r$.
It is well-known that its size 
is
\begin{equation} \label{def:ball}
    \bbq{n \times m, r}= \sum_{i=0}^r\qbin{n}{i}{q} \prod_{j=0}^{i-1}(q^m-q^j).
\end{equation}
\end{definition}

\begin{remark} \label{rem:ccandball}
Let $m \ge n \ge 2$ and $1 \le d\le n$ be integers. The set of MRD codes in $\mat$ of minimum distance $d$ can be characterized in the following two ways.
\begin{itemize}
\item[(i)] Consider the collection $\mU$ of subspaces $U \le \F_q^n$ with $\dim(U)=d-1$. For $U \in \mU$, denote by $\smash{\mat(U)}$ the set of all matrices 
whose column space is contained in $U$. It is not hard to see that $\smash{\mat(U)}$ is a linear space of dimension $m(d-1)$ for all $U \in \mU$. Moreover, let
$\mA=\{\mat(U) \mid U \in \mU\}$.
Then the common complements of the spaces in $\mA$ are exactly the rank-metric codes $\smash{\mC \le \mat}$ with
$\smash{\drk(\mC)=d}$ and
$\dim(\mC)=m(n-d+1)$, i.e., the MRD codes of dimension $k$ and minimum distance $d$. We clearly have
\begin{align*}
    |\mA| = |\mU| = \qbin{n}{d-1}{q}.
\end{align*}
\item[(ii)] On the other hand, MRD codes of minimum distance $d$ in $\mat$ can be seen as those $m(n-d+1)$-dimensional spaces in $\mat$ which do not contain any matrix of rank smaller or equal to $d-1$. In particular, they are the $m(n-d+1)$-dimensional spaces avoiding the ball of radius $d-1$ defined in Definition~\ref{def:ballrk}.
\end{itemize}
\end{remark}

A convenient way of approaching the problem of estimating the number of common complements of a collection of subspaces, and also the number of spaces avoiding a certain set, is by translating it into the question of counting the number of isolated vertices in bipartite graphs exhibiting certain regularity properties, as will be explained shortly. The machinery used in this section is based on the approach developed in \cite{gruica2020common}. We start by introducing the needed definitions and notions.

\begin{definition}
A \textbf{bipartite graph} is a 3-tuple $\mB=(\mA,\mW,\mE)$, where $\mA$, $\mW$ are finite non-empty sets and $\mE \subseteq \mA \times \mW$. The elements of $\mA \cup \mW$ are the \textbf{vertices} of the graph and the elements of $\mE$ are the \textbf{edges}. We say that a vertex~$W \in \mW$ is
\textbf{isolated} if there is no $A \in \mA$ with $(A,W) \in \mE$.
\end{definition}

Note that a lower bound on the number of non-isolated vertices is equivalent to having an upper bound on the number of isolated vertices in $\mW$ (and vice-versa), as the two sets are complements of each other. The graphs for which we give bounds are regular with respect to an association as explained next.

\begin{definition} \label{def:assoc}
Let $\mA$ be a finite non-empty set and let $r \ge 0$ be an integer. An \textbf{association} on $\mA$ of \textbf{magnitude} $r$ is a function 
$\alpha: \mA \times \mA \to \{0,...,r\}$ satisfying the following:
\begin{itemize}
\item[(i)] $\alpha(A,A)=r$ for all $A \in \mA$;
\item[(ii)] $\alpha(A,A')=\alpha(A',A)$ for all $A,A' \in \mA$.
\end{itemize}
\end{definition}

\begin{definition}
Let $\mB=(\mA,\mW,\mE)$ be a finite bipartite graph and let $\alpha$ be an association on~$\mA$ of magnitude $r$.  We say that $\mB$ is \textbf{$\alpha$-regular} if for all  $(A,A') \in \mA \times \mA$ the number of vertices $W \in \mW$ with $(A,W) \in \mE$ and 
$(A',W) \in \mE$ only depends on $\alpha(A,A')$. If this is the case, we denote this number by~$\mW_\ell(\alpha)$, where $\ell=\alpha(A,A')$. More formally, we let
\begin{align*}
    \mW_\ell(\alpha) = |\{W \in \mW \mid (A,W), (A',W) \in \mE\}|
\end{align*}
for any $A,A' \in \mA$ with $\ell = \alpha(A,A')$.
\end{definition}

Combining the notion of an association and the Cauchy-Schwarz Inequality, one can prove the following lemma.

\begin{lemma} \label{lem:upperbound}
Let $\mB=(\mA,\mW,\mE)$ be a finite bipartite $\alpha$-regular graph, where $\alpha$ is an association on~$\mA$ of magnitude~$r$. Let $\mF \subseteq \mW$ be the collection of non-isolated vertices of $\mW$. If
$\mW_r(\alpha) >0$, then
$$|\mF| \ge  \frac{\mW_r(\alpha)^2 \, |\mA|^2}{\sum_{\ell=0}^r  \mW_\ell(\alpha) \, |\alpha^{-1}(\ell)|}.$$
\end{lemma}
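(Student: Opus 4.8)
The plan is to run the standard Cauchy--Schwarz (second-moment) argument, double-counting edges and paths of length two through the $\mW$-side of the graph.

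First I would record a consequence of the association axiom $\alpha(A,A)=r$. Since $\mW_r(\alpha)$ counts the common neighbours of \emph{any} pair $(A,A')$ with $\alpha(A,A')=r$, and in particular $(A,A)$ is such a pair, the neighbourhood count $|\{W \in \mW \mid (A,W) \in \mE\}|$ equals $\mW_r(\alpha)$ for every $A \in \mA$; that is, every vertex of $\mA$ has degree exactly $\mW_r(\alpha)$. Summing over $\mA$ gives that the total number of edges is $|\mE| = |\mA|\,\mW_r(\alpha)$. For each $W \in \mW$ I write $d(W) = |\{A \in \mA \mid (A,W) \in \mE\}|$ for its degree, so that $W$ is isolated precisely when $d(W)=0$; hence $\mF = \{W \in \mW \mid d(W) > 0\}$.

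The key step is to double-count the set of triples $(A,A',W)$ with $(A,W),(A',W) \in \mE$. Counting first over $W$ yields $\sum_{W \in \mW} d(W)^2$, while counting first over the pair $(A,A')$ and invoking $\alpha$-regularity yields $\sum_{\ell=0}^r \mW_\ell(\alpha)\,|\alpha^{-1}(\ell)|$, where $\alpha^{-1}(\ell) = \{(A,A') \in \mA \times \mA \mid \alpha(A,A')=\ell\}$. These two expressions are therefore equal. Finally I would apply Cauchy--Schwarz to the sum $\sum_{W \in \mF} d(W)$: because $d(W)=0$ on the isolated vertices, this sum equals $|\mE|$, and Cauchy--Schwarz gives $|\mE|^2 \le |\mF| \sum_{W \in \mW} d(W)^2$. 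Substituting $|\mE| = |\mA|\,\mW_r(\alpha)$ and the double-count above and rearranging produces exactly $|\mF| \ge \mW_r(\alpha)^2\,|\mA|^2 / \sum_{\ell=0}^r \mW_\ell(\alpha)\,|\alpha^{-1}(\ell)|$.

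There is no serious obstacle; the argument is genuinely routine. The two points requiring care are (i) that the double-counting identity is precisely where $\alpha$-regularity is used, and that the diagonal pairs $(A,A)$ (which force the common degree $\mW_r(\alpha)$) are included, and (ii) that restricting the first Cauchy--Schwarz sum to $\mF$ does not change its value. The hypothesis $\mW_r(\alpha) > 0$ guarantees the graph has at least one edge, so the stated right-hand side is a genuine positive lower bound; note also that the denominator is nonzero because the term $\mW_r(\alpha)\,|\alpha^{-1}(r)|$ is positive, as $\alpha^{-1}(r)$ contains the nonempty diagonal $\{(A,A) \mid A \in \mA\}$.
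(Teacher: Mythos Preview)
Your proof is correct and follows essentially the same approach as the paper: both arguments double-count the set of triples $(A,A',W)$ with $(A,W),(A',W)\in\mE$ via $\alpha$-regularity, compute the edge count $|\mE|=|\mA|\,\mW_r(\alpha)$ from the diagonal pairs, and then apply Cauchy--Schwarz over~$\mF$. Your exposition makes explicit the observation that every $A\in\mA$ has degree $\mW_r(\alpha)$, which the paper leaves implicit in its double-count~\eqref{ntt2}, but otherwise the two proofs are the same.
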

\begin{proof}
Define the set $\mathbf{S}=\{(A,A',W) \in \mA^2 \times \mW \mid (A,W) \in \mE, \, (A',W) \in \mE\}$.
Since all the vertices in $\mW \setminus \mF$ are isolated, we have 
\begin{equation}\label{ntt1}
    |\mF| \cdot |\mathbf{S}| = |\mF| \, \sum_{W \in \mF} |\{A \in \mA \mid (A,W) \in \mE\}|^2 \ge 
\left( \sum_{W \in \mF} |\{A \in \mA \mid (A,W) \in \mE\}| \right)^2,
\end{equation}
where the last bound follows immediately from the Cauchy-Schwarz Inequality.
We have
\begin{equation}\label{ntt2}
   \sum_{W \in \mF} |\{A \in \mA \mid (A,W) \in \mE\}| = \sum_{A \in \mA} |\{W \in \mW \mid (A,W) \in \mE\}|
   = \mW_r(\alpha) \, |\mA|.
\end{equation}
Combining \eqref{ntt1} with \eqref{ntt2} one obtains
\begin{equation} \label{ntt3}
    |\mF| \cdot |\mathbf{S}| \ge \mW_r(\alpha)^2 \, |\mA|^2.
\end{equation}
Moreover, by the definition of association,
\begin{align} \label{ntt4}
    |\mathbf{S}| &=  \sum_{\ell=0}^r \; \sum_{\substack{(A,A') \in \mV^2, \\ \alpha(A,A')=\ell}} |\{W \in \mW \mid (A,W) \in \mE, \, (A',W) \in \mE|\} \nonumber \\
    &= \sum_{\ell=0}^r \mW_\ell(\alpha) \cdot 
    |\{(A,A') \in \mA^2 \mid \alpha(A,A')=\ell\}| \nonumber \\
    &= \sum_{\ell=0}^r  \mW_\ell(\alpha) \, |\alpha^{-1}(\ell)|.
\end{align}
Since $\mW_r(\alpha) >0$,
we have $|\mathbf{S}| \neq 0$. Therefore to conclude the proof it suffices to combine~\eqref{ntt3} with~\eqref{ntt4}.
\end{proof}

We now apply Lemma~\ref{lem:upperbound} to derive an upper bound for the number of MRD codes. We do this in two different ways, first by looking at MRD codes as common complements, as explained in Remark~\ref{rem:ccandball} (i). We need the following two results. We omit the proofs but they can be found in~\cite{gruica2020common}. 

\begin{lemma} \label{lem:nu}
The number of $k$-spaces
$W \le \mat$
intersecting $(mn-k)$-spaces $A,B \le \mat$ only depends on $\ell = \dim(A \cap B)$. We denote this number by
$\nu_q(mn,k,\ell)$. Furthermore, for $mn-2k \le \ell \le mn-k$ we have
\begin{align*}
    \nu_q(mn,k,\ell) =
    \qbin{mn}{k}{q} - 2q^{k(mn-k)} + q^{(2k-mn+\ell)(mn-k)}\prod_{i=\ell}^{mn-k-1}(q^{mn-k}-q^i).
\end{align*}
\end{lemma}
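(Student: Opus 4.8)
The plan is to read the quantity in question as the number of $k$-spaces $W\le\mat$ that meet \emph{both} $A$ and $B$ nontrivially, and to convert this into a count of common complements. Write $N=mn$ for the dimension of $\mat\cong\F_q^{N}$. First I would check that the count depends only on $\ell=\dim(A\cap B)$. The group $\mathrm{GL}_{N}(\F_q)$ acts on the $k$-spaces of $\F_q^{N}$ and acts transitively on ordered pairs $(A,B)$ of $(N-k)$-spaces with a prescribed intersection dimension: choosing a basis of $A\cap B$, extending it to a basis of $A$ and to a basis of $B$, and then extending the union to a basis of the whole space produces a normal form determined by $\ell$ alone. Since the action preserves the property ``$W$ meets both $A$ and $B$ nontrivially'', the number $\nu_q(N,k,\ell)$ is well defined, which is the first assertion of the lemma.

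Next I would apply inclusion--exclusion over the two events $W\cap A=\{0\}$ and $W\cap B=\{0\}$. Because $\dim W+\dim A=k+(N-k)=N$, a $k$-space meets $A$ trivially precisely when it is a complement of $A$, and the number of complements of a fixed $(N-k)$-space is the standard count $q^{k(N-k)}$; likewise for $B$. Hence
\begin{equation*}
\nu_q(N,k,\ell)=\qbin{N}{k}{q}-2\,q^{k(N-k)}+\gamma ,
\end{equation*}
where $\gamma$ is the number of common complements, i.e.\ the $k$-spaces $W$ with $W\cap A=W\cap B=\{0\}$. Comparing with the claimed expression, it remains to prove
\begin{equation*}
\gamma=q^{(2k-N+\ell)(N-k)}\prod_{i=\ell}^{N-k-1}\bigl(q^{N-k}-q^{i}\bigr).
\end{equation*}

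To evaluate $\gamma$ I would quotient out the common part. Let $\pi\colon\F_q^{N}\to\F_q^{N}/(A\cap B)$ and put $\bar A=\pi(A)$, $\bar B=\pi(B)$; a short modular-law computation gives $\bar A\cap\bar B=\{0\}$ with $\dim\bar A=\dim\bar B=(N-k)-\ell$. Any common complement $W$ satisfies $W\cap(A\cap B)\subseteq W\cap A=\{0\}$, so $\pi|_{W}$ is injective and $\pi(W)$ is a $k$-space avoiding both $\bar A$ and $\bar B$; conversely a $k$-space $\bar W$ avoiding $\bar A,\bar B$ lifts to exactly those $W$ that are complements of $A\cap B$ inside $\pi^{-1}(\bar W)$, each of which automatically avoids $A$ and $B$. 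As the number of complements of an $\ell$-space in the $(k+\ell)$-space $\pi^{-1}(\bar W)$ is $q^{k\ell}$, this reduces the problem to the case $\ell=0$ at the cost of a factor $q^{k\ell}$.

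In the reduced situation $\bar A\cap\bar B=\{0\}$, fix a complement $\bar C$ of $\bar A\oplus\bar B$, so $\dim\bar C=2k-N+\ell$ and $\dim\bar A=N-k-\ell=:\bar a$. A common complement now has the codimension of $\bar A$, hence is a complement of $\bar A$, so it is the graph of a unique linear map $\phi\colon\bar B\oplus\bar C\to\bar A$; expanding in the decomposition $\bar A\oplus\bar B\oplus\bar C$ shows that avoiding $\bar B$ is equivalent to the block $\phi|_{\bar B}\colon\bar B\to\bar A$ being an isomorphism, while $\phi|_{\bar C}$ is free. This gives $|\mathrm{GL}_{\bar a}(\F_q)|\cdot q^{\bar a(2k-N+\ell)}$ maps. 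Multiplying by $q^{k\ell}$ and matching the two forms of the answer is then elementary exponent bookkeeping: pulling a factor $q^{\ell}$ out of each term via $q^{N-k}-q^{\ell+j}=q^{\ell}(q^{N-k-\ell}-q^{j})$ rewrites the target product as $q^{\ell\bar a}\,|\mathrm{GL}_{\bar a}(\F_q)|$, and the residual powers of $q$ agree because $k-(2k-N+\ell)=(N-k)-\ell$. I expect the main obstacle to be exactly this common-complement count $\gamma$, and in particular handling the nontrivial overlap $A\cap B$ cleanly; the quotient-by-$(A\cap B)$ step is the device that isolates this difficulty, after which the graph parametrization makes the remaining case transparent.
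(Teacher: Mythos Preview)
Your argument is correct. The transitivity of $\mathrm{GL}_N(\F_q)$ on pairs of $(N-k)$-spaces with fixed intersection dimension gives the first claim, the inclusion--exclusion step is sound, and the quotient by $A\cap B$ together with the graph parametrization in the direct-sum situation yields exactly the required count of common complements; the exponent bookkeeping checks out.

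The paper itself omits the proof and points to~\cite{gruica2020common}, noting only that the formula was obtained there via the Crapo--Rota theory of \emph{critical problems} in combinatorial geometry. That approach typically encodes the avoidance conditions through the M\"obius function of the lattice generated by the subspaces and reads off the answer from a characteristic-polynomial-type identity. Your route is different and more elementary: a two-term inclusion--exclusion reduces everything to counting common complements, and the quotient-then-graph trick handles that count directly without any lattice machinery. The critical-problem method has the advantage of scaling to larger collections of subspaces uniformly, whereas your argument is tailored to two subspaces but is entirely self-contained and transparent for the case at hand.
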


The previous lemma was derived passing through classical  methods from the theory of \textit{critical problems} in combinatorial geometry, proposed
by Crapo and Rota in~\cite{crapo1970foundations}.

\begin{lemma} \label{lem:theta}
For non-negative integers
$u$ and $i$ with $u \le n$ and $2u-n \le i \le u$, we denote by $\theta_q(n,u,i)$ the number of pairs $(U,U')$
of $u$-dimensional spaces
 $U,U' \le \F_q^n$ with the property that 
$\dim(U \cap U')=i$. We have 
$$\theta_q(n,u,i) =  \sum_{j=i}^u (-1)^{j-i} q^{\binom{j-i}{2}} \, \qbin{n}{i}{q}
\, \qbin{n-i}{j-i}{q} \, \qbin{n-j}{u-j}{q}^2.$$
\end{lemma}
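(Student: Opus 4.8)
The plan is to set up a triangular linear relation between the numbers $\theta_q(n,u,j)$ and a family of explicitly countable quantities, and then to invert it using the $q$-analogue of binomial inversion. First I would fix an $i$-dimensional subspace $W \le \F_q^n$ and count the ordered pairs $(U,U')$ of $u$-dimensional subspaces with $W \le U \cap U'$. Since the $u$-subspaces containing $W$ are counted by $\qbin{n-i}{u-i}{q}$ (Part~\ref{qb1} of Lemma~\ref{lem:qb}), there are exactly $\qbin{n-i}{u-i}{q}^2$ such pairs, independently of $W$. Summing over the $\qbin{n}{i}{q}$ choices of $W$, and counting the same pairs according to the actual dimension $j=\dim(U\cap U')\ge i$ of their intersection — each such pair being counted once for every $i$-subspace $W \le U \cap U'$, that is, $\qbin{j}{i}{q}$ times — yields the identity
\[
\qbin{n}{i}{q}\qbin{n-i}{u-i}{q}^2 = \sum_{j=i}^{u} \qbin{j}{i}{q}\,\theta_q(n,u,j).
\]

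Next I would invert this triangular system. The required tool is the inversion formula stating that $F(i)=\sum_{j=i}^{u}\qbin{j}{i}{q}G(j)$ for all $i$ implies $G(i)=\sum_{j=i}^{u}(-1)^{j-i}q^{\binom{j-i}{2}}\qbin{j}{i}{q}F(j)$. This follows from the orthogonality relation
\[
\sum_{s=i}^{j}(-1)^{s-i}q^{\binom{s-i}{2}}\qbin{j}{s}{q}\qbin{s}{i}{q} = \delta_{ij},
\]
which in turn reduces, after applying $\qbin{j}{s}{q}\qbin{s}{i}{q}=\qbin{j}{i}{q}\qbin{j-i}{s-i}{q}$ (Part~\ref{qb2} together with the symmetry in the first part of Lemma~\ref{lem:qb}) and the substitution $\lambda=s-i$, to the $q$-Binomial Theorem $\sum_{\lambda=0}^{m}(-1)^{\lambda}q^{\binom{\lambda}{2}}\qbin{m}{\lambda}{q}=\delta_{m,0}$ already recorded in Remark~\ref{rem:equiv}.

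Applying this inversion with $F(j)=\qbin{n}{j}{q}\qbin{n-j}{u-j}{q}^2$ and $G=\theta_q(n,u,\cdot)$ gives
\[
\theta_q(n,u,i)=\sum_{j=i}^{u}(-1)^{j-i}q^{\binom{j-i}{2}}\qbin{j}{i}{q}\qbin{n}{j}{q}\qbin{n-j}{u-j}{q}^2.
\]
Finally I would rewrite $\qbin{j}{i}{q}\qbin{n}{j}{q}=\qbin{n}{i}{q}\qbin{n-i}{j-i}{q}$, again via Part~\ref{qb2} and the symmetry of the first part of Lemma~\ref{lem:qb}, which turns the right-hand side into exactly the claimed formula. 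The computation is otherwise routine; the only genuine point requiring care is the double count in the first step, namely verifying that a pair $(U,U')$ with $\dim(U\cap U')=j$ contributes with multiplicity $\qbin{j}{i}{q}$, so that the triangular relation is set up with the correct kernel. Once that is in place, everything reduces to the inversion identity and the $q$-binomial manipulations from Lemma~\ref{lem:qb}.
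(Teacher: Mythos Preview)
Your argument is correct. The double count in the first paragraph is set up properly: summing over all $i$-dimensional $W$ the number $\qbin{n-i}{u-i}{q}^2$ of ordered pairs $(U,U')$ above $W$, and then re-counting by first fixing the pair and then the $\qbin{j}{i}{q}$ choices of $W$ inside $U\cap U'$, does give the triangular relation you wrote. The inversion step is also right; the orthogonality identity you state follows exactly as you indicate from Part~\ref{qb2} of Lemma~\ref{lem:qb} together with the $q$-Binomial Theorem quoted in Remark~\ref{rem:equiv}. The final rewriting $\qbin{j}{i}{q}\qbin{n}{j}{q}=\qbin{n}{i}{q}\qbin{n-i}{j-i}{q}$ is again Part~\ref{qb2} of Lemma~\ref{lem:qb} combined with the symmetry in its first part, so the claimed formula drops out.

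As for the comparison: the paper does not actually supply a proof of this lemma, deferring instead to~\cite{gruica2020common}. Your approach is entirely in the spirit of the M\"obius-inversion arguments the paper uses elsewhere (e.g.\ in the proof of Theorem~\ref{macwill} and Theorem~\ref{thm:macwillmrd}), and it is the natural combinatorial route to this identity.
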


We are now ready to apply the bound of Lemma~\ref{lem:upperbound} to the setting of Remark~\ref{rem:ccandball} (i) and get a first bound on the number of MRD codes.

\begin{theorem} \label{thm:boundcc}
Let $2 \le n \le m$ and $1 \le d \le n$ be integers and fix $k=m(n-d+1)$. Let $\mA$ be defined as in Remark~\ref{rem:ccandball} (i). We have the following bound on the number of MRD codes.
\begin{align*}
    |\{\mC \le \mat \mid \dim(\mC)=k, \; \drk(\mC) = d\}| \le \qbin{mn}{k}{q}- \frac{ \nu_q(mn,k,m(d-1))^2 \, \qbin{n}{d-1}{q}^2 }{\displaystyle\sum_{i=0}^{d-1} \nu_q(mn,k,mi) \,  \theta_q(n,d-1,i)}.
\end{align*}
In particular, we have
\begin{align*}
    \delta_q(n \times m, d) &\le 1 - \frac{ \nu_q(mn,k,m(d-1))^2 \, \qbin{n}{d-1}{q}^2 }{\qbin{mn}{k}{q}\,\displaystyle\sum_{i=0}^{d-1} \nu_q(mn,k,mi) \,  \theta_q(n,d-1,i)}.
\end{align*}
\end{theorem}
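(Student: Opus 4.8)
The plan is to realize the MRD codes as the isolated vertices of a suitably constructed $\alpha$-regular bipartite graph and then invoke Lemma~\ref{lem:upperbound}. Following Remark~\ref{rem:ccandball}~(i), I take $\mW$ to be the collection of all $k$-dimensional subspaces of $\mat$ and let $\mA=\{\mat(U)\mid U\in\mU\}$ be as in the statement, where $\mU$ is the set of $(d-1)$-dimensional subspaces of $\F_q^n$; each $\mat(U)$ has dimension $m(d-1)=mn-k$. I declare $(A,W)\in\mE$ precisely when $A\cap W\neq\{0\}$. With this convention a vertex $W\in\mW$ is isolated if and only if $W\cap A=\{0\}$ for every $A\in\mA$, i.e. $W$ is a common complement of the spaces in $\mA$; by Remark~\ref{rem:ccandball}~(i) these are exactly the MRD codes of dimension $k$ and minimum distance $d$. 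Hence the MRD codes are the isolated vertices, and writing $\mF$ for the set of non-isolated ones, the number of MRD codes equals $\qbin{mn}{k}{q}-|\mF|$.

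Next I would equip $\mA$ with the association $\alpha(\mat(U),\mat(U'))=\dim(U\cap U')$, which has magnitude $r=d-1$ since $\dim(U\cap U)=d-1$ for every $U\in\mU$. The crucial regularity check rests on the identity $\mat(U)\cap\mat(U')=\mat(U\cap U')$, giving $\dim(\mat(U)\cap\mat(U'))=m\dim(U\cap U')$. Therefore the number of $k$-spaces $W$ meeting both $\mat(U)$ and $\mat(U')$ nontrivially depends only on $\dim(\mat(U)\cap\mat(U'))$, hence only on $\ell=\dim(U\cap U')$; by Lemma~\ref{lem:nu} this number is $\nu_q(mn,k,m\ell)$. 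This verifies $\alpha$-regularity and identifies $\mW_\ell(\alpha)=\nu_q(mn,k,m\ell)$, so in particular $\mW_r(\alpha)=\nu_q(mn,k,m(d-1))$.

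It then remains to supply the two combinatorial inputs of Lemma~\ref{lem:upperbound}. Since $U\mapsto\mat(U)$ is a bijection from $\mU$ onto $\mA$, the vertex count is $|\mA|=|\mU|=\qbin{n}{d-1}{q}$, and the fibre sizes of the association are $|\alpha^{-1}(\ell)|=\theta_q(n,d-1,\ell)$, the number of ordered pairs of $(d-1)$-dimensional subspaces of $\F_q^n$ whose intersection has dimension $\ell$, evaluated by Lemma~\ref{lem:theta}. I would also record that $\mW_r(\alpha)=\qbin{mn}{k}{q}-q^{k(mn-k)}>0$ for $d\ge 2$, so the hypothesis of Lemma~\ref{lem:upperbound} is satisfied. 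Substituting these values into the lemma bounds $|\mF|$ from below, and subtracting this lower bound from $\qbin{mn}{k}{q}$ yields the first displayed inequality; dividing through by $\qbin{mn}{k}{q}$ produces the stated bound on $\delta_q(n\times m,d)$.

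The main obstacle is the regularity verification, i.e. establishing that $\mW_\ell(\alpha)$ genuinely depends only on $\ell$ and not on the individual pair $(U,U')$. This is where two structural facts must be combined carefully: first that intersections of the anticode spaces are again anticode spaces, $\mat(U)\cap\mat(U')=\mat(U\cap U')$, which reduces the relevant invariant to $\dim(U\cap U')$; and second that Lemma~\ref{lem:nu} makes the count of $k$-spaces meeting two fixed $(mn-k)$-spaces a function of their intersection dimension alone. Once these are in place, matching the abstract quantities $\mW_\ell(\alpha)$, $|\mA|$, and $|\alpha^{-1}(\ell)|$ to $\nu_q$, $\qbin{n}{d-1}{q}$, and $\theta_q$ is routine bookkeeping.
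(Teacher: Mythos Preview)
Your proof is correct and follows essentially the same approach as the paper: construct the bipartite graph on $\mA\times\mW$ with the nontrivial-intersection edge relation, verify $\alpha$-regularity via $\mat(U)\cap\mat(U')=\mat(U\cap U')$ together with Lemma~\ref{lem:nu}, and plug $|\mA|=\qbin{n}{d-1}{q}$ and the $\theta_q$ fibre counts into Lemma~\ref{lem:upperbound}. The only cosmetic difference is that you take the association to be $\alpha(\mat(U),\mat(U'))=\dim(U\cap U')$ of magnitude $d-1$, whereas the paper takes $\alpha(A,A')=\dim(A\cap A')$ of magnitude $m(d-1)$; since $\dim(A\cap A')=m\dim(U\cap U')$ these are equivalent parametrizations and lead to the same final expression.
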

\begin{proof}
Let $\mA$ and $\mU$ be the sets defined as in Remark~\ref{rem:ccandball} (i).
We apply Lemma~\ref{lem:upperbound} to the bipartite graph $\mB=(\mA,\mW,\mE)$, where $\mW$ is the collection of $k$-subspaces of $\mat$ and for $A \in \mA$ and $\mC \in \mW$ we let $(A,\mC) \in \mE$ if $A$ and $\mC$ intersect non-trivially. We define
an association~$\alpha$ of magnitude $mn-k=m(d-1)$ on $\mA$ by setting $\alpha(A,A')=\dim(A \cap A')$ for all $A,\,A' \in \mA$. By Lemma~\ref{lem:nu}, the graph $\mB$ is $\alpha$-regular with $\mW_{\ell}(\alpha)=\nu_q(mn,k,\ell)$
for all $\ell \in \{0,...,m(d-1)\}$. Note that we have
$|\alpha^{-1}(\ell)| =  |\{(A,A') \in \mA' \mid \dim(A \cap A')=\ell\}|$ for all $\ell \in \{0,...,m(d-1)\}$.
Furthermore, for $U,U' \in \mU$ we have $$\dim(\mat(U)\cap\mat(U'))=\dim(\mat(U\cap U')) = mi$$ for $i= \dim(U \cap U') \in \{0,1,...,d-1\}$. Therefore, by Lemma~\ref{lem:theta}, it holds that
\begin{align*}
    \sum_{\ell=0}^{mn-k}  \nu_q(mn,k,\ell) \; |\{(A,A') \in \mA^2 \mid \dim(A\cap A')=\ell\}| 
    = \sum_{i=0}^{d-1} \nu_q(mn,k,mi) \, \theta_q(n,d-1,i),
\end{align*} 
which means we have all the needed quantities to apply Lemma~\ref{lem:upperbound}. We get
\begin{align*}
    |\{\mC \le \mat \mid \dim(\mC)=k, \; \drk(\mC) \le d-1\}| \ge \frac{ \nu_q(mn,k,m(d-1))^2 \, \qbin{n}{d-1}{q}^2 }{\displaystyle\sum_{i=0}^{d-1} \nu_q(mn,k,mi) \,  \theta_q(n,d-1,i)},
\end{align*}
which in turn yields the desired result.
\end{proof}

The other bound we obtain using Lemma~\ref{lem:upperbound} is by looking at MRD codes with minimum distance $d$ as the spaces which do not contain any matrix of rank smaller than or equal to $d-1$, as explained in Remark~\ref{rem:ccandball} (ii).

\begin{theorem} \label{thm:boundball}
Let $2 \le n \le m$ and $1 \le d \le n$ be integers and fix $k=m(n-d+1)$. We have 
\begin{align*}
    |\{\mC \le \mat \mid \dim(\mC)=k, \; \drk(\mC) = d\}| &\le \qbin{mn}{k}{q} - \displaystyle\frac{\mathfrak{B}\,\qbin{mn-1}{k-1}{q}^2}{\qbin{mn-1}{k-1}{q}+\displaystyle\left(\mathfrak{B}-1\right)\qbin{mn-2}{k-2}{q}},
\end{align*}
where $ \mathfrak{B}=(\bbq{n \times m,d-1}-1)/(q-1).$
In particular,
\begin{align*}
    \delta_q(n \times m, d) &\le 1 - \displaystyle\frac{\mathfrak{B}\,\qbin{mn-1}{k-1}{q}^2}{\qbin{mn}{k}{q}\left(\qbin{mn-1}{k-1}{q}+\displaystyle\left(\mathfrak{B}-1\right)\qbin{mn-2}{k-2}{q}\right)}.
\end{align*}
\end{theorem}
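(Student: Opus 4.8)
The plan is to realize the MRD codes of minimum distance $d$ as the isolated vertices of a suitable $\alpha$-regular bipartite graph and then invoke Lemma~\ref{lem:upperbound}, following the viewpoint of Remark~\ref{rem:ccandball}~(ii).

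First I would let $\mA$ be the set of all $1$-dimensional subspaces (points) $P \le \mat$ all of whose nonzero elements have rank at most $d-1$; equivalently, $\mA$ is the set of points contained in the rank ball of radius $d-1$. Since this ball contains $\bbq{n \times m,d-1}-1$ nonzero matrices and each point accounts for exactly $q-1$ of them, we get $|\mA| = (\bbq{n \times m,d-1}-1)/(q-1) = \mathfrak{B}$. Taking $\mW$ to be the collection of all $k$-dimensional subspaces of $\mat$ and declaring $(P,\mC) \in \mE$ whenever $P \le \mC$, a vertex $\mC \in \mW$ is isolated exactly when it contains no nonzero matrix of rank at most $d-1$, i.e., exactly when it is an MRD code of minimum distance $d$. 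Hence the number of such codes equals $\qbin{mn}{k}{q} - |\mF|$, where $\mF \subseteq \mW$ is the set of non-isolated vertices.

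Next I would put the association $\alpha(P,P') = \dim(P \cap P')$ on $\mA$; it has magnitude $r=1$, since $\alpha(P,P)=1$ while $\alpha(P,P')=0$ for $P \ne P'$. The graph is $\alpha$-regular because the number of $k$-spaces through two points depends only on the dimension of their span: by Part~\ref{qb1} of Lemma~\ref{lem:qb}, a single point lies in $\qbin{mn-1}{k-1}{q}$ of the $k$-spaces and a fixed $2$-dimensional space lies in $\qbin{mn-2}{k-2}{q}$ of them, so $\mW_1(\alpha)=\qbin{mn-1}{k-1}{q}$ and $\mW_0(\alpha)=\qbin{mn-2}{k-2}{q}$. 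Counting fibres directly gives $|\alpha^{-1}(1)| = \mathfrak{B}$ (diagonal pairs) and $|\alpha^{-1}(0)| = \mathfrak{B}(\mathfrak{B}-1)$ (ordered pairs of distinct points).

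Finally, since $\mW_1(\alpha)>0$, Lemma~\ref{lem:upperbound} yields
$$|\mF| \ge \frac{\mW_1(\alpha)^2\,|\mA|^2}{\mW_0(\alpha)\,|\alpha^{-1}(0)| + \mW_1(\alpha)\,|\alpha^{-1}(1)|}.$$
Substituting the four quantities above, factoring a single $\mathfrak{B}$ out of the denominator, and cancelling it against one factor in the numerator produces exactly
$$|\mF| \ge \frac{\mathfrak{B}\,\qbin{mn-1}{k-1}{q}^2}{\qbin{mn-1}{k-1}{q}+(\mathfrak{B}-1)\qbin{mn-2}{k-2}{q}}.$$
The cardinality bound of the theorem then follows from $\qbin{mn}{k}{q}-|\mF|$, and the density bound by dividing through by $\qbin{mn}{k}{q}$. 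The only steps requiring care are the identity $|\mA|=\mathfrak{B}$ and the verification of $\alpha$-regularity; once these are established, the remainder is a routine simplification of the Cauchy--Schwarz estimate, and I expect no serious obstacle.
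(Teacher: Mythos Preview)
Your proposal is correct and follows essentially the same approach as the paper: both take $\mA$ to be the projective points in the rank ball of radius $d-1$, $\mW$ the $k$-subspaces, edges given by containment, and then apply Lemma~\ref{lem:upperbound} with an association distinguishing diagonal from off-diagonal pairs. The only cosmetic difference is that you use $\alpha(P,P')=\dim(P\cap P')$ (magnitude $1$, values $\{0,1\}$) whereas the paper writes $\alpha(A,A')=|\{A,A'\}|$; your choice actually fits Definition~\ref{def:assoc} more cleanly, and the resulting computation is identical.
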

\begin{proof}
The proof is similar to the proof of Theorem~\ref{thm:boundcc}. This time we consider the bipartite graph $\mB=(\mA,\mW,\mE)$, where $\mA$ is the set of matrices in the ball in $\mat$ of radius $d-1$ (up to multiples), $\mW$ is the collection of $k$-subspaces in $\mat$ and $(A,\mC) \in \mE$ if $A \in \mC$. We define
an association~$\alpha$ on $\mA$ by setting $\alpha(A,A')=|\{A,A'\}| \in \{1,2\}$ for all $A,\,A' \in \mA$. It is not hard to see that we have
\begin{align*}
    &|\mA| = \mathfrak{B}, \quad |\mW_1(\alpha)| = \qbin{mn-1}{k-1}{q}, \quad  |\mW_2(\alpha)| = \qbin{mn-2}{k-2}{q}, \\[0.2cm]
    &|\alpha^{-1}(1)| = \mathfrak{B}, \quad |\alpha^{-1}(2)| = \mathfrak{B}(\mathfrak{B}-1).
\end{align*}
Combining Lemma~\ref{lem:upperbound} with the above identities yields the desired bound.
\end{proof}

We have now established two different bounds for the number of MRD codes with certain parameters and for $\delta_q(n \times m,d)$ (Theorem~\ref{thm:boundcc} and Theorem~\ref{thm:boundball}). It turns out that both bounds on $\delta_q(n \times m,d)$ give the same asymptotic estimate stated in the next theorem. Even though it is not obvious how to prove this for the bound in Theorem~\ref{thm:boundcc}, using the concept of an \textit{asymptotic partial spread} (see~\cite{gruica2020common}), the estimate can be derived. Furthermore, experimental results indicate that the bound derived from Theorem~\ref{thm:boundcc} is in general better than the one of Theorem~\ref{thm:boundball}.

\begin{theorem} \label{thm:sparseness}
We have
$$\delta_q(n \times m, d) \in O\left(q^{-(d-1)(n-d+1)+1}\right) \quad \mbox{as $q \to +\infty$}.$$
In particular, 
$$\lim_{q \to +\infty} \delta_q(n \times m, d) =
\begin{cases} 
1 &\mbox{if $d=1$,} \\
\sum_{i=0}^m \frac{(-1)^i}{i!} & \mbox{if $n=d=2$,} \\
0 & \mbox{otherwise}. 
\end{cases}$$
\end{theorem}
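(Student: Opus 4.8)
The plan is to obtain the $O$-estimate from our two density bounds and then read off the three limits case by case. Write $D=(d-1)(n-d+1)$ and $L=m(d-1)=mn-k$, so that the target reads $\delta_q(n\times m,d)\in O(q^{-D+1})$. Of Theorem~\ref{thm:boundcc} and Theorem~\ref{thm:boundball} I would use the latter: its right-hand side is an explicit rational expression in Gaussian coefficients and in $\mathfrak{B}=(\bbq{n\times m,d-1}-1)/(q-1)$, so it is far more amenable to an asymptotic expansion. (Theorem~\ref{thm:boundcc} gives the same order, but extracting it requires the asymptotic--partial--spread analysis of~\cite{gruica2020common}, which I would only cite.)

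First I would record the elementary estimates $\qbin{a}{b}{q}=q^{b(a-b)}(1+O(q^{-1}))$ and, by isolating the top summand $i=d-1$ of the ball in~\eqref{def:ball}, $\mathfrak{B}=q^{D+L-1}(1+O(q^{-1}))$. Writing $A=\qbin{mn-1}{k-1}{q}$, $B=\qbin{mn-2}{k-2}{q}$, $C=\qbin{mn}{k}{q}$ and using the exact identities $A/C=(q^{k}-1)/(q^{mn}-1)$ and $B/A=(q^{k-1}-1)/(q^{mn-1}-1)$, the quantity subtracted in Theorem~\ref{thm:boundball} becomes
\[
P_q=\frac{\mathfrak{B}\,(A/C)}{1+(\mathfrak{B}-1)(B/A)}=\frac{Y}{1+X},\qquad X=(\mathfrak{B}-1)(B/A),\quad Y=\mathfrak{B}\,(A/C),
\]
with $X,Y=q^{D-1}(1+O(q^{-1}))$. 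Here lies the main obstacle: $X$ and $Y$ agree to leading order, so $1-P_q=(1+X-Y)/(1+X)$ is governed by a near-cancellation, and one must push the expansion of $Y-X$ far enough to see that $1+X-Y\to 1$ (this uses $k>D$, which holds since $k-D=(n-d+1)(m-d+1)>0$). Granting this, $1-P_q\sim 1/(1+X)$, which is $\Theta(q^{-(D-1)})$ when $D\ge 2$, so $\delta_q(n\times m,d)\le 1-P_q\in O(q^{-D+1})$, as required.

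The three limits now follow by inspecting $X$. If $d=1$ then $k=mn$, the only $k$-dimensional code is $\mat$ itself (of minimum distance $1$), so $\delta_q(n\times m,d)=1$, consistent with $O(q^{-D+1})=O(q)$. If $(d-1)(n-d+1)\ge 2$, i.e.\ $D\ge 2$, then $X\to+\infty$ and the $O$-estimate already forces $\delta_q(n\times m,d)\to 0$. The borderline case is $n=d=2$, where $D=1$, $X\to 1$, and the bound degenerates to $\delta_q(n\times m,d)\le 1-P_q\to 1/2$; this merely recovers Theorem~\ref{thm:ravby} and cannot pin down the constant.

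For $n=d=2$ I would argue directly, exploiting $k=m$. The first-row projection $\pi_1\colon\mC\to\F_q^{m}$ of any MRD code is injective, since a nonzero matrix with zero first row would have rank $1$; hence $\pi_1$ is bijective and $\mC$ has the systematic form $\mC_M=\{(x,\,xM)\mid x\in\F_q^{m}\}$ for a unique $M\in\F_q^{m\times m}$, with $\mC_M$ free of rank-one matrices exactly when $M$ has no eigenvalue in $\F_q$ (cf.\ Example~\ref{ex:specs}). This gives a bijection between MRD codes and spectrum-free matrices, whence $\delta_q(2\times m,2)=|\{M\in\F_q^{m\times m}\text{ spectrum-free}\}|/\qbin{2m}{m}{q}$. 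An inclusion--exclusion over the at most $m$ field elements that can occur as eigenvalues---the size-$i$ term contributing $(-1)^i\binom{q}{i}q^{-i}\to(-1)^i/i!$---shows that the fraction of spectrum-free matrices tends to $\sum_{i=0}^m(-1)^i/i!$, while $\qbin{2m}{m}{q}\sim q^{m^2}$ absorbs the remaining normalisation. Together with the upper bound $\limsup_q\delta_q(2\times m,2)\le\sum_{i=0}^m(-1)^i/i!$ of Theorem~\ref{thm:antrgl} (sharp by~\cite{antrobus2019maximal}), this yields the stated limit and completes the proof.
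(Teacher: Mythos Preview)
Your approach matches the paper's: the paper does not give a self-contained proof of Theorem~\ref{thm:sparseness} but indicates that the $O$-estimate follows from either Theorem~\ref{thm:boundcc} or Theorem~\ref{thm:boundball}, and takes the $n=d=2$ limit from the sharpness result in~\cite{antrobus2019maximal}. Your expansion of the bound in Theorem~\ref{thm:boundball} is correct; in particular, the cancellation analysis $1+X-Y\to 1$ goes through because both $\mathfrak{B}\bigl(B/A-A/C\bigr)\sim -q^{D-k}$ and $B/A\sim q^{-L}$ tend to $0$, and you rightly note that Theorem~\ref{thm:boundcc} yields the same order only after the partial-spread analysis of~\cite{gruica2020common}.

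One point to tighten: in the $n=d=2$ paragraph, your bijection between MRD codes and spectrum-free $m\times m$ matrices is exactly right (and is how~\cite{antrobus2019maximal} proceeds), but the inclusion--exclusion sketch that follows is garbled. The phrase ``the size-$i$ term contributing $(-1)^i\binom{q}{i}q^{-i}$'' is not what the computation gives: for a fixed $i$-subset $S\subseteq\F_q$, the proportion of matrices having every element of $S$ as an eigenvalue is not $q^{-i}$, and the asymptotic truncation at $i=m$ requires a separate argument (a matrix cannot have more than $m$ distinct eigenvalues, but the inclusion--exclusion terms for $i>m$ are not individually zero). Since you ultimately invoke the sharpness statement from~\cite{antrobus2019maximal} anyway, I would drop the heuristic sentence and simply cite that reference for the limit $\sum_{i=0}^m(-1)^i/i!$, which is also what the paper does.
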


Theorem~\ref{thm:sparseness} computes the asymptotic density of MRD codes as $q \to +\infty$ for all parameter sets, showing that they are (very) sparse whenever $n \ge 3$ and $d \ge 2$. 

Another interesting question that concerns the proportion of MRD codes within the set of codes of the same dimension is the following: how does the density function $\delta_q(n \times m,d)$ of MRD codes behave, for fixed $q$ and $n \ge d$, as $m \to +\infty$?
We start by surveying the current literature on this question. The analogue of Theorem~\ref{thm:ravby} for $m \to +\infty$ reads as follows.

\begin{theorem}
For all $d \ge 2$ we have
\begin{align*}
    \limsup_{m \to +\infty}\, \delta_q(n \times m,d) \le \frac{(q-1)(q-2)+1}{2(q-1)^2} \le \frac{1}{2}.
\end{align*}
\end{theorem}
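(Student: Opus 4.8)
The plan is to take the bound of Theorem~\ref{thm:boundball} and let $m\to+\infty$ with $q,n,d$ fixed. Write $k=m(n-d+1)$, so that $mn-k=m(d-1)$, and keep the abbreviation $\mathfrak{B}=(\bbq{n\times m,d-1}-1)/(q-1)$ from that theorem. First I would record the two elementary ratio identities for $q$-ary coefficients,
$$\frac{\qbin{mn-1}{k-1}{q}}{\qbin{mn}{k}{q}}=\frac{q^k-1}{q^{mn}-1},\qquad \frac{\qbin{mn-2}{k-2}{q}}{\qbin{mn-1}{k-1}{q}}=\frac{q^{k-1}-1}{q^{mn-1}-1},$$
which follow directly from the product formula in Definition~\ref{def:gauss}. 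Using these, the correction term $R_m$ subtracted from $1$ in Theorem~\ref{thm:boundball} can be rewritten as
$$R_m=\left(\mathfrak{B}\,\frac{q^k-1}{q^{mn}-1}\right)\cdot\frac{1}{\,1+(\mathfrak{B}-1)\dfrac{q^{k-1}-1}{q^{mn-1}-1}\,},$$
so that $\delta_q(n\times m,d)\le 1-R_m$ for all large $m$.

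Next I would compute $\lim_{m\to+\infty}R_m$. Since $d$ is fixed, in the sum defining $\bbq{n\times m,d-1}$ (Definition~\ref{def:ballrk}) the term $i=d-1$ dominates, giving $\bbq{n\times m,d-1}=\qbin{n}{d-1}{q}\,q^{m(d-1)}\,(1+O(q^{-m}))$ and hence $\mathfrak{B}=C\,q^{m(d-1)}(1+O(q^{-m}))$ with $C:=\qbin{n}{d-1}{q}/(q-1)$, a constant independent of $m$. Because $mn-k=m(d-1)$, the two factors $\frac{q^k-1}{q^{mn}-1}$ and $\frac{q^{k-1}-1}{q^{mn-1}-1}$ are each $q^{-m(d-1)}(1+o(1))$, so the powers of $q$ cancel against $\mathfrak{B}$: both $\mathfrak{B}\,\frac{q^k-1}{q^{mn}-1}$ and $(\mathfrak{B}-1)\,\frac{q^{k-1}-1}{q^{mn-1}-1}$ tend to $C$. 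Therefore $R_m\to C/(1+C)$ and
$$\limsup_{m\to+\infty}\delta_q(n\times m,d)\le 1-\frac{C}{1+C}=\frac{1}{1+C}=\frac{q-1}{\,q-1+\qbin{n}{d-1}{q}\,}.$$

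To reach the stated bound I would then estimate $\qbin{n}{d-1}{q}$ from below. For $d\ge2$ we have $1\le d-1\le n-1$, and by the symmetry $\qbin{n}{d-1}{q}=\qbin{n}{n-d+1}{q}$ together with the unimodality of the $q$-ary coefficients, $\qbin{n}{d-1}{q}\ge\qbin{n}{1}{q}=(q^n-1)/(q-1)\ge q+1$. Substituting this into the displayed bound gives
$$\limsup_{m\to+\infty}\delta_q(n\times m,d)\le\frac{q-1}{2q},$$
and a one-line check shows $(q-1)^3=q^3-3q^2+3q-1\le q^3-3q^2+3q=q\big[(q-1)(q-2)+1\big]$, i.e. $\frac{q-1}{2q}\le\frac{(q-1)(q-2)+1}{2(q-1)^2}$; the remaining inequality $\frac{(q-1)(q-2)+1}{2(q-1)^2}\le\frac12$ reduces to $q-2\ge0$, which holds since $q\ge2$. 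This establishes both inequalities of the theorem.

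The main obstacle is the second paragraph: making the asymptotic cancellation rigorous. One must control the lower-order terms in $\bbq{n\times m,d-1}$ and in the two $q$-ary coefficient ratios well enough to conclude that $R_m$ genuinely \emph{converges} (not merely that leading exponents match), so that the $\limsup$ may be bounded by the limit $1-C/(1+C)$; everything after that is elementary algebra. I would also remark that this route in fact yields the sharper, parameter-dependent bound $\frac{q-1}{q-1+\qbin{n}{d-1}{q}}$, of which the stated estimate is simply the worst case over the admissible pairs $(n,d)$ with $d\ge2$.
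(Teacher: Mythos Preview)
Your argument is correct. The asymptotic step you flag as the ``main obstacle'' is in fact routine: writing $\mathfrak{B}\,q^{-m(d-1)}\to C$ and $q^{m(d-1)}\frac{q^k-1}{q^{mn}-1}\to 1$ (likewise for the second ratio) as products of sequences each converging to a nonzero limit, one gets $R_m\to C/(1+C)$ by continuity of division. The unimodality step and the algebra at the end are also fine.

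The paper, however, does not prove this theorem at all: it is quoted from~\cite{byrne2020partition}, where the bound is obtained via the \emph{partition-balanced} machinery, a completely different technique from the bipartite-graph/Cauchy--Schwarz method of Lemma~\ref{lem:upperbound} and Theorem~\ref{thm:boundball} that you invoke. So your route is genuinely different from the original source. What your approach buys is twofold: it stays entirely within the toolkit developed in this chapter, and it actually yields the stronger parameter-dependent estimate
\[
\limsup_{m\to+\infty}\delta_q(n\times m,d)\;\le\;\frac{q-1}{\,q-1+\qbin{n}{d-1}{q}\,}\;\le\;\frac{q-1}{2q},
\]
of which the stated bound $\frac{(q-1)(q-2)+1}{2(q-1)^2}$ is a strictly weaker consequence. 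In other words, you have re-derived (and slightly sharpened) a result that the paper merely cites.
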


A bound which is even sharper than the one of~\cite{byrne2020partition} is given in~\cite{antrobus2019maximal}. We restated the bound in the language of this paper using the asymptotic estimate of the $q$-ary coefficient 
\begin{align} \label{eq:qbinm}
\qbin{mn}{m(n-d+1)}{q} \sim q^{m^2(d-1)(n-d+1)}\,\prod_{i=1}^{\infty}\displaystyle\left(\frac{q^i}{q^i-1}\right) \quad \mbox{as $m\to +\infty$}.
\end{align}
 Their result combined with~\eqref{eq:qbinm} reads as follows.

\begin{theorem} \label{heid:m}
For all $d \ge 2$ we have
\begin{align*}
    \limsup_{m \to +\infty} \, \delta_{q}(n \times m, d) \le \prod_{i=1}^{\infty} \left(1-\frac{1}{q^i}\right)^{q(d-1)(n-d+1)+1}.
\end{align*}
\end{theorem}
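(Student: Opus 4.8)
The plan is to bound $\delta_q(n \times m, d)$ by separating the numerator, the number of MRD codes of dimension $k = m(n-d+1)$, from the denominator $\qbin{mn}{k}{q}$, estimating each as $m \to +\infty$, and then combining. Writing $P = \prod_{i=1}^\infty(1 - q^{-i})$ for the limiting probability that a large square matrix over $\F_q$ is invertible, the asymptotic estimate~\eqref{eq:qbinm} rewrites as
$$\qbin{mn}{k}{q} \sim q^{m^2(d-1)(n-d+1)} \cdot P^{-1} \qquad \text{as } m \to +\infty.$$
Thus the whole problem reduces to showing that the number of MRD codes is at most $q^{m^2(d-1)(n-d+1)} \cdot P^{q(d-1)(n-d+1)}\,(1 + o(1))$; the extra factor $P$ contributed by the denominator is precisely what raises the exponent from $q(d-1)(n-d+1)$ to $q(d-1)(n-d+1)+1$.

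First I would count MRD codes through a systematic parametrization generalizing Example~\ref{ex:specs}. A code of dimension $k$ complementary to a fixed reference space is described by a $k \times (mn-k) = m(n-d+1) \times m(d-1)$ matrix over $\F_q$, which organizes into $(d-1)(n-d+1)$ blocks of size $m \times m$; this accounts for exactly $q^{m^2(d-1)(n-d+1)}$ configurations, matching the leading term above. Being MRD then \emph{forces} a spectrum-free condition on these blocks (a necessary condition suffices, since we only want an upper bound on the count). Recalling that $M \in \F_q^{m\times m}$ is spectrum-free precisely when $M - cI$ is invertible for every $c \in \F_q$, each block contributes $q$ invertibility constraints, hence $q(d-1)(n-d+1)$ constraints in total.

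Second I would show that, as $m \to +\infty$, the density of spectrum-free configurations tends to $P^{q(d-1)(n-d+1)}$. The single-shift probability that $M - cI \in \mathrm{GL}_m(\F_q)$ is $|\mathrm{GL}_m(\F_q)|/q^{m^2} = \prod_{i=1}^m(1-q^{-i}) \to P$, and the crux of the argument is that the $q$ events $\{M - cI \text{ invertible}\}$, for $c \in \F_q$, decorrelate in the limit, so their joint density factorizes as $P^{q}$ per block; raising to the power $(d-1)(n-d+1)$ over the independent blocks gives $P^{q(d-1)(n-d+1)}$. This is exactly the bound established in~\cite{antrobus2019maximal}, which I would cite rather than re-derive.

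Finally, combining the numerator bound with~\eqref{eq:qbinm} yields
$$\limsup_{m \to +\infty} \delta_q(n \times m, d) \le \frac{q^{m^2(d-1)(n-d+1)} P^{q(d-1)(n-d+1)}}{q^{m^2(d-1)(n-d+1)} P^{-1}} = P^{q(d-1)(n-d+1)+1},$$
which is the claim. The main obstacle is the asymptotic independence asserted in the second step: unlike the single-shift computation, controlling the joint distribution of the $q$ shifted determinants $\det(M - cI)$ as $m$ grows, so that the density genuinely factorizes into a clean product of limiting invertibility probabilities, is the real combinatorial content, and is where I would lean on~\cite{antrobus2019maximal}. The remaining steps are the routine bookkeeping of matching exponents and tracking the single extra factor of $P$ produced by~\eqref{eq:qbinm}.
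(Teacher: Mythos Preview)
Your proposal is correct and aligns with how the paper treats this statement. The paper does not give an independent proof of Theorem~\ref{heid:m}; it explicitly presents it as ``their result combined with~\eqref{eq:qbinm}'', where ``their'' refers to~\cite{antrobus2019maximal}. Your write-up does exactly this: you invoke the spectrum-free bound of~\cite{antrobus2019maximal} for the numerator and combine it with the asymptotic~\eqref{eq:qbinm} for the denominator, correctly identifying that the extra factor of $P$ from the denominator is what turns the exponent $q(d-1)(n-d+1)$ into $q(d-1)(n-d+1)+1$. If anything, you go a step further than the paper by sketching the internal mechanism of the cited bound (the block parametrization of complements of a fixed optimal anticode and the spectrum-free necessary condition), which the paper itself does not reproduce.
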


Since the approach based on looking at MRD codes as common complements of a collection of linear subspaces performs better than the methods developed in~\cite{antrobus2019maximal,byrne2020partition} for increasing field size and also a bit better than the approach of Remark (ii)~\ref{rem:ccandball}, we compared the bound on the density in Theorem~\ref{thm:boundcc} to known results for increasing column length. Computing the asymptotic estimate of the bound in Theorem~\ref{thm:boundcc} for $m \to +\infty$ yields the following result.

\begin{theorem} \label{thm:mrdboundccm}
Let $d \ge 2$. We have 
\begin{equation*}
\limsup_{m \to +\infty} \, \delta_q(n \times m,d) \le  \frac{1}{\displaystyle\qbin{n}{d-1}{q}\left(\prod_{i=1}^{\infty}\displaystyle\left(\frac{q^i}{q^i-1}\right)-1\right)+1}.
\end{equation*}
\end{theorem}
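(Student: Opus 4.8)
The plan is to start from the upper bound on $\delta_q(n \times m, d)$ in Theorem~\ref{thm:boundcc} and to compute the limit of the subtracted fraction as $m \to +\infty$, with $q$, $n$, $d$ fixed and $k=m(n-d+1)$. Throughout I would write $e=m^2(d-1)(n-d+1)$ for the exponent governing the growth of $\qbin{mn}{k}{q}$, and set $C=\prod_{i=1}^{\infty}q^i/(q^i-1)$, so that $\qbin{mn}{k}{q}\sim C\,q^{e}$ by~\eqref{eq:qbinm} and $C-1$ is precisely the quantity appearing in the claimed bound. The whole argument reduces to pinning down the leading asymptotics of the three ingredients $\nu_q(mn,k,m(d-1))$, the sum $\sum_{i=0}^{d-1}\nu_q(mn,k,mi)\,\theta_q(n,d-1,i)$, and $\qbin{mn}{k}{q}$, and then simplifying algebraically.

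First I would analyse $\nu_q(mn,k,mi)$ for $0\le i\le d-1$ via the closed formula of Lemma~\ref{lem:nu}. One only needs the $i$ with $\theta_q(n,d-1,i)\ne 0$, i.e.\ $\max(0,2(d-1)-n)\le i\le d-1$; for these, $mn-2k\le mi\le mn-k$, so the formula of Lemma~\ref{lem:nu} indeed applies. The key, and delicate, point is that each of the three summands there, namely $\qbin{mn}{k}{q}$, the middle term (which equals exactly $2q^{e}$ since $k(mn-k)=e$), and the final product term, is of the same exponential order $q^{e}$; hence the leading powers cancel and one must track the precise constants. After the substitution $t=m(d-1)-j$, the product term equals $q^{e}\prod_{t=1}^{m(d-1-i)}(1-q^{-t})$, which is $\sim q^{e}/C$ for $i\le d-2$ (the tail product converges to $\prod_{t\ge 1}(1-q^{-t})=1/C$) and equals exactly $q^{e}$ for $i=d-1$ (empty product). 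Combining the three pieces gives $\nu_q(mn,k,mi)\sim\tfrac{(C-1)^2}{C}\,q^{e}$ for $i\le d-2$, while $\nu_q(mn,k,m(d-1))\sim (C-1)\,q^{e}$.

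Next, since $\theta_q(n,d-1,i)$ does not depend on $m$, I would treat these as constants and use two combinatorial identities: $\theta_q(n,d-1,d-1)=\qbin{n}{d-1}{q}$ (forcing $U=U'$) and $\sum_{i}\theta_q(n,d-1,i)=\qbin{n}{d-1}{q}^2$ (all ordered pairs of $(d-1)$-spaces). Writing $G=\qbin{n}{d-1}{q}$ and inserting the asymptotics from the previous step, the sum becomes $\sum_{i=0}^{d-1}\nu_q(mn,k,mi)\,\theta_q(n,d-1,i)\sim q^{e}G(C-1)\,\frac{G(C-1)+1}{C}$, using the simplification $(C-1)(G-1)+C=G(C-1)+1$.

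Finally I would substitute $\nu_q(mn,k,m(d-1))^2\,G^2\sim (C-1)^2q^{2e}G^2$, the denominator $\qbin{mn}{k}{q}\sim Cq^{e}$, and the sum above into the fraction of Theorem~\ref{thm:boundcc}. The factor $q^{2e}$ and one factor $C$ cancel, so the subtracted term tends to $\frac{(C-1)G}{G(C-1)+1}$, whence $1-\frac{(C-1)G}{G(C-1)+1}=\frac{1}{G(C-1)+1}$, which is exactly the claimed bound $1/\bigl(\qbin{n}{d-1}{q}(C-1)+1\bigr)$. Since $\delta_q(n\times m,d)$ is bounded above by this (for each $m$) convergent expression, the $\limsup$ as $m\to+\infty$ is at most the limit. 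The main obstacle is the first step: because all three terms of $\nu_q$ share the dominant order $q^{e}$, exponential orders alone do not suffice and one must compute the limiting constants of the finite products exactly; correctly identifying the tail product $\prod_{t\ge1}(1-q^{-t})=1/C$ together with the empty-product boundary case $i=d-1$ is what makes the two regimes $i\le d-2$ and $i=d-1$ combine into the clean final constant.
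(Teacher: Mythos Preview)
Your proposal is correct and follows exactly the route indicated by the paper, which simply states that the result is obtained by computing the asymptotic of the bound in Theorem~\ref{thm:boundcc} as $m\to+\infty$; you have filled in all the details of this computation, including the crucial observation that the three summands in $\nu_q(mn,k,mi)$ are each of order $q^{e}$ with constants $C$, $2$, and $\prod_{t=1}^{m(d-1-i)}(1-q^{-t})$ respectively, and the separate treatment of $i=d-1$ versus $i\le d-2$.
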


The asymptotic upper bound for the number of MRD codes in $\mat$ of minimum distance~$d$ in Theorem~\ref{thm:mrdboundccm} is never 0 for fixed $n,d$ and $q$. 

We want to compare the upper bounds for the asymptotic density of MRD codes as \smash{$m \to +\infty$} in Theorem~\ref{heid:m} and Theorem~\ref{thm:mrdboundccm}. For this, first note that the bound in Theorem~\ref{heid:m} can be rewritten as 
\begin{align*}
    \displaystyle\displaystyle\frac{1}{\displaystyle\prod_{i=1}^{\infty}\displaystyle\left(\frac{q^i}{q^i-1}\right)^{q(d-1)(n-d+1)+1}}.
\end{align*}

\begin{remark}
The infinite product that shows up in both expressions is closely related to the Euler $\phi$ function \smash{$\phi:(-1,1) \to \R$} defined by $\phi(x) = \prod_{i=1}^{\infty}(1-x^i)$
for all $x \in (-1,1)$; see~\cite[Section 14]{apostol2013introduction}. More precisely, we have
\[
\prod_{i=1}^{\infty}\displaystyle\left(\frac{q^i}{q^i-1}\right) = \frac{1}{\phi(1/q)}.
\]
\end{remark}

By Euler's Pentagonal Theorem~\cite[Theorem 14.3]{apostol2013introduction}, which expresses the infinite product $\phi(x)$ as an infinite sum,
$$\phi(x) = 1+\sum _{k=1}^{\infty }(-1)^{k}\left(x^{k(3k+1)/2}+x^{k(3k-1)/2}\right) = 1-x-x^{2}+x^{5}+x^{7}-x^{{12}}-x^{{15}}+\ldots.$$
one can prove that
\begin{equation*}
\lim_{q \to +\infty} \frac{\displaystyle\prod_{i=1}^{\infty}\displaystyle\left(\frac{q^i}{q^i-1}\right)^{q(d-1)(n-d+1)+1}}{\qbin{n}{d-1}{q}\left(\displaystyle\prod_{i=1}^{\infty}\displaystyle\left(\frac{q^i}{q^i-1}\right)-1\right)+1} = 0
\end{equation*}
for $n \ge d \ge 2$ and $n >2$.
Therefore the bound given in Theorem~\ref{thm:mrdboundccm} is sharper than the bound of Theorem~\ref{heid:m} for $q$ large.
For small values of $q$
the bound of Theorem~\ref{heid:m} is sharper than the one in Theorem~\ref{thm:mrdboundccm} in several examples. Therefore, the two bounds are in general not comparable with each other.

\bibliographystyle{amsplain}
\bibliography{ourbib}

\end{document}